\title{Secrecy Capacity of a Class of Broadcast Channels with an Eavesdropper\thanks{This work was supported by NSF
Grants CCF 04-47613, CCF 05-14846, CNS 07-16311 and CCF 07-29127,
and was presented in part at the 42nd Asilomar Conference on
Signals, Systems and Computers, Pacific Grove, CA, October 2008
\cite{Asilomar_08}.}}
\author{Ersen Ekrem \qquad Sennur Ulukus \\
\normalsize Department of Electrical and Computer Engineering\\
\normalsize University of Maryland, College Park, MD 20742 \\
\normalsize {\it ersen@umd.edu} \qquad {\it ulukus@umd.edu} }
\newcommand{\bu}{{\mathbf{u}}}
\newcommand{\bx}{{\mathbf{x}}}
\newcommand{\by}{{\mathbf{y}}}
\newtheorem{Remark}{Remark}
\newtheorem{Theo}{Theorem}
\newtheorem{Prop}{Proposition}
\newtheorem{Lem}{Lemma}
\newtheorem{Cor}{Corollary}
\newenvironment{proof}[1]{\medskip\par\noindent
{\bf Proof:\,}\,#1}{{\mbox{\,$\blacksquare$}\par}}
\begin{document}
\date{}
\maketitle

\setstretch{1.1}

\begin{abstract}
We study the security of communication between a single
transmitter and multiple receivers in a broadcast channel in the
presence of an eavesdropper. Characterizing the secrecy capacity
region of this channel in its most general form is difficult,
because the version of this problem without any secrecy
constraints, is the broadcast channel with an arbitrary number of
receivers, whose capacity region is open. Consequently, to have
progress in understanding secure broadcasting, we resort to
studying several special classes of channels, with increasing
generality. As the first model, we consider the degraded
multi-receiver wiretap channel where the legitimate receivers
exhibit a degradedness order while the eavesdropper is more noisy
with respect to all legitimate receivers. We establish the secrecy
capacity region of this channel model. Secondly, we consider the
parallel multi-receiver wiretap channel with a less noisiness
order in each sub-channel, where this order is not necessarily the
same for all sub-channels. Consequently, this parallel
multi-receiver wiretap channel is not as restrictive as the
degraded multi-receiver wiretap channel, because the overall
channel does not exhibit a degradedness or even a less noisiness
order. We establish the common message secrecy capacity and sum
secrecy capacity of this channel. Thirdly, we study a special
class of parallel multi-receiver wiretap channels and provide a
stronger result. In particular, we study the case with two
sub-channels two users and one eavesdropper, where there is a
degradedness order in each sub-channel such that in the first
(resp. second) sub-channel the second (resp. first) receiver is
degraded with respect to the first (resp. second) receiver, while
the eavesdropper is degraded with respect to both legitimate
receivers in both sub-channels. We determine the secrecy capacity
region of this channel, and discuss its extensions to arbitrary
numbers of users and sub-channels. Finally, we focus on a variant
of this previous channel model where the transmitter can use only
one of the sub-channels at any time. We characterize the secrecy
capacity region of this channel as well.
\end{abstract}

\setstretch{1.2}

\newpage
\section{Introduction}
Information theoretic secrecy was initiated by Wyner in his
seminal work \cite{Wyner} where he introduced the wiretap channel
and established the capacity-equivocation region of the {\it
degraded} wiretap channel. Later, his result was generalized to
arbitrary, {\it not necessarily degraded}, wiretap channels by
Csiszar and Korner \cite{Korner}. Recently, many multiuser channel
models have been considered from a secrecy point of view
\cite{Aylin_2,Aylin_Cooperative,Allerton_08,Ruoheng,Ruoheng2,Broadcasting_Wornell,Khandani_Degraded_Wiretap,
Oohama, Hesham,Melda_1,He_journal,Aylin_Yener,CISS_08,ISIT_08,
Bloch_relay,Yingbin_1,Ruoheng_3,Ruoheng_4}. One basic extension of
the wiretap channel to the multiuser environment is {\it secure
broadcasting to many users} in the presence of an eavesdropper. In
the most general form of this problem (see
Figure~\ref{fig_secure_broadcasting}), one transmitter wants to
have confidential communication with an arbitrary number of users
in a broadcast channel, while this communication is being
eavesdropped by an external entity. Our goal is to understand the
theoretical limits of secure broadcasting, i.e., largest
simultaneously achievable secure rates. Characterizing the secrecy
capacity region of this channel model in its most general form is
difficult, because the version of this problem without any secrecy
constraints, is the broadcast channel with an arbitrary number of
receivers, whose capacity region is open. Consequently, to have
progress in understanding the limits of secure broadcasting, we
resort to studying several special classes of channels, with
increasing generality. The approach of studying special channel
structures was also followed in the existing literature on secure
broadcasting \cite{Khandani_Degraded_Wiretap,
Broadcasting_Wornell}.

Reference \cite{Khandani_Degraded_Wiretap} first considers an
arbitrary wiretap channel with two legitimate receivers and one
eavesdropper, and provides an inner bound for achievable rates
when each user wishes to receive an independent message. Secondly,
\cite{Khandani_Degraded_Wiretap} focuses on the degraded wiretap
channel with two receivers and one eavesdropper, where there is a
degradedness order among the receivers, and the eavesdropper is
degraded with respect to both users (see
Figure~\ref{fig_less_noisy} for a more general version of the
problem that we study). For this setting,
\cite{Khandani_Degraded_Wiretap} finds the secrecy capacity
region. This result is concurrently and independently obtained in
this work as a special case, see
Corollary~\ref{degraded_multiuser_wiretap}, which is also
published in a conference version in \cite{Asilomar_08}.

Another relevant work on secure broadcasting is
\cite{Broadcasting_Wornell} which considers secure broadcasting to
$K$ users using $M$ sub-channels (see Figure~\ref{fig_parallel})
for two different scenarios: In the first scenario, the
transmitter wants to convey only a common confidential message to
all users, and in the second scenario, the transmitter wants to
send independent messages to all users. For both scenarios,
\cite{Broadcasting_Wornell} considers a sub-class of parallel
multi-receiver wiretap channels, where in any given sub-channel
there is a degradation order such that each receiver's observation
(except the best one) is a degraded version of some other
receiver's observation, and this degradation order is not
necessarily the same for all sub-channels. For the first scenario,
\cite{Broadcasting_Wornell} finds the common message secrecy
capacity for this sub-class. For the second scenario, where each
user wishes to receive an independent message,
\cite{Broadcasting_Wornell} finds the sum secrecy capacity for
this sub-class of channels.

\begin{figure}[t]
\begin{center}
\includegraphics[width=6.5cm]{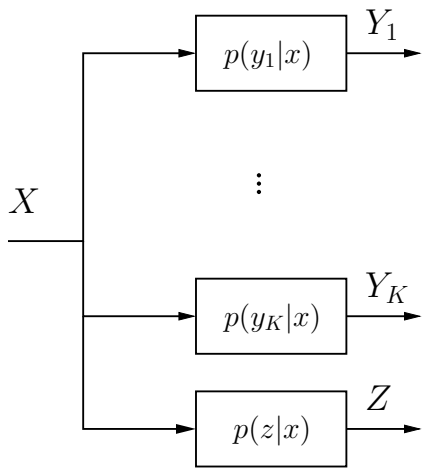}
\caption{Secure broadcasting to many users in the presence of an
eavesdropper.} \label{fig_secure_broadcasting}
\end{center}
\end{figure}

\begin{figure}[htp]
\begin{center}
\epsfxsize = 12 cm  \epsffile{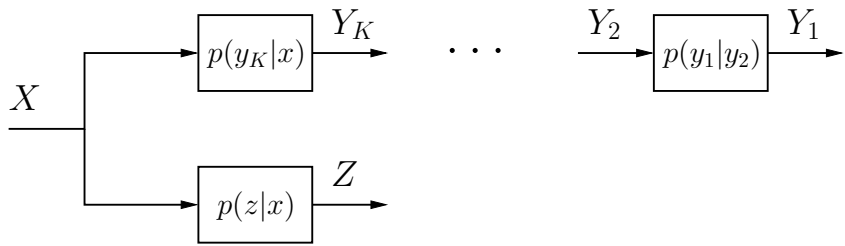}
\end{center}
\caption{The degraded multi-receiver wiretap channel with a more
noisy eavesdropper.} \label{fig_less_noisy}
\end{figure}

\begin{figure}[htp]
\begin{center}
\includegraphics[width=13cm]{parallel_wt}
\end{center}
\caption{The parallel multi-receiver wiretap channel.}
\label{fig_parallel}
\end{figure}

In this paper, our approach will be two-fold: First, we will
identify more general channel models than considered in
\cite{Khandani_Degraded_Wiretap, Broadcasting_Wornell} and
generalize the results in \cite{Khandani_Degraded_Wiretap,
Broadcasting_Wornell} to those channel models, and secondly, we
will consider somewhat more specialized channel models than in
\cite{Broadcasting_Wornell} and provide more comprehensive
results. More precisely, our contributions in this paper are:

\begin{enumerate}
\item We consider the degraded multi-receiver wiretap channel with
an arbitrary number of users and one eavesdropper, where users are
arranged according to a degradedness order, and each user has a
less noisy channel with respect to the eavesdropper, see
Figure~\ref{fig_less_noisy}. We find the secrecy capacity region
when each user receives both an independent message and a common
message. Since degradedness implies less noisiness \cite{Korner},
this channel model contains the sub-class of channel models where
in addition to the degradedness order users exhibit, the
eavesdropper is degraded with respect to all users. Consequently,
our result can be specialized to the degraded multi-receiver
wiretap channel with an arbitrary number of users and a degraded
eavesdropper, see Corollary~\ref{degraded_multiuser_wiretap} and
also \cite{Asilomar_08}. The two-user version of the degraded
multi-receiver wiretap channel was studied and the capacity region
was found independently and concurrently in
\cite{Khandani_Degraded_Wiretap}.

\item We then focus on a class of parallel multi-receiver wiretap
channels with an arbitrary number of legitimate receivers and an
eavesdropper, see Figure~\ref{fig_parallel}, where in each
sub-channel, for any given user, either the user's channel is less
noisy with respect to the eavesdropper's channel, or vice versa.
We establish the common message secrecy capacity of this channel,
which is a generalization of the corresponding capacity result in
\cite{Broadcasting_Wornell} to a broader class of channels.
Secondly, we study the scenario where each legitimate receiver
wishes to receive an independent message for another sub-class of
parallel multi-receiver wiretap channels. For channels belonging
to this sub-class, in each sub-channel, there is a less noisiness
order which is not necessarily the same for all sub-channels.
Consequently, this ordered class of channels is a subset of the
class for which we establish the common message secrecy capacity.
We find the sum secrecy capacity for this class, which is again a
generalization of the corresponding result in
\cite{Broadcasting_Wornell} to a broader class of channels.

\item We also investigate a class of parallel multi-receiver
wiretap channels with two sub-channels, two users and one
eavesdropper, see Figure~\ref{fig_unmatched}. For the channels in
this class, there is a specific degradation order in each
sub-channel such that in the first (resp. second) sub-channel the
second (resp. first) user is degraded with respect to the first
(resp. second) user, while the eavesdropper is degraded with
respect to both users in both sub-channels. This is the model of
\cite{Broadcasting_Wornell} for $K=2$ users and $M=2$
sub-channels. This model is more restrictive compared to the one
mentioned in the previous item. Our motivation to study this more
special class is to provide a stronger and more comprehensive
result. In particular, for this class, we determine the entire
secrecy capacity region when each user receives both an
independent message and a common message. In contrast,
\cite{Broadcasting_Wornell} gives the common message secrecy
capacity (when only a common message is transmitted) and sum
secrecy capacity (when only independent messages are transmitted)
of this class. We discuss the generalization of this result to
arbitrary numbers of users and sub-channels.

\item We finally consider a variant of the previous channel model.
In this model, we again have a parallel multi-receiver wiretap
channel with two sub-channels, two users and one eavesdropper, and
the degradation order in each sub-channel is exactly the same as
in the previous item. However, in this case, the input and output
alphabets of one sub-channel are non-intersecting with the input
and output alphabets of the other sub-channel. Moreover, we can
use only one of these sub-channels at any time. We determine the
secrecy capacity region of this channel when the transmitter sends
both an independent message to each receiver and a common message
to both receivers.
\end{enumerate}

\begin{figure}[t]
\begin{center}
\epsfxsize = 11 cm  \epsffile{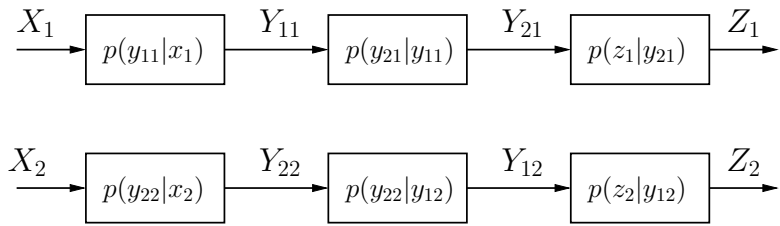}
\end{center}
\caption{The parallel degraded multi-receiver wiretap channel.}
\label{fig_unmatched}
\end{figure}

\section{Degraded Multi-receiver Wiretap Channels}
\label{sec:degraded_mr_wt}

We first consider the generalization of Wyner's degraded wiretap
channel to the case with many legitimate receivers. In particular,
the channel consists of a transmitter with an input alphabet
$x\in\mathcal{X}$, $K$ legitimate receivers with output alphabets
$y_k\in\mathcal{Y}_k,~k=1,\ldots,K,$ and an eavesdropper with
output alphabet $z\in\mathcal{Z}$. The transmitter sends a
confidential message to each user, say $w_k\in\mathcal{W}_k$ to
the $k$th user, in addition to a common message,
$w_0\in\mathcal{W}_0$, which is to be delivered to all users. All
messages are to be kept secret from the eavesdropper. The channel
is assumed to be memoryless with a transition probability
$p(y_1,y_2,\ldots,y_K,z|x)$.

In this section, we consider a special class of these channels,
see Figure~\ref{fig_less_noisy}, where users exhibit a certain
degradation order, i.e., their channel outputs satisfy the
following Markov chain
\begin{align}
X\rightarrow Y_K\rightarrow\ldots\rightarrow Y_1
\end{align}
and each user has a less noisy channel with respect to the
eavesdropper, i.e., we have
\begin{align}
I(U;Y_k)>I(U;Z) \label{less_noisy_wt}
\end{align}
for every $U$ such that $U\rightarrow X\rightarrow (Y_k,Z)$. In
fact, since a degradation order exists among the users, it is
sufficient to say that user 1 has a less noisy channel with
respect to the eavesdropper to guarantee that all users do.
Hereafter, we call this channel {\it the degraded multi-receiver
wiretap channel with a more noisy eavesdropper}. We note that this
channel model contains the degraded multi-receiver wiretap channel
which is defined through the Markov chain
\begin{align}
X\rightarrow Y_K\rightarrow\ldots\rightarrow Y_1\rightarrow Z
\label{degraded_wc}
\end{align}
because the Markov chain in (\ref{degraded_wc}) implies the less
noisiness condition in (\ref{less_noisy_wt}).

A $(2^{nR_0},2^{nR_1},\ldots,2^{nR_K},n)$ code for this channel
consists of $K+1$ message sets,
$\mathcal{W}_k=\{1,\ldots,2^{nR_k}\}$, $k=0,1,\ldots,K$, an
encoder $f:\mathcal{W}_0\times\ldots\times\mathcal{W}_K\rightarrow
\mathcal{X}^{n}$, $K$ decoders, one at each legitimate receiver,
$g_k:\mathcal{Y}_k\rightarrow \mathcal{W}_0\times \mathcal{W}_k$,
$k=1,\ldots,K$. The probability of error is defined as
$P_e^n=\max_{k=1,\ldots,K}\Pr\left[g_k(Y_k^n)\neq
(W_0,W_k)\right]$. A rate tuple $(R_0,R_1,\ldots,R_K)$ is said to
be achievable if there exists a code with
$\lim_{n\rightarrow\infty}P_e^n=0$ and
\begin{align}
\lim_{n\rightarrow\infty}\frac{1}{n}H(\mathcal{S}(W)|Z^n)\geq
\sum_{k\in \mathcal{S}(W)}R_k, \quad \forall ~\mathcal{S}(W)
\label{perfect_secrecy}
\end{align}
where $\mathcal{S}(W)$ denotes any subset of
$\{W_0,W_1,\ldots,W_K\}$. Hence, we consider only perfect secrecy
rates. The secrecy capacity region is defined as the closure of
all achievable rate tuples.

The secrecy capacity region of the degraded multi-receiver wiretap
channel with a more noisy eavesdropper is given by the following
theorem whose proof is provided in
Appendix~\ref{proof_of_less_noisy_multiuser_wiretap}.
\begin{Theo}
\label{less_noisy_nultiuser_wiretap} The secrecy capacity region
of the degraded multi-receiver wiretap channel with a more noisy
eavesdropper is given by the union of the rate tuples
$(R_0,R_1,\ldots,R_K)$ satisfying
\begin{align}
R_0+\sum_{k=1}^\ell R_k & \leq  \sum_{k=1}^\ell
I(U_k;Y_k|U_{k-1})-I(U_{\ell};Z), \quad \ell=1,\ldots,K
\label{capacity_less_noisy}
\end{align}
where $U_0=\phi, U_K=X$, and the union is over all probability
distributions of the form
\begin{align}
p(u_1)p(u_2|u_1)\ldots p(u_{K-1}|u_{K-2})p(x|u_{K-1})
\end{align}
\end{Theo}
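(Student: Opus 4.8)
The plan is to prove Theorem~\ref{less_noisy_nultiuser_wiretap} by exhibiting a coding scheme that achieves every rate tuple in the stated region together with a matching converse.

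\textbf{Achievability.} I would use $K$-level superposition coding with embedded randomization. Fix a distribution $p(u_1)p(u_2|u_1)\cdots p(x|u_{K-1})$ and generate codewords in a nested fashion along $U_1\rightarrow U_2\rightarrow\cdots\rightarrow U_{K-1}\rightarrow X$. The bottom-layer ($U_1$) codewords carry the common message $W_0$, the private message $W_1$, and a dummy message $\widetilde{W}_1$; each higher-layer ($U_k$, $k\ge 2$) codeword, superimposed on its $U_{k-1}$ cloud-center, carries $W_k$ and a dummy message $\widetilde{W}_k$. Receiver $k$ successively decodes layers $1,\dots,k$ from $Y_k^n$; because of the degradedness $X\rightarrow Y_K\rightarrow\cdots\rightarrow Y_1$, the decoding requirement for layer $i$ is most stringent at receiver $i$ itself (each receiver $j>i$ sees a stochastically degraded output and imposes a weaker constraint), which yields the reliability constraints $R_0+R_1+\widetilde{R}_1\le I(U_1;Y_1)$ and $R_k+\widetilde{R}_k\le I(U_k;Y_k|U_{k-1})$ for $k=2,\dots,K$. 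For secrecy I would use the dummy messages to saturate the eavesdropper, taking the total dummy rate equal to $I(X;Z)$; a standard Wyner-type equivocation computation, applied to the layered codebook, then gives $\frac{1}{n}I(W_0,W_1,\dots,W_K;Z^n)\rightarrow 0$, and since $I(W_{\mathcal S};Z^n)\le I(W_0,W_1,\dots,W_K;Z^n)$ for every subset, all the perfect-secrecy conditions in (\ref{perfect_secrecy}) follow. The more-noisy property enters here: since (\ref{less_noisy_wt}) holds for every input distribution it also holds conditionally, so $I(U_k;Z|U_{k-1})\le I(U_k;Y_k|U_{k-1})$ and the dummy rates fit inside the legitimate decoding budgets. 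Finally, to reach an arbitrary point of the region one distributes the dummy rate $I(X;Z)$ across the layers from the bottom up (loading extra dummy into low layers when needed) and invokes $\sum_{k=1}^{\ell}I(U_k;Z|U_{k-1})=I(U_\ell;Z)$, which follows from the Markov chain on the $U_k$'s.

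\textbf{Converse.} Let $(R_0,R_1,\dots,R_K)$ be achievable, fix $\ell$, and write $W_0^{k-1}:=(W_0,\dots,W_{k-1})$. By message independence, $n(R_0+\sum_{k=1}^{\ell}R_k)=H(W_0,W_1)+\sum_{k=2}^{\ell}H(W_k\,|\,W_0^{k-1})$. Fano's inequality at receiver~$1$ (which decodes $(W_0,W_1)$) and at receiver~$k$ (which decodes $(W_0,W_k)$, hence recovers $W_k$ from $Y_k^n$ given $W_0^{k-1}$) bounds each term by a $Y$-mutual information, and the perfect-secrecy condition applied to the subsets $\{W_0,\dots,W_k\}$ gives $I(W_0,\dots,W_k;Z^n)\le n\epsilon_n$; combining these,
\begin{align}
n\Big(R_0+\sum_{k=1}^{\ell}R_k\Big)\le\big[I(W_0,W_1;Y_1^n)-I(W_0,W_1;Z^n)\big]+\sum_{k=2}^{\ell}\big[I(W_k;Y_k^n|W_0^{k-1})-I(W_k;Z^n|W_0^{k-1})\big]+n\epsilon_n, \nonumber
\end{align}
where the $Z$-terms already telescope to $-I(W_0,\dots,W_\ell;Z^n)$. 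I would then single-letterize with a time-sharing variable $Q$ uniform on $\{1,\dots,n\}$ and Csisz\'ar's sum identity, choosing, for each time index $i$, nested auxiliaries $U_{1,i}\subseteq U_{2,i}\subseteq\cdots\subseteq U_{K-1,i}$ built from $W_0^{k}$ and appropriate past/future channel outputs, and setting $U_k=(U_{k,Q},Q)$ for $k\le K-1$ and $U_K=X=X_Q$ (legitimate since the encoder is deterministic). This produces $n(R_0+\sum_{k=1}^{\ell}R_k)\le n\sum_{k=1}^{\ell}\big[I(U_k;Y_k|U_{k-1})-I(U_k;Z|U_{k-1})\big]+n\epsilon_n$, and using $\sum_{k=1}^{\ell}I(U_k;Z|U_{k-1})=I(U_\ell;Z)$ and letting $n\rightarrow\infty$ yields (\ref{capacity_less_noisy}).

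\textbf{Main obstacle.} The technical heart is the single-letterization in the converse: one must identify auxiliary random variables that simultaneously (i) form the nested chain $U_1\rightarrow\cdots\rightarrow U_{K-1}\rightarrow X$, (ii) after summing over the time index and time sharing, reproduce exactly the claimed single-letter $Y$- and $Z$-terms rather than leaving residual boundary terms, and (iii) permit the topmost auxiliary to be taken equal to $X$. Reconciling the different past-output variables that Csisz\'ar's identity naturally attaches to the different layers is where the degradedness $X\rightarrow Y_K\rightarrow\cdots\rightarrow Y_1$ is used, while forcing the eavesdropper's contributions to collapse into the single term $I(U_\ell;Z)$ is where the more-noisy hypothesis (\ref{less_noisy_wt}) is used. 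Achievability, by contrast, is mostly bookkeeping once the layered codebook and the dummy-rate allocation are in place.
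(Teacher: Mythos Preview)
Your proposal is correct and follows essentially the same route as the paper: layered superposition with stochastic encoding at each level for achievability (the paper fixes $\tilde R_k=I(U_k;Z|U_{k-1})$ explicitly, which is your ``distribute $I(X;Z)$ across layers''), and for the converse the same decomposition $H(W_0,\dots,W_\ell|Z^n)=H(W_0,W_1|Z^n)+\sum_{k\ge 2}H(W_k|W_0^{k-1},Z^n)$ followed by Fano, the Csisz\'ar--K\"orner identity, and the auxiliaries $U_{k,i}=(W_0,\dots,W_k,Y_{k+1}^{i-1},Z_{i+1}^n)$.

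One small imprecision worth flagging: your attribution of the roles of the two hypotheses in the converse is slightly off. The collapse $\sum_{k\le\ell}I(U_k;Z|U_{k-1})=I(U_\ell;Z)$ is a pure Markov-chain fact (as you note earlier), not a consequence of the more-noisy assumption. In the paper's argument the less-noisy hypothesis is used repeatedly \emph{inside} the single-letterization, to add nonnegative differences such as $I(Y_{k+1}^{i-1};Y_{k,i}|\cdot)-I(Y_{k+1}^{i-1};Z_i|\cdot)\ge 0$, which is precisely how one inserts $Y_{k+1}^{i-1}$ into the conditioning and thereby morphs the auxiliary from $U_{k-1,i}$ to $U_{k,i}$. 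Degradedness among the legitimate receivers is then what lets you drop $Y_k^{i-1}$ once $Y_{k+1}^{i-1}$ is present and what guarantees the nested chain $U_{1,i}\rightarrow\cdots\rightarrow U_{K-1,i}\rightarrow X_i$. So both hypotheses are used in ``reconciling the past-output variables,'' not one for the $Y$-side and the other for the $Z$-side.
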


\begin{Remark}
Theorem~\ref{less_noisy_nultiuser_wiretap} implies that a modified
version of superposition coding can achieve the boundary of the
capacity region. The difference between the superposition coding
scheme used to achieve (\ref{capacity_less_noisy}) and the
standard one \cite{cover_book} is that the former uses stochastic
encoding in each layer of the code to associate each message with
many codewords. This controlled amount of redundancy prevents the
eavesdropper from being able decode the message.
\end{Remark}

As stated earlier, the degraded multi-receiver wiretap channel
with a more noisy eavesdropper contains the degraded
multi-receiver wiretap channel which requires the eavesdropper to
be degraded with respect to all users as stated
(\ref{degraded_wc}). Thus, we can specialize our result in
Theorem~\ref{less_noisy_nultiuser_wiretap} to the degraded
multi-receiver wiretap channel as given in the following corollary
whose proof is provided in
Appendix~\ref{proof_of_corollary_degraded_multiuser_wiretap}.

\begin{Cor}
\label{degraded_multiuser_wiretap} The secrecy capacity region of
the degraded multi-receiver wiretap channel is given by the union
of the rate tuples $(R_0,R_1,\ldots,R_K)$ satisfying
\begin{align}
R_0+\sum_{k=1}^\ell R_k & \leq  \sum_{k=1}^\ell
I(U_k;Y_k|U_{k-1},Z), \quad \ell=1,\ldots,K
\end{align}
where $U_0=\phi,U_K=X$, and the union is over all probability
distributions of the form
\begin{align}
p(u_1)p(u_2|u_1)\ldots
p(u_{K-1}|u_{K-2})p(x|u_{K-1})
\end{align}
\end{Cor}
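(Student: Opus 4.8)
The plan is to obtain the corollary directly from Theorem~\ref{less_noisy_nultiuser_wiretap}, without any new achievability or converse argument. As already noted in the excerpt, the degradedness Markov chain $X\rightarrow Y_K\rightarrow\cdots\rightarrow Y_1\rightarrow Z$ implies the less noisiness condition (\ref{less_noisy_wt}), so the degraded multi-receiver wiretap channel is a member of the class treated in Theorem~\ref{less_noisy_nultiuser_wiretap}. Hence its secrecy capacity region is already the union, over all input distributions of the form $p(u_1)p(u_2|u_1)\cdots p(x|u_{K-1})$, of the tuples satisfying (\ref{capacity_less_noisy}). Since the corollary takes the union over exactly the same family of distributions, it suffices to prove the identity
\[
\sum_{k=1}^{\ell} I(U_k;Y_k|U_{k-1}) - I(U_{\ell};Z) = \sum_{k=1}^{\ell} I(U_k;Y_k|U_{k-1},Z), \qquad \ell=1,\ldots,K,
\]
for every such distribution, with $U_0=\phi$ and $U_K=X$; this matches the two region descriptions bound-by-bound and proves the claim.

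To prove the identity, fix an input distribution of the given form, so that $U_1\rightarrow\cdots\rightarrow U_{K-1}\rightarrow X\rightarrow Y_K\rightarrow\cdots\rightarrow Y_1\rightarrow Z$ is a Markov chain. I will use two of its consequences. First, concatenating $X\rightarrow Y_K\rightarrow\cdots\rightarrow Y_k$ with $Y_k\rightarrow\cdots\rightarrow Y_1\rightarrow Z$ gives $X\rightarrow Y_k\rightarrow Z$, hence $(U_{k-1},U_k)\rightarrow Y_k\rightarrow Z$ and therefore $I(U_k;Z|Y_k,U_{k-1})=0$; expanding $I(U_k;Y_k,Z|U_{k-1})$ in two ways then yields
\[
I(U_k;Y_k|U_{k-1},Z) = I(U_k;Y_k,Z|U_{k-1}) - I(U_k;Z|U_{k-1}) = I(U_k;Y_k|U_{k-1}) - I(U_k;Z|U_{k-1}).
\]
Second, since $(U_1,\ldots,U_{k-2})\rightarrow U_{k-1}\rightarrow U_k\rightarrow Z$, we have $I(U_k;Z|U_{k-1}) = I(U_k;Z|U_1,\ldots,U_{k-1})$.

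Summing the displayed expansion over $k=1,\ldots,\ell$ and using the chain rule together with $U_0=\phi$ and the relation $(U_1,\ldots,U_{\ell-1})\rightarrow U_{\ell}\rightarrow Z$,
\[
\sum_{k=1}^{\ell} I(U_k;Z|U_{k-1}) = \sum_{k=1}^{\ell} I(U_k;Z|U_1,\ldots,U_{k-1}) = I(U_1,\ldots,U_{\ell};Z) = I(U_{\ell};Z),
\]
so the sum of the $I(U_k;Z|U_{k-1})$ terms telescopes to $I(U_{\ell};Z)$, which is exactly the correction term appearing in (\ref{capacity_less_noisy}). Substituting this back gives the claimed identity, and the corollary follows. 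The only point demanding care is verifying the precise Markov relations invoked above from the degradedness chain; once these are checked, the remainder is the short telescoping computation just described, so I anticipate no substantive obstacle.
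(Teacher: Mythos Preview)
Your proposal is correct and follows essentially the same approach as the paper: both derive the corollary from Theorem~\ref{less_noisy_nultiuser_wiretap} by decomposing $I(U_{\ell};Z)=\sum_{k=1}^{\ell} I(U_k;Z|U_{k-1})$ via the Markov chain $U_1\rightarrow\cdots\rightarrow U_{K-1}\rightarrow X\rightarrow Z$, and then using the degradedness chain $U_{k-1}\rightarrow U_k\rightarrow Y_k\rightarrow Z$ to rewrite $I(U_k;Y_k|U_{k-1})-I(U_k;Z|U_{k-1})=I(U_k;Y_k|U_{k-1},Z)$. Your writeup is somewhat more explicit in verifying the Markov relations, but the argument is the same.
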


We acknowledge an independent and concurrent work regarding the
degraded multi-receiver wiretap channel. Reference
\cite{Khandani_Degraded_Wiretap} considers the two-user case and
establishes the secrecy capacity region as well.

So far we have determined the entire secrecy capacity region of
the degraded multi-receiver wiretap channel with a more noisy
eavesdropper. This class of channels requires a certain
degradation order among the legitimate receivers which may be
viewed as being too restrictive from a practical point of view.
Our goal is to consider progressively more general channel models.
Towards that goal, in the next section, we consider channel models
where the users are not ordered in a degradedness or noisiness
order. However, the concepts of degradedness and noisiness are
essential in proving capacity results. In the next section, we
will consider multi-receiver broadcast channels which are composed
of independent sub-channels. We will assume some noisiness
properties in these sub-channels in order to derive certain
capacity results. However, even though the sub-channels will have
certain noisiness properties, the overall broadcast channel will
not have any degradedness or noisiness properties.

\section{Parallel Multi-receiver Wiretap Channels}
\label{sec:parallel}

Here, we investigate the parallel multi-receiver wiretap channel
where the transmitter communicates with $K$ legitimate receivers
using $M$ independent sub-channels in the presence of an
eavesdropper, see Figure~\ref{fig_parallel}. The channel
transition probability of a parallel multi-receiver wiretap
channel is
\begin{align}
p\left(\left\{y_{1m},\ldots,y_{Km},z_m\right\}_{m=1}^{M}|\left\{x_m\right\}_{m=1}^{M}\right)=\prod_{m=1}^M
p\left(y_{1m},\ldots,y_{Km},z_m|x_m\right)
\label{channel_transition}
\end{align}
where $x_{m}\in\mathcal{X}_m$ is the input in the $m$th
sub-channel where $\mathcal{X}_m$ is the corresponding channel
input alphabet, $y_{km}\in\mathcal{Y}_{km}$ (resp.
$z_{m}\in\mathcal{Z}_{m}$) is the output in the $k$th user's
(resp. eavesdropper's) $m$th sub-channel where $\mathcal{Y}_{km}$
(resp. $\mathcal{Z}_m$) is the $k$th user's (resp. eavesdropper's)
$m$th sub-channel output alphabet.

In this section, we investigate special classes of parallel
multi-receiver wiretap channels. These channel models contain the
class of channel models studied in \cite{Broadcasting_Wornell} as
a special case. Similar to \cite{Broadcasting_Wornell}, our
emphasis will be on the common message secrecy capacity and the
sum secrecy capacity.

\subsection{The Common Message Secrecy Capacity}
\label{sec:parallel_com_message}

We first consider the simplest possible scenario where the
transmitter sends a common confidential message to all users.
Despite its simplicity, the secrecy capacity of a common
confidential message (hereafter will be called the common message
secrecy capacity) in a general broadcast channel is unknown.

The common message secrecy capacity for a special class of
parallel multi-receiver wiretap channels was studied in
\cite{Broadcasting_Wornell}. In this class of parallel
multi-receiver wiretap channels \cite{Broadcasting_Wornell}, each
sub-channel exhibits a certain degradation order which is not
necessarily the same for all sub-channels, i.e., the following
Markov chain is satisfied
\begin{align}
X_{l}\rightarrow Y_{\pi_{l}(1)}\rightarrow
Y_{\pi_{l}(2)}\rightarrow \ldots \rightarrow Y_{\pi_{l}(K+1)}
\label{degradedness_order}
\end{align}
in the $l$th sub-channel, where
$(Y_{\pi_{l}(1)},Y_{\pi_{l}(2)},\ldots,Y_{\pi_{l}(K+1)})$ is a
permutation of $(Y_{1l},\ldots,Y_{Kl},Z_l)$. Hereafter, we call
this channel the parallel degraded multi-receiver wiretap
channel\footnote{In \cite{Broadcasting_Wornell}, these channels
are called {\it reversely degraded} parallel channels. Here, we
call them parallel degraded multi-receiver wiretap channels to be
consistent with the terminology used in the rest of the paper.}.
Although \cite{Broadcasting_Wornell} established the common
message secrecy capacity for this class of channels, in fact,
their result is valid for the broader class in which we have
either
\begin{align}
X_l\rightarrow Y_{kl}\rightarrow Z_l\label{degraded_1}
\end{align}
or
\begin{align}
X_l\rightarrow Z_l\rightarrow Y_{kl} \label{degraded_2}
\end{align}
valid for every $X_l$ and for any $(k,l)$ pair where
$k\in\left\{1,\ldots,K\right\}$, $l\in\left\{1,\ldots,M\right\}$.
Thus, it is sufficient to have a degradedness order between each
user and the eavesdropper in any sub-channel instead of the long
Markov chain between all users and the eavesdropper as in
(\ref{degradedness_order}).

Here, we focus on a broader class of channels where in each
sub-channel, for any given user, either the user's channel is less
noisy than the eavesdropper's channel, or vice versa. More
formally, we have either
\begin{align}
I(U;Y_{kl})>I(U;Z_l) \label{less_noisy_1}
\end{align}
or
\begin{align}
I(U;Y_{kl})<I(U;Z_l) \label{less_noisy_2}
\end{align}
for all $U\rightarrow X_l \rightarrow (Y_{kl},Z)$ and any $(k,l)$
pair where $k\in\left\{1,\ldots,K\right\}$,
$l\in\left\{1,\ldots,M\right\}$. Hereafter, we call this channel
{\em the parallel multi-receiver wiretap channel with a more noisy
eavesdropper}. Since the Markov chain in
(\ref{degradedness_order}) implies either (\ref{less_noisy_1}) or
(\ref{less_noisy_2}), the parallel multi-receiver wiretap channel
with a more noisy eavesdropper contains the parallel degraded
multi-receiver wiretap channel studied in
\cite{Broadcasting_Wornell}.

A $(2^{nR},n)$ code for this channel consists of a message set,
$\mathcal{W}_0=\{1,\ldots,2^{nR}\}$, an encoder, $f:
\mathcal{W}_0\rightarrow \mathcal{X}_1^n\times
\ldots\times\mathcal{X}_M^n$, $K$ decoders, one at each legitimate
receiver
$g_k:\mathcal{Y}_{k1}\times\ldots\times\mathcal{Y}_{kM}\rightarrow
\mathcal{W}_0, k=1,\ldots,K$. The probability of error is defined
as $P_e^n=\max_{k=1,\ldots,K} \Pr\left[\hat{W}_{k0}\neq
W_0\right]$ where $\hat{W}_{k0}$ is the $k$th user's decoder
output. The secrecy of the common message is measured through the
equivocation rate which is defined as
$\frac{1}{n}H(W_0|Z_1^n,\ldots,Z_M^n)$. A common message secrecy
rate, $R$, is said to be achievable if there exists a code such
that $\lim_{n\rightarrow\infty}P_e^n=0$, and
\begin{align}
\lim_{n\rightarrow\infty}\frac{1}{n}H(W_0|Z_1^n,\ldots,Z_M^n)\geq
R
\end{align}
The common message secrecy capacity is the supremum of all
achievable secrecy rates.

The common message secrecy capacity of the parallel multi-receiver
wiretap channel with a more noisy eavesdropper is stated in the
following theorem whose proof is given in
Appendix~\ref{proof_of_common_less_noisier_parallel}.
\begin{Theo}
\label{common_less_noisier_parallel} The common message secrecy
capacity, $C_0$, of the parallel multi-receiver wiretap channel
with a more noisy eavesdropper is given by
\begin{align}
C_0&= \max \min_{k=1,\ldots,K}\sum_{l=1}^M
\big[I(X_l;Y_{kl})-I(X_l;Z_l)\big]^+ \label{C_PLNMWC}
\end{align}
where the maximization is over all distributions of the form $
p(x_1,\ldots,x_M)=\prod_{l=1}^M p(x_l) $.
\end{Theo}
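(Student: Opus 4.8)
The plan is to establish the converse and achievability for the expression in \eqref{C_PLNMWC} separately. For achievability, I would use independent signaling across sub-channels together with Wyner-style stochastic encoding adapted to a common message. Fix a product distribution $\prod_{l=1}^M p(x_l)$. In each sub-channel $l$, split the users into two groups: those for which the eavesdropper is more noisy (\eqref{less_noisy_1}) and those for which the eavesdropper is less noisy (\eqref{less_noisy_2}). On the sub-channels where the eavesdropper is more noisy than user $k$, a rate of $I(X_l;Y_{kl})-I(X_l;Z_l)$ can be ``securely'' conveyed, while on the sub-channels where the eavesdropper is less noisy than user $k$ we give up and count zero; taking the worst user gives the $\min_k\sum_l[\cdot]^+$ term. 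The subtlety is that the same codeword must be simultaneously decodable by \emph{all} users and secure against a \emph{single} eavesdropper who sees all $M$ sub-channel outputs; I would handle this by generating one super-codebook indexed by $(w_0, w')$ with $w'$ a dummy randomization index of rate $\sum_l I(X_l;Z_l)$ (distributed appropriately so that on each sub-channel the induced rate is $I(X_l;Z_l)$), show each user $k$ can decode $(w_0,w')$ whenever the total rate is below $\sum_l I(X_l;Y_{kl})$, and then verify $\frac{1}{n}I(W_0;Z_1^n,\ldots,Z_M^n)\to 0$ by a standard joint-typicality / list-decoding argument exploiting the product structure of the eavesdropper channel.

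For the converse, suppose a common secrecy rate $R$ is achievable. For each user $k$, Fano's inequality gives $nR \leq I(W_0;Y_{k1}^n,\ldots,Y_{kM}^n) + n\epsilon_n$, and the perfect-secrecy constraint gives $nR \leq I(W_0;Y_{k1}^n,\ldots,Y_{kM}^n) - I(W_0;Z_1^n,\ldots,Z_M^n) + n\epsilon_n'$. The key step is to single-letterize this difference sub-channel by sub-channel. Using the product form of the channel in \eqref{channel_transition} and the independence of the sub-channels, I would write $I(W_0;\mathbf{Y}_k) - I(W_0;\mathbf{Z}) \leq \sum_{l=1}^M \big[I(W_0;Y_{kl}^n|\text{past/future blocks}) - I(W_0;Z_l^n|\cdots)\big]$, introduce auxiliary random variables $U_{l}$ that encapsulate $W_0$ together with the appropriate side information and a time-sharing variable $Q_l$, and reduce to bounding $I(U_l;Y_{kl,i}) - I(U_l;Z_{l,i})$ per symbol. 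On any sub-channel where the eavesdropper is more noisy than user $k$, the less-noisiness condition \eqref{less_noisy_1} bounds this quantity by the analogous quantity with $U_l$ replaced by $X_{l}$ (a standard consequence of the more-noisy ordering together with the data-processing structure $U_l\to X_l\to(Y_{kl},Z_l)$), which is at most $I(X_l;Y_{kl})-I(X_l;Z_l)$; on any sub-channel where the eavesdropper is less noisy, the same per-symbol difference is nonpositive, so it is bounded by $0$. Summing yields $R \leq \sum_{l=1}^M [I(X_l;Y_{kl})-I(X_l;Z_l)]^+$ for \emph{every} $k$, hence $R \leq \min_k \sum_l [\cdot]^+$, and maximizing over product input distributions closes the gap.

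The main obstacle I anticipate is the single-letterization in the converse: the difference of mutual informations does not telescope cleanly because $W_0$ is correlated across sub-channels through the encoder, so I must be careful to choose the auxiliary variables $U_l$ (absorbing outputs of other sub-channels and a portion of the index history) so that (i) the Markov chain $U_l \to X_l \to (Y_{kl},Z_l)$ needed to invoke \eqref{less_noisy_1}--\eqref{less_noisy_2} actually holds, and (ii) the cross terms coupling different sub-channels cancel or are nonpositive. This is precisely the place where the product structure \eqref{channel_transition} is indispensable, and it is the step that upgrades the degradedness-based argument of \cite{Broadcasting_Wornell} to the more-noisy setting. A secondary (but routine) obstacle is verifying the equivocation bound in achievability uniformly over all $M$ sub-channels, which again follows from the factorization of $p(z_1,\ldots,z_M\mid x_1,\ldots,x_M)$.
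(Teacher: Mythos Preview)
Your proposal is correct and follows essentially the same route as the paper. The paper's converse applies the Csisz\'ar--K\"orner sum identity \emph{twice}---first over the time index $i$ to reduce $I(W_0;Y_k^n)-I(W_0;Z^n)$ to per-symbol differences conditioned on $U_{k,i}=(Y_k^{i-1},Z_{i+1}^n)$, and then again over the sub-channel index $l$ within each vector symbol $(Y_{k1}(i),\ldots,Y_{kM}(i))$ versus $(Z_1(i),\ldots,Z_M(i))$---which is exactly the mechanism you are gesturing at with ``past/future blocks''; the resulting auxiliary is $(W_0,U_{k,i},\tilde{Y}_k^{l-1}(i),\tilde{Z}_{l+1}^M(i))$, and the less-noisy property is then invoked precisely as you describe (drop the negative summands for $l\notin\mathcal{S}(k)$, and on the remaining ones add the two nonnegative differences $I(\,\text{aux}\,;Y_{kl})-I(\,\text{aux}\,;Z_l)\ge 0$ and $I(X_l;Y_{kl}\mid\text{aux})-I(X_l;Z_l\mid\text{aux})\ge 0$ to absorb $X_l(i)$ and collapse to $I(X_l;Y_{kl})-I(X_l;Z_l)$). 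For achievability the paper simply cites \cite{Broadcasting_Wornell}; your sketch is in the same spirit, though note that to obtain the $[\cdot]^+$ you must allow each user to ignore the sub-channels on which the eavesdropper dominates it rather than using a single global dummy rate $\sum_l I(X_l;Z_l)$, which would otherwise yield only $\sum_l (I(X_l;Y_{kl})-I(X_l;Z_l))$ without the positive part.
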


\begin{Remark}
Theorem~\ref{common_less_noisier_parallel} implies that we should
not use the sub-channels in which there is no user that has a less
noisy channel than the eavesdropper. Moreover,
Theorem~\ref{common_less_noisier_parallel} shows that the use of
independent inputs in each sub-channel is sufficient to achieve
the capacity, i.e., inducing correlation between channel inputs of
sub-channels cannot provide any improvement.
\end{Remark}

As stated earlier, the parallel multi-receiver wiretap channel
with a more noisy eavesdropper encompasses the parallel degraded
multi-receiver wiretap channel studied in
\cite{Broadcasting_Wornell}. Hence, we can specialize
Theorem~\ref{common_less_noisier_parallel} to recover the common
message secrecy capacity of the parallel degraded multi-receiver
wiretap channel established in \cite{Broadcasting_Wornell}. This
is stated in the following corollary whose proof is given in
Appendix~\ref{proof_of_corollary_degraded_common}.
\begin{Cor}
\label{corollary_degraded_common} The common message secrecy
capacity of the  parallel degraded multi-receiver wiretap channel
is given by
\begin{align}
C_0&= \max \min_{k=1,\ldots,K}\sum_{l=1}^M I(X_l;Y_{kl}|Z_l)
\label{C_PDMWC}
\end{align}
where the maximization is over all distributions of the form $
p(x_1,\ldots,x_M)=\prod_{l=1}^M p(x_l)$.
\end{Cor}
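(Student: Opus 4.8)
The plan is to obtain Corollary~\ref{corollary_degraded_common} directly from Theorem~\ref{common_less_noisier_parallel} by showing that, for the parallel degraded multi-receiver wiretap channel, the two characterizations coincide \emph{term by term} under any admissible (product) input distribution. Theorem~\ref{common_less_noisier_parallel} gives $C_0=\max\min_{k}\sum_{l=1}^M\big[I(X_l;Y_{kl})-I(X_l;Z_l)\big]^+$, and the maximization and minimization there run over exactly the same sets as in (\ref{C_PDMWC}). Hence it suffices to prove that for every pair $(k,l)$ and every $p(x_l)$,
\begin{align}
\big[I(X_l;Y_{kl})-I(X_l;Z_l)\big]^+=I(X_l;Y_{kl}|Z_l). \notag
\end{align}
Summing over $l$ and then taking $\min_k$ and $\max$ over product distributions converts the region in Theorem~\ref{common_less_noisier_parallel} into (\ref{C_PDMWC}).

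First I would recall that, by the definition of the parallel degraded multi-receiver wiretap channel in (\ref{degradedness_order}), each sub-channel carries a total degradation order among $(Y_{1l},\ldots,Y_{Kl},Z_l)$; as noted in the text preceding Theorem~\ref{common_less_noisier_parallel}, this implies that for every user $k$ and sub-channel $l$ we have either $X_l\rightarrow Y_{kl}\rightarrow Z_l$ as in (\ref{degraded_1}) or $X_l\rightarrow Z_l\rightarrow Y_{kl}$ as in (\ref{degraded_2}). I would then split into these two cases. If $X_l\rightarrow Y_{kl}\rightarrow Z_l$, expanding $I(X_l;Y_{kl},Z_l)$ by the chain rule in two ways and using $I(X_l;Z_l|Y_{kl})=0$ yields $I(X_l;Y_{kl})-I(X_l;Z_l)=I(X_l;Y_{kl}|Z_l)\geq 0$, so the positive part is vacuous and the identity holds. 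If $X_l\rightarrow Z_l\rightarrow Y_{kl}$, the same Markov chain gives $I(X_l;Y_{kl}|Z_l)=0$, while the data processing inequality gives $I(X_l;Y_{kl})\leq I(X_l;Z_l)$, so $\big[I(X_l;Y_{kl})-I(X_l;Z_l)\big]^+=0=I(X_l;Y_{kl}|Z_l)$; the identity again holds.

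This argument is essentially bookkeeping built on top of Theorem~\ref{common_less_noisier_parallel}, so I do not expect a serious obstacle. The only point needing care is the logical step that the order postulated in (\ref{degradedness_order}) really does imply one of (\ref{degraded_1})--(\ref{degraded_2}) for \emph{every} $(k,l)$ pair, which is exactly the remark already made in the text; beyond that, one should double-check that the positive-part operator behaves correctly under the sum over $l$ (it does, since the identity is established before summing). Once the pointwise identity is in hand, both achievability and the converse for (\ref{C_PDMWC}) are inherited verbatim from Theorem~\ref{common_less_noisier_parallel}, which completes the proof.
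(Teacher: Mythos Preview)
Your proposal is correct and follows essentially the same route as the paper's own proof: both derive Corollary~\ref{corollary_degraded_common} from Theorem~\ref{common_less_noisier_parallel} by establishing the pointwise identity $\big[I(X_l;Y_{kl})-I(X_l;Z_l)\big]^+=I(X_l;Y_{kl}|Z_l)$ for each $(k,l)$, splitting into the two cases (\ref{degraded_1}) and (\ref{degraded_2}) and using the corresponding Markov chain in each case. The only cosmetic difference is that in the second case the paper writes $I(X_l;Z_l)=I(X_l;Z_l,Y_{kl})$ directly from the Markov chain, whereas you phrase the nonpositivity via the data processing inequality; these are equivalent.
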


\subsection{The Sum Secrecy Capacity}
\label{subsec:parallel}

We now consider the scenario where the transmitter sends an
independent confidential message to each legitimate receiver, and
focus on the sum secrecy capacity. We consider a class of parallel
multi-receiver wiretap channels where the legitimate receivers and
the eavesdropper exhibit a certain less noisiness order in each
sub-channel. These less noisiness orders are not necessarily the
same for all sub-channels. Therefore, the overall channel does not
have a less noisiness order. In the $l$th sub-channel, for all
$U\rightarrow X_l\rightarrow (Y_{1l},\ldots,Y_{Kl},Z_l)$, we have
\begin{align}
I(U;Y_{\pi_{l}(1)})>I(U;Y_{\pi_{l}(2)})>\ldots>I(U;Y_{\pi_{l}(K+1)})
\end{align}
where  $(Y_{\pi_{l}(1)},Y_{\pi_{l}(2)},\ldots,Y_{\pi_{l}(K+1)})$
is a permutation of $(Y_{1l},\ldots,Y_{Kl},Z_l)$. We call this
channel {\em the parallel multi-receiver wiretap channel with a
less noisiness order in each sub-channel}. We note that this class
of channels is a subset of the parallel multi-receiver wiretap
channel with a more noisy eavesdropper studied in
Section~\ref{sec:parallel_com_message}, because of the additional
ordering imposed between users' sub-channels. We also note that
the class of parallel degraded multi-receiver wiretap channels
with a degradedness order in each sub-channel studied in
\cite{Broadcasting_Wornell} is not only a subset of parallel
multi-receiver wiretap channels with a more noisy eavesdropper
studied in Section~\ref{sec:parallel_com_message} but also a
subset of parallel multi-receiver wiretap channels with a less
noisiness order in each sub-channel studied in this section.

A $(2^{nR_1}\ldots,2^{nR_K},n)$ code for this channel consists of
$K$ message sets, $\mathcal{W}_k=\{1,\ldots,2^{nR_k}\}, \break
k=1,\ldots,K$, an encoder, $f:
\mathcal{W}_1\times\ldots\times\mathcal{W}_K\rightarrow
\mathcal{X}_1^n\times \ldots\times\mathcal{X}_M^n$, $K$ decoders,
one at each legitimate receiver
$g_k:\mathcal{Y}_{k1}\times\ldots\times\mathcal{Y}_{kM}\rightarrow
\mathcal{W}_k, k=1,\ldots,K$. The probability of error is defined
as $P_e^n=\max_{k=1,\ldots,K} \Pr\left[\hat{W}_{k}\neq W_k\right]$
where $\hat{W}_{k}$ is the $k$th user's decoder output. The
secrecy is measured through the equivocation rate which is defined
as $\frac{1}{n}H(W_1,\ldots,W_K|Z_1^n,\ldots,Z_M^n)$. A sum
secrecy rate, $R_s$, is said to be achievable if there exists a
code such that $\lim_{n\rightarrow\infty}P_e^n=0$, and
\begin{align}
\lim_{n\rightarrow\infty}\frac{1}{n}H(W_1,\ldots,W_K|Z_1^n,\ldots,Z_M^n)\geq
R_s
\end{align}
The sum secrecy capacity is defined to be the supremum of all
achievable sum secrecy rates.

The sum secrecy capacity for the class of parallel multi-receiver
wiretap channels with a less noisiness order in each sub-channel
studied in this section is stated in the following theorem whose
proof is given in Appendix~\ref{proof_of_sum_secrecy_capacity}.
\begin{Theo}
\label{sum_secrecy_capacity} The sum secrecy capacity of the
parallel multi-receiver wiretap channel with a less noisiness
order in each sub-channel is given by
\begin{align}
\max\sum_{l=1}^M \big[I(X_{l};Y_{\rho(l)l})-I(X_{l};Z_l)\big]^+
\end{align}
where the maximization is over all input distributions of the form
$ p(x_1,\ldots,x_M)=\prod_{l=1}^M p(x_l) $ and $\rho(l)$ denotes
the index of the strongest user in the $l$th sub-channel in the
sense that
\begin{align}
I(U;Y_{kl})\leq I(U;Y_{\rho(l)l})
\end{align}
for all  $U\rightarrow X_{l}\rightarrow
(Y_{1l},\ldots,Y_{Kl},Z_l)$ and any $k\in\{1,\ldots,K\}$.
\end{Theo}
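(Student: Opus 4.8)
The plan is to prove achievability and a matching converse. For achievability, I would exploit the fact that this channel class is a sub-class of the parallel multi-receiver wiretap channel with a more noisy eavesdropper (as the paper itself notes). In each sub-channel $l$, there is a total less-noisiness order; the strongest user $\rho(l)$ dominates everyone, including the eavesdropper. The natural scheme is: for each sub-channel $l$, pick the input $X_l$, and send (only) user $\rho(l)$'s message through sub-channel $l$, using a wiretap code (stochastic encoding with rate sacrifice $I(X_l;Z_l)$) for the point-to-point wiretap channel $X_l \to (Y_{\rho(l)l}, Z_l)$. Since user $\rho(l)$ is less noisy than all other users in sub-channel $l$, a non-intended user cannot decode, but that is fine — we only require each user to decode its own message. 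Concatenating over sub-channels and using the independence of the sub-channels, standard arguments (as in the wiretap channel and as in the proof of Theorem~\ref{common_less_noisier_parallel}) give joint secrecy $\frac{1}{n}H(W_1,\dots,W_K|Z_1^n,\dots,Z_M^n) \geq \sum_l [I(X_l;Y_{\rho(l)l}) - I(X_l;Z_l)]^+$, where the positive-part truncation corresponds to simply not using sub-channels in which the eavesdropper dominates the best user.

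For the converse, the main work is to show no scheme beats $\sum_l [I(X_l;Y_{\rho(l)l}) - I(X_l;Z_l)]^+$. Starting from Fano and the secrecy constraint, I would bound $\sum_k R_k$ by (roughly) $\frac{1}{n}\sum_k [I(W_k;Y_k^n) - I(W_k;Z^n)] + n\epsilon_n$, and then use independence of sub-channels to split the mutual information terms across $l$. The key structural step is a single-letterization within each sub-channel: for each $l$, identify an auxiliary random variable and use the less-noisiness order in sub-channel $l$ to argue that the total secrecy contribution of sub-channel $l$ — summed over all users whose message traverses it — is at most $[I(X_l;Y_{\rho(l)l}) - I(X_l;Z_l)]^+$. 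Intuitively, because user $\rho(l)$ is the least noisy, any information that is secret from the eavesdropper and reliably conveyed to any user in sub-channel $l$ is also decodable by $\rho(l)$, so the per-sub-channel secret rate cannot exceed that of the single point-to-point wiretap channel to $\rho(l)$; and the less-noisiness (rather than full degradedness) is exactly what one needs to handle the auxiliary variable in the $I(U;Y_{\rho(l)l}) - I(U;Z_l)$ difference without it being negative. The positive part then enters because a sub-channel with $I(X_l;Z_l) \geq I(X_l;Y_{\rho(l)l})$ for all inputs contributes nothing. Finally I would optimize over the (product) input distribution; the factorization $p(x_1,\dots,x_M) = \prod_l p(x_l)$ in the statement follows because the bound decouples across $l$ and each term depends only on the marginal $p(x_l)$.

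The hard part will be the per-sub-channel single-letterization in the converse — in particular, handling the sum over users sharing a sub-channel and showing their combined secret rate collapses to the single best-user wiretap rate, using only the less-noisiness order (which, unlike degradedness, is an ordering of mutual-information functionals rather than of channels). This is where one typically invokes a Csiszár–Körner-type sum identity together with the less-noisiness inequality applied to the appropriate auxiliaries; care is needed to ensure the auxiliary chosen is compatible with the Markov structure $U \to X_l \to (Y_{1l},\dots,Y_{Kl},Z_l)$ in every sub-channel simultaneously. I expect the achievability to be essentially a specialization of the coding scheme already used for Theorem~\ref{common_less_noisier_parallel} (restricted to one user per sub-channel), so the novelty and the difficulty both reside in the converse.
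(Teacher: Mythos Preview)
Your achievability plan matches the paper's: it simply cites the wiretap-channel achievability result and sends only to user $\rho(l)$ in sub-channel $l$.

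For the converse, your high-level intuition is right, but you are missing the specific device the paper uses to dissolve exactly the difficulty you flag. You propose to start from $\sum_k [I(W_k;Y_k^n)-I(W_k;Z^n)]$, decompose each term across sub-channels via Csisz\'ar--K\"orner, and then, within each sub-channel $l$, combine the $K$ resulting pieces into a single $[I(X_l;Y_{\rho(l)l})-I(X_l;Z_l)]^+$. The obstacle is that after the decomposition the $k$th term carries conditioning on $Y_k^{i-1}$ (and related variables), so the $K$ pieces at a fixed $(l,i)$ have \emph{different} conditioning sets, and there is no clean inequality that adds them up to the single best-user wiretap term. You correctly label this ``the hard part,'' but do not propose a mechanism to resolve it.

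The paper avoids this entirely by reversing the order of operations. It first defines the genie receiver $\tilde{Y}^n=(\tilde{Y}_1^n,\ldots,\tilde{Y}_M^n)$ with $\tilde{Y}_l^n=Y_{\rho(l)l}^n$, and proves the lemma $I(W_k;Y_k^n)\leq I(W_k;\tilde{Y}^n)$ for every $k$ (via a double Csisz\'ar--K\"orner expansion and the less-noisiness of $\tilde{Y}_l$ over $Y_{kl}$). By Fano this yields $H(W_1,\ldots,W_K\mid \tilde{Y}^n)\leq K\epsilon_n$, so the multi-user converse collapses to a \emph{single} wiretap converse with message $(W_1,\ldots,W_K)$, receiver $\tilde{Y}^n$, and eavesdropper $Z^n$. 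Only then does the paper apply the Csisz\'ar--K\"orner decomposition across $(i,l)$, drop the sub-channels where the eavesdropper dominates, and push through to $X_l(i)$ using less noisiness. Because there is now one receiver and one message tuple, no ``sum over users within a sub-channel'' ever arises. Your plan would benefit from inserting this genie step before, rather than after, the sub-channel decomposition.
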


\begin{Remark}
Theorem~\ref{sum_secrecy_capacity} implies that the sum secrecy
capacity is achieved by sending information only to the strongest
user in each sub-channel. As in
Theorem~\ref{common_less_noisier_parallel}, here also, the use of
independent inputs for each sub-channel is capacity-achieving.
\end{Remark}

As mentioned earlier, since the class of parallel multi-receiver
wiretap channels with a less noisiness order in each sub-channel
contains the class of parallel degraded multi-receiver wiretap
channels studied in \cite{Broadcasting_Wornell},
Theorem~\ref{sum_secrecy_capacity} can be specialized to give the
sum secrecy capacity of the latter class of channels as well. This
result was originally obtained in \cite{Broadcasting_Wornell}.
This is stated in the following corollary. Since the proof of this
corollary is similar to the proof of
Corollary~\ref{corollary_degraded_common}, we omit its proof.

\begin{Cor}
The sum secrecy capacity of the parallel degraded multi-receiver
wiretap channel is given by
\begin{align}
\max\sum_{l=1}^M I(X_{l};Y_{\rho(l)l}|Z_l)
\end{align}
where the maximization is over all input distributions of the form
$ p(x_1,\ldots,x_M)=\prod_{l=1}^M p(x_l) $ and $\rho(l)$ denotes
the index of the strongest user in the $l$th sub-channel in the
sense that
\begin{align}
X_l\rightarrow Y_{\rho(l)l}\rightarrow Y_{kl}
\end{align}
for all input distributions on $X_{l}$ and any
$k\in\{1,\ldots,K\}$.
\end{Cor}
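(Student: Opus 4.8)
The plan is to derive the final corollary (the sum secrecy capacity of the parallel degraded multi-receiver wiretap channel) directly from Theorem~\ref{sum_secrecy_capacity}, exactly as Corollary~\ref{corollary_degraded_common} was derived from Theorem~\ref{common_less_noisier_parallel}. The key observation is that the degradedness order in each sub-channel, as described in (\ref{degradedness_order}), implies the less noisiness order required by Theorem~\ref{sum_secrecy_capacity}: whenever $X_l\rightarrow Y_{\pi_l(i)}\rightarrow Y_{\pi_l(j)}$ for $i<j$ along the Markov chain, then $I(U;Y_{\pi_l(i)})\geq I(U;Y_{\pi_l(j)})$ for any $U$ with $U\rightarrow X_l\rightarrow(Y_{1l},\ldots,Y_{Kl},Z_l)$, by the data processing inequality. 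Hence every parallel degraded multi-receiver wiretap channel is a parallel multi-receiver wiretap channel with a less noisiness order in each sub-channel, the strongest-user index $\rho(l)$ is well-defined, and the capacity expression of Theorem~\ref{sum_secrecy_capacity} applies verbatim.

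It then remains to rewrite $[I(X_l;Y_{\rho(l)l})-I(X_l;Z_l)]^+$ as $I(X_l;Y_{\rho(l)l}|Z_l)$ under the degradedness assumption, and this is where the only real content lies. I would split into two cases according to the relative position of $Z_l$ and $Y_{\rho(l)l}$ in the Markov chain of the $l$th sub-channel. If $X_l\rightarrow Y_{\rho(l)l}\rightarrow Z_l$, then $I(X_l;Y_{\rho(l)l})\geq I(X_l;Z_l)$, so the $[\cdot]^+$ is inactive, and the chain rule together with the Markov relation $I(X_l;Z_l|Y_{\rho(l)l})=0$ gives $I(X_l;Y_{\rho(l)l})-I(X_l;Z_l)=I(X_l;Y_{\rho(l)l}|Z_l)$. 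If instead $X_l\rightarrow Z_l\rightarrow Y_{\rho(l)l}$, then $Y_{\rho(l)l}$ is degraded with respect to $Z_l$; but since $\rho(l)$ is the strongest user, every $Y_{kl}$ is then also degraded with respect to $Z_l$, so the eavesdropper sees at least as much as every legitimate user in that sub-channel. In this case $I(X_l;Y_{\rho(l)l})\leq I(X_l;Z_l)$ makes the $[\cdot]^+$ equal to zero, and simultaneously $I(X_l;Y_{\rho(l)l}|Z_l)=0$ by the Markov relation $X_l\rightarrow Z_l\rightarrow Y_{\rho(l)l}$; thus both sides agree. Summing over $l$ and taking the maximum over product input distributions $p(x_1,\ldots,x_M)=\prod_{l=1}^M p(x_l)$ completes the identification.

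Since the paper explicitly says ``Since the proof of this corollary is similar to the proof of Corollary~\ref{corollary_degraded_common}, we omit its proof,'' the expected write-up is essentially the one-line remark that the manipulation $[I(X_l;Y_{\rho(l)l})-I(X_l;Z_l)]^+ = I(X_l;Y_{\rho(l)l}|Z_l)$ follows from the per-sub-channel degradedness precisely as in the proof of Corollary~\ref{corollary_degraded_common}. The only mild subtlety to flag is ensuring $\rho(l)$ is consistently defined: one must check that the two characterizations of ``strongest user'' — via the less noisiness ordering $I(U;Y_{kl})\leq I(U;Y_{\rho(l)l})$ and via the Markov chain $X_l\rightarrow Y_{\rho(l)l}\rightarrow Y_{kl}$ — coincide for these channels, which again is immediate from data processing applied to the per-sub-channel degradedness chain. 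No genuine obstacle arises; the work is bookkeeping, and the substantive step (the $[\cdot]^+$ simplification) is already carried out in Appendix~\ref{proof_of_corollary_degraded_common}.
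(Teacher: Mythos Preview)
Your proposal is correct and follows exactly the approach the paper intends: specialize Theorem~\ref{sum_secrecy_capacity} to the degraded case and then reduce $[I(X_l;Y_{\rho(l)l})-I(X_l;Z_l)]^+$ to $I(X_l;Y_{\rho(l)l}|Z_l)$ via the two-case Markov-chain argument of Appendix~\ref{proof_of_corollary_degraded_common}. Your additional remark on the consistency of the two definitions of $\rho(l)$ is a helpful sanity check but not needed for the argument.
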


So far, we have considered special classes of parallel
multi-receiver wiretap channels for specific scenarios and
obtained results similar to \cite{Broadcasting_Wornell}, only for
broader classes of channels. In particular, in
Section~\ref{sec:parallel_com_message}, we focused on the
transmission of a common message, whereas in
Section~\ref{subsec:parallel}, we focused on the sum secrecy
capacity when only independent messages are transmitted to all
users. In the subsequent sections, we will specialize our channel
model, but we will develop stronger and more comprehensive
results. In particular, we will let the transmitter send both
common and independent messages, and we will characterize the
entire secrecy capacity region.

\section{Parallel Degraded Multi-receiver Wiretap Channels}

We consider a special class of parallel degraded multi-receiver
wiretap channels with two sub-channels, two users and one
eavesdropper. We consider the most general scenario where each
user receives both an independent message and a common message.
All messages are to be kept secret from the eavesdropper.

For the special class of parallel degraded multi-receiver wiretap
channels in consideration, there is a specific degradation order
in each sub-channel. In particular, we have the following Markov
chain
\begin{align}
X_1\rightarrow Y_{11}  \rightarrow Y_{21}\rightarrow Z_1
\end{align}
in the first sub-channel, and the following Markov chain
\begin{align}
X_2\rightarrow Y_{22} \rightarrow Y_{12}\rightarrow Z_2
\end{align}
in the second sub-channel. Consequently, although in each
sub-channel, one user is degraded with respect to the other one,
this does not hold for the overall channel, and the overall
channel is not degraded for any user. The corresponding channel
transition probability is
\begin{align}
p(y_{11}|x_{1})p(y_{21}|y_{11})p(z_{1}|y_{21})p(y_{22}|x_{2})
p(y_{12}|y_{22})p(z_{2}|y_{12})
\label{channel_transition_degraded_two_users}
\end{align}
If we ignore the eavesdropper by setting $Z_1=Z_2=\phi$, this
channel model reduces to the broadcast channel that was studied in
\cite{Poltyrev, Product_Broadcast}.

A $(2^{nR_0},2^{nR_1},2^{nR_2},n)$ code for this channel consists
of three message sets, $\mathcal{W}_0=\{1,\ldots,\break
2^{nR_0}\}$, $\mathcal{W}_j=\{1,\ldots,2^{nR_j}\}, j=1,2$, one
encoder
$f:\mathcal{W}_0\times\mathcal{W}_1\times\mathcal{W}_2\rightarrow
\mathcal{X}_1^n\times\mathcal{X}_2^n$, two decoders one at each
legitimate receiver
$g_j:\mathcal{Y}_{j1}^n\times\mathcal{Y}_{j2}^n\rightarrow
\mathcal{W}_0\times\mathcal{W}_j, j=1,2$. The probability of error
is defined as $P_e^n=\max_{j=1,2}\Pr\left[g_j
(Y_{j1}^n,Y_{j2}^n)\neq (W_0,W_j)\right]$. A rate tuple
$(R_0,R_1,R_2)$ is said to be achievable if there exists a code
such that $\lim_{n\rightarrow\infty}P_e^n=0$ and
\begin{align}
\lim_{n\rightarrow\infty}\frac{1}{n}
H(\mathcal{S}(W)|Z_1^n,Z_2^n)\geq
\sum_{k\in\mathcal{S}(W)}R_k,\quad \forall ~\mathcal{S}(W)
\end{align}
where $\mathcal{S}(W)$ denotes any subset of $\{W_0,W_1,W_2\}$.
The secrecy capacity region is the closure of all achievable
secrecy rate tuples.

The secrecy capacity region of this parallel degraded
multi-receiver wiretap channel is characterized by the following
theorem whose proof is given in
Appendix~\ref{proof_of_product_wiretap_channel}.
\begin{Theo}
\label{Theorem_product_wiretap_channel} The secrecy capacity
region of the parallel degraded multi-receiver wiretap channel
defined by (\ref{channel_transition_degraded_two_users}) is the
union of the rate tuples $(R_0,R_1,R_2)$ satisfying
\begin{align}
R_0& \leq I(U_1;Y_{11}|Z_1)+I(U_2;Y_{12}|Z_2)\label{product_common_rate_1}\\
R_0& \leq I(U_1;Y_{21}|Z_1)+I(U_2;Y_{22}|Z_2)\label{product_common_rate_2}\\
R_0+R_1 & \leq I(X_1;Y_{11}|Z_1)+I(U_2;Y_{12}|Z_2)\label{product_common_private_rate_1}\\
R_0+R_2 & \leq I(X_2;Y_{22}|Z_2)+I(U_1;Y_{21}|Z_1)\label{product_common_private_rate_2}\\
R_0+R_1+R_2 & \leq
I(X_1;Y_{11}|Z_1)+I(U_2;Y_{12}|Z_2)+I(X_2;Y_{22}|U_2,Z_2)\label{product_sum_rate_1}\\
R_0+R_1+R_2 & \leq
I(X_2;Y_{22}|Z_2)+I(U_1;Y_{21}|Z_1)+I(X_1;Y_{11}|U_1,Z_1)
\label{product_sum_rate_2}
\end{align}
where the union is over all distributions of the form
$p(u_1,u_2,x_1,x_2)=p(u_1,x_1)p(u_2,x_2) $.
\end{Theo}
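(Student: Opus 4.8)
The plan is to establish achievability and converse separately. As a preliminary reduction useful for both directions: since the messages are uniform and independent, the single leakage bound $\frac{1}{n}I(W_0,W_1,W_2;Z_1^n,Z_2^n)\to0$ already forces $\frac{1}{n}H(\mathcal{S}(W)|Z_1^n,Z_2^n)\ge\sum_{k\in\mathcal{S}(W)}R_k-o(1)$ for every subset $\mathcal{S}(W)$, because $I(\mathcal{S}(W);Z_1^n,Z_2^n)\le I(W_0,W_1,W_2;Z_1^n,Z_2^n)$; hence for achievability it suffices to drive the total leakage to zero, and for the converse the total leakage of any achievable code is $o(n)$. Note also that with $Z_1=Z_2=\phi$ the channel is the product degraded broadcast channel of \cite{Poltyrev, Product_Broadcast}, whose capacity region has exactly the form of (\ref{product_common_rate_1})--(\ref{product_sum_rate_2}) with the $|Z_l$ removed; the degradedness in each sub-channel makes $I(\cdot;\cdot|Z_l)=I(\cdot;\cdot)-I(\cdot;Z_l)$, which is why the secrecy region should be the non-secrecy region with the eavesdropper's mutual informations subtracted layer by layer.

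For achievability I would overlay wiretap (stochastic) encoding on the superposition scheme that achieves the product broadcast channel. Since user~1 is the stronger receiver in the first sub-channel and user~2 in the second, in the $l$th sub-channel I use a two-layer code with cloud center $U_l$ and satellite $X_l$, $U_l\to X_l$: both users decode the cloud codeword (sub-channel degradedness guarantees the stronger user also recovers whatever the weaker one does), and only the stronger user of that sub-channel decodes the satellite codeword. Each layer associates every message index with a sub-codebook of dummy codewords, the dummy rates chosen to exceed $I(U_l;Z_l)$ at the cloud level and $I(X_l;Z_l|U_l)$ at the satellite level, so the surviving per-layer secure rates are $I(U_1;Y_{21}|Z_1)$, $I(X_1;Y_{11}|U_1,Z_1)$ in sub-channel~1 and $I(U_2;Y_{12}|Z_2)$, $I(X_2;Y_{22}|U_2,Z_2)$ in sub-channel~2. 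The common message $W_0$ is split across the two sub-channels; beyond the parts placed in the cloud layers, a further part of $W_0$ is conveyed jointly by the two satellite layers -- user~1 reading it off sub-channel~1 and user~2 off sub-channel~2 -- which is exactly what lets the larger decoding capability each user has in its own strong sub-channel be exploited for $W_0$; the private messages $W_1$ and $W_2$ are split among the layers that their intended receiver can decode. Reliability follows from joint-typicality decoding together with the sub-channel degradedness. For secrecy, independence of the two sub-channels and the choice of dummy rates give $\frac{1}{n}I(W_0,W_1,W_2;Z_1^n,Z_2^n)\to0$, which by the preliminary reduction implies every required equivocation bound. Finally, Fourier--Motzkin elimination of the rate-splitting variables, and taking the union over all $p(u_1,x_1)p(u_2,x_2)$ (including the degenerate choices $U_1=\phi$ or $U_2=\phi$, i.e.\ devoting an entire sub-channel to its stronger user), yields (\ref{product_common_rate_1})--(\ref{product_sum_rate_2}).

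For the converse, fix an achievable $(R_0,R_1,R_2)$. For each of the six inequalities I would begin from Fano's inequality for the receiver(s) required to decode the relevant messages, then use the total-leakage-is-$o(n)$ fact to introduce the terms $-I(\cdot;Z_1^n,Z_2^n)$, and then single-letterize using the Csiszar sum identity and the degradedness Markov chains $Y_{11}\to Y_{21}\to Z_1$ and $Y_{22}\to Y_{12}\to Z_2$, which let the eavesdropper be subtracted cleanly within each sub-channel. The auxiliaries are identified in the usual manner -- $U_{1,i}$ from $(W_0,W_2)$ together with appropriate sub-channel-1 outputs (a past segment of $Y_{21}$ and a future segment of $Z_1$), and symmetrically $U_{2,i}$ from $(W_0,W_1)$ and sub-channel-2 outputs -- so that $U_{1,i}\to X_{1,i}\to(Y_{11,i},Y_{21,i},Z_{1,i})$ and $U_{2,i}\to X_{2,i}\to(Y_{22,i},Y_{12,i},Z_{2,i})$; after a uniform time-sharing index and a convexity argument one obtains the six bounds for some joint distribution $p(u_1,u_2,x_1,x_2)$ obeying these per-sub-channel Markov chains and the product channel law (\ref{channel_transition_degraded_two_users}). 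Since the right-hand side of each of the six inequalities involves quantities from only one sub-channel, that joint distribution may be replaced by the product $p(u_1,x_1)p(u_2,x_2)$ of its per-sub-channel marginals without altering any bound, giving the stated product form.

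I expect the obstacle to be two-fold. On the achievability side it is arranging the common-message split -- particularly the component carried jointly across the two satellite layers, together with the dummy-rate allocation -- so that the Fourier--Motzkin elimination delivers precisely the two sum-rate bounds (\ref{product_sum_rate_1})--(\ref{product_sum_rate_2}) and the pure-$R_0$ bounds (\ref{product_common_rate_1})--(\ref{product_common_rate_2}) with no spurious inequalities, while simultaneously certifying that the leakage vanishes for every subset of messages. On the converse side the delicate inequalities are the sum-rate bounds (\ref{product_sum_rate_1})--(\ref{product_sum_rate_2}): there the common message $W_0$ couples the two sub-channels, and one must combine the degradedness in the sub-channel where a given user is strong with the reversed degradedness in the other, keep the auxiliaries local to their own sub-channels, and still recover the asymmetric single-letter expressions $I(X_1;Y_{11}|Z_1)+I(U_2;Y_{12}|Z_2)+I(X_2;Y_{22}|U_2,Z_2)$ (and its mirror) exactly.
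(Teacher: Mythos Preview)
Your plan is broadly on target, but it diverges from the paper's execution in two respects that matter.

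\textbf{Achievability.} The paper does not run a single rate-splitting scheme through Fourier--Motzkin. Instead it decomposes the boundary of the region into three surfaces (corresponding to $U_1=\phi$, $U_2=\phi$, and general $(U_1,U_2)$) and proves each separately. For the general surface the key device is \emph{joint decoding} of the common message across both sub-channels: the same index $w_0$ is placed in the cloud codebooks of \emph{both} sub-channels, and each user declares $\hat w_0$ only if it finds a typical cloud codeword in \emph{each} sub-channel simultaneously. This directly yields the ``min of sums'' constraints (\ref{product_common_rate_1})--(\ref{product_common_rate_2}), whereas pure splitting of $W_0$ across sub-channels gives only the weaker ``sum of mins''. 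Your extra satellite-layer piece of $W_0$ may ultimately recover the gap after FME, but you should know that the paper sidesteps this bookkeeping entirely via joint decoding, and that is the cleaner route.

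\textbf{Converse.} Your auxiliary identification is not quite right. The paper takes
\[
U_{1,i}=\bigl(W_0,W_2,\,Y_{12}^n,\,Y_{11}^{i-1},\,Z_{1,i+1}^n\bigr),\qquad
U_{2,i}=\bigl(W_0,W_1,\,Y_{21}^n,\,Y_{22}^{i-1},\,Z_{2,i+1}^n\bigr).
\]
Two differences from your description: the past segment in $U_{1,i}$ is of $Y_{11}$, not $Y_{21}$; and, crucially, $U_{1,i}$ contains the \emph{entire} user-1 output on the \emph{other} sub-channel, $Y_{12}^n$, not only sub-channel-1 variables. This cross-sub-channel component is forced by the chain-rule decomposition $I(W_0;Y_{11}^n,Y_{12}^n|Z^n)=I(W_0;Y_{12}^n|Z^n)+I(W_0;Y_{11}^n|Y_{12}^n,Z^n)$: the conditioning on $Y_{12}^n$ in the second term must be absorbed into $U_{1,i}$. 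Without it, the same $U_{2,i}$ cannot simultaneously produce $I(U_{2,i};Y_{12,i}|Z_{2,i})$ in the common-rate bound and $I(X_{2,i};Y_{22,i}|U_{2,i},Z_{2,i})$ in the sum-rate bound, which is precisely where your anticipated difficulty with (\ref{product_sum_rate_1})--(\ref{product_sum_rate_2}) would bite. Your final observation---that the bounds depend only on the per-sub-channel marginals, so the product form $p(u_1,x_1)p(u_2,x_2)$ suffices---is exactly how the paper closes the converse.
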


\begin{Remark}
If we let the encoder use an arbitrary joint distribution
$p(u_1,x_1,u_2,x_2)$ instead of the ones that satisfy
$p(u_1,x_1,u_2,x_2)=p(u_1,x_1)p(u_2,x_2)$, this would not enlarge
the region given in Theorem~\ref{Theorem_product_wiretap_channel},
because all rate expressions in
Theorem~\ref{Theorem_product_wiretap_channel} depend on either
$p(u_1,x_1)$ or $p(u_2,x_2)$ but not on the joint distribution
$p(u_1,u_2,x_1,x_2)$.
\end{Remark}

\begin{Remark}
The capacity achieving scheme uses either superposition coding in
both sub-channels or superposition coding in one of the
sub-channels, and a dedicated transmission in the other one. We
again note that this superposition coding is different from the
standard one~\cite{cover_book} in the sense that it associates
each message with many codewords by using stochastic encoding at
each layer of the code due to secrecy concerns.
\end{Remark}

\begin{Remark}
If we set $Z_1=Z_2=\phi$, we recover the capacity region of the
underlying broadcast channel~\cite{Product_Broadcast}.
\end{Remark}

\begin{Remark}
If we disable one of the sub-channels, say the first one, by
setting $Y_{11}=Y_{21}=Z_1=\phi$, the parallel degraded
multi-receiver wiretap channel of this section reduces to the
degraded multi-receiver wiretap channel of
Section~\ref{sec:degraded_mr_wt}. The corresponding secrecy
capacity region is then given by the union of the rate tuples
$(R_0,R_1,R_2)$ satisfying
\begin{align}
R_0+R_1&\leq I(U_2;Y_{12}|Z_2)\\
R_0+R_1+R_2&\leq I(X_2;Y_{22}|U_2,Z_2)+I(U_2;Y_{12}|Z_2)
\end{align}
where the union is over all $p(u_2,x_2)$. This region can be
obtained through either Corollary~\ref{degraded_multiuser_wiretap}
or Theorem~\ref{Theorem_product_wiretap_channel} (by setting
$Y_{11}=Y_{21}=Z_1=\phi$ and eliminating redundant bounds)
implying the consistency of the results.
\end{Remark}

Next, we consider the scenario where the transmitter does not send
a common message, and find the secrecy capacity region.
\begin{Cor}
The secrecy capacity region of the parallel degraded
multi-receiver wiretap channel defined through
(\ref{channel_transition_degraded_two_users}) with no common
message is given by the union of the rate pairs $(R_1,R_2)$
satisfying
\begin{align}
R_1 & \leq I(X_1;Y_{11}|Z_1)+I(U_2;Y_{12}|Z_2)\\
R_2 & \leq I(X_2;Y_{22}|Z_2)+I(U_1;Y_{21}|Z_1)\\
R_1+R_2 & \leq
I(X_1;Y_{11}|Z_1)+I(U_2;Y_{12}|Z_2)+I(X_2;Y_{22}|U_2,Z_2)\\
R_1+R_2 & \leq
I(X_2;Y_{22}|Z_2)+I(U_1;Y_{21}|Z_1)+I(X_1;Y_{11}|U_1,Z_1)
\end{align}
where the union is over all distributions of the form
$p(u_1)p(u_2)p(x_1|u_1)p(x_2|u_2) $.
\end{Cor}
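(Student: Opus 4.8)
The plan is to obtain this corollary directly from Theorem~\ref{Theorem_product_wiretap_channel} by specializing it to the case $R_0=0$. First I would observe that the no-common-message problem is exactly the common-message problem of Theorem~\ref{Theorem_product_wiretap_channel} restricted to $|\mathcal{W}_0|=1$: a $(2^{nR_1},2^{nR_2},n)$ code is the same object as a $(2^{nR_0},2^{nR_1},2^{nR_2},n)$ code with $R_0=0$, the reliability requirement $P_e^n\to 0$ is unchanged, and since $W_0$ is then deterministic the secrecy constraints over subsets $\mathcal{S}(W)\subseteq\{W_0,W_1,W_2\}$ collapse to the secrecy constraints over subsets of $\{W_1,W_2\}$. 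Hence the secrecy capacity region of the channel with no common message equals the slice $\{(R_1,R_2):(0,R_1,R_2)\in\mathcal{C}\}$, where $\mathcal{C}$ denotes the region of Theorem~\ref{Theorem_product_wiretap_channel}.

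Next I would compute this slice. Fixing an admissible distribution $p(u_1,x_1)p(u_2,x_2)$ and setting $R_0=0$, the bounds (\ref{product_common_rate_1}) and (\ref{product_common_rate_2}) become $0\le I(U_1;Y_{11}|Z_1)+I(U_2;Y_{12}|Z_2)$ and $0\le I(U_1;Y_{21}|Z_1)+I(U_2;Y_{22}|Z_2)$, which hold automatically since conditional mutual informations are non-negative; thus these two bounds are redundant. The remaining bounds (\ref{product_common_private_rate_1})--(\ref{product_sum_rate_2}) reduce to exactly the four inequalities in the statement. Since taking a union commutes with this slicing, $\{(R_1,R_2):(0,R_1,R_2)\in\mathcal{C}\}$ is the union over all $p(u_1,x_1)p(u_2,x_2)$ of the four-inequality region, and $p(u_1,x_1)p(u_2,x_2)=p(u_1)p(u_2)p(x_1|u_1)p(x_2|u_2)$ is merely a rewriting of the same family of distributions, which matches the claimed form.

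To keep the argument self-contained I would then spell out both inclusions. For achievability: given $(R_1,R_2)$ in the claimed region with an associated distribution, the tuple $(0,R_1,R_2)$ lies in $\mathcal{C}$ by Theorem~\ref{Theorem_product_wiretap_channel}, so there is a sequence of common-message codes achieving it; discarding the trivial common message yields no-common-message codes achieving $(R_1,R_2)$ while still meeting all secrecy constraints. For the converse: any achievable $(R_1,R_2)$ in the no-common-message setting, viewed as an achievable $(0,R_1,R_2)$ in the common-message setting, must by the converse part of Theorem~\ref{Theorem_product_wiretap_channel} satisfy (\ref{product_common_rate_1})--(\ref{product_sum_rate_2}) for some admissible $p(u_1,x_1)p(u_2,x_2)$, and by the slicing computation above this forces $(R_1,R_2)$ to satisfy the four inequalities for that same distribution. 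There is essentially no obstacle here beyond bookkeeping; the only point that warrants an explicit word is the redundancy of (\ref{product_common_rate_1})--(\ref{product_common_rate_2}) at $R_0=0$, which is immediate from the non-negativity of conditional mutual information.
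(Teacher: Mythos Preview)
Your argument is correct, but it differs from the paper's. The paper does not simply take the slice at $R_0=0$; instead it invokes the rate-transfer principle, substituting $R_0=\alpha+\beta$, $R_1'=R_1+\alpha$, $R_2'=R_2+\beta$ with $\alpha,\beta\ge 0$ into the inequalities of Theorem~\ref{Theorem_product_wiretap_channel} and performing Fourier--Motzkin elimination on $\alpha,\beta$. In principle this projection could be larger than the $R_0=0$ slice, since one is allowing part of each private message to be carried as common message; here, however, the two regions coincide because in (\ref{product_common_private_rate_1})--(\ref{product_sum_rate_2}) the substituted variables satisfy $\beta+R_1'\le B_1$, $\alpha+R_2'\le B_2$, and $R_1'+R_2'\le C_j$, so positive $\alpha,\beta$ never help and the choice $\alpha=\beta=0$ is always optimal. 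Your slicing argument is therefore the cleaner route for this particular region: it bypasses the elimination entirely and relies only on the observation that a no-common-message code is literally a common-message code with $|\mathcal{W}_0|=1$, together with the vacuousness of (\ref{product_common_rate_1})--(\ref{product_common_rate_2}) at $R_0=0$. The paper's approach, on the other hand, is the standard machinery one would reach for when the equivalence of the slice and the projection is not obvious a priori, and it makes explicit the operational fact that common rate can be traded for either user's private rate.
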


\begin{proof}
Since the common message rate can be exchanged with any user's
independent message rate, we set
$R_0=\alpha+\beta,R_1^{\prime}=R_1+\alpha, R_2^{\prime}=R_2+\beta$
where $\alpha,\beta\geq 0$. Plugging these expressions into the
rates in Theorem~\ref{Theorem_product_wiretap_channel} and using
Fourier-Moztkin elimination, we get the region given in the
corollary.
\end{proof}

\begin{Remark}
If we disable the eavesdropper by setting $Z_{11}=Z_{22}=\phi$, we
recover the capacity region of the underlying broadcast channel
without a common message, which was found originally in
\cite{Poltyrev}.
\end{Remark}

At this point, one may ask whether the results of this section can
be extended to arbitrary numbers of users and parallel
sub-channels. Once we have
Theorem~\ref{Theorem_product_wiretap_channel}, the extension of
the results to an arbitrary number of parallel sub-channels is
rather straightforward. Let us consider the parallel degraded
multi-receiver wiretap channel with $M$ sub-channels, and in each
sub-channel, we have either the following Markov chain
\begin{align}
X_l\rightarrow Y_{1l} \rightarrow Y_{2l} \rightarrow Z_l
\label{first_Markov_chain}
\end{align}
or this Markov chain
\begin{align}
X_l\rightarrow Y_{2l} \rightarrow Y_{1l} \rightarrow Z_l
\label{second_Markov_chain}
\end{align}
for any $l\in\{1,\ldots,M\}$. We define the set of indices
$\mathcal{S}_1$ (resp. $\mathcal{S}_2$) as those where for every
$l\in\mathcal{S}_1$ (resp. $l\in\mathcal{S}_2$), the Markov chain
in (\ref{first_Markov_chain}) (resp. in
(\ref{second_Markov_chain})) is satisfied. Then, using
Theorem~\ref{Theorem_product_wiretap_channel}, we obtain the
secrecy capacity region of the channel with two users and $M$
sub-channels as given in the following theorem which is proved in
Appendix~\ref{proof_of_D_PWC_M}.

\begin{Theo}
\label{Theorem_D_PWC_M} The secrecy capacity region of the
parallel degraded multi-receiver wiretap channel with $M$
sub-channels, where each sub-channel satisfies either
(\ref{first_Markov_chain}) or (\ref{second_Markov_chain}) is given
by the union of the rate tuples $(R_0,R_1,R_2)$ satisfying
\begin{align}
R_0&\leq \sum_{l=1}^M I(U_l;Y_{1l}|Z_l)\\
R_0&\leq \sum_{l=1}^M I(U_l;Y_{2l}|Z_l)\\
R_0+R_1 &\leq \sum_{l\in\mathcal{S}_1}I(X_l;Y_{1l}|Z_l)+\sum_{l\in\mathcal{S}_2} I(U_l;Y_{1l}|Z_l)\\
R_0+R_2 &\leq \sum_{l\in\mathcal{S}_2}I(X_l;Y_{2l}|Z_l)+\sum_{l\in\mathcal{S}_1} I(U_l;Y_{2l}|Z_l)\\
R_0+R_1+R_2 &\leq \sum_{l\in\mathcal{S}_1}I(X_l;Y_{1l}|Z_l)+\sum_{l\in\mathcal{S}_2} I(U_l;Y_{1l}|Z_l)+\sum_{l\in\mathcal{S}_2}I(X_l;Y_{2l}|U_l,Z_l)\\
R_0+R_1+R_2 &\leq
\sum_{l\in\mathcal{S}_2}I(X_l;Y_{2l}|Z_l)+\sum_{l\in\mathcal{S}_1}
I(U_l;Y_{2l}|Z_l)+\sum_{l\in\mathcal{S}_1}I(X_l;Y_{1l}|U_l,Z_l)
\end{align}
where the union is over all distributions of the form
$\prod_{l=1}^M p(u_l,x_l)$.
\end{Theo}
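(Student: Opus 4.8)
The plan is to reduce the $M$-sub-channel problem to the two-sub-channel problem already solved in Theorem~\ref{Theorem_product_wiretap_channel} by grouping the sub-channels according to their degradation direction, and then to identify the resulting region with the single-letter expression claimed here. First I would lump the sub-channels: for $i=1,2$ define the vector input $X_{\mathcal{S}_i}=(X_l)_{l\in\mathcal{S}_i}$ and the vector outputs $Y_{j,\mathcal{S}_i}=(Y_{jl})_{l\in\mathcal{S}_i}$, $j=1,2$, and $Z_{\mathcal{S}_i}=(Z_l)_{l\in\mathcal{S}_i}$. Because the sub-channels are mutually independent and every $l\in\mathcal{S}_1$ obeys (\ref{first_Markov_chain}), composing the kernels coordinatewise gives $X_{\mathcal{S}_1}\rightarrow Y_{1,\mathcal{S}_1}\rightarrow Y_{2,\mathcal{S}_1}\rightarrow Z_{\mathcal{S}_1}$, and symmetrically, from (\ref{second_Markov_chain}), $X_{\mathcal{S}_2}\rightarrow Y_{2,\mathcal{S}_2}\rightarrow Y_{1,\mathcal{S}_2}\rightarrow Z_{\mathcal{S}_2}$. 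Hence, treating $\mathcal{S}_1$ as ``sub-channel~$1$'' and $\mathcal{S}_2$ as ``sub-channel~$2$,'' the channel is \emph{exactly} of the form (\ref{channel_transition_degraded_two_users}). Since a code for the $M$-sub-channel channel is literally a code for this two-sub-channel channel and conversely (only the labeling of the input/output alphabets changes), their secrecy capacity regions coincide, and Theorem~\ref{Theorem_product_wiretap_channel} expresses this region as the union, over all $p(u_{\mathcal{S}_1},x_{\mathcal{S}_1})\,p(u_{\mathcal{S}_2},x_{\mathcal{S}_2})$, of the tuples obeying (\ref{product_common_rate_1})--(\ref{product_sum_rate_2}) after the substitutions $U_1\mapsto U_{\mathcal{S}_1}$, $X_1\mapsto X_{\mathcal{S}_1}$, $Y_{11}\mapsto Y_{1,\mathcal{S}_1}$, $Y_{21}\mapsto Y_{2,\mathcal{S}_1}$, $Z_1\mapsto Z_{\mathcal{S}_1}$, $U_2\mapsto U_{\mathcal{S}_2}$, $X_2\mapsto X_{\mathcal{S}_2}$, $Y_{22}\mapsto Y_{2,\mathcal{S}_2}$, $Y_{12}\mapsto Y_{1,\mathcal{S}_2}$, $Z_2\mapsto Z_{\mathcal{S}_2}$. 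It then remains only to show that this union coincides with the region in the statement.

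For the direction that the claimed region is achievable, I would restrict the auxiliaries to the product form $U_{\mathcal{S}_i}=(U_l)_{l\in\mathcal{S}_i}$ with $p(u_{\mathcal{S}_i},x_{\mathcal{S}_i})=\prod_{l\in\mathcal{S}_i}p(u_l,x_l)$. Then the tuples $(U_l,X_l,Y_{1l},Y_{2l},Z_l)$ are mutually independent across $l$, so every conditional mutual information appearing in (\ref{product_common_rate_1})--(\ref{product_sum_rate_2}) splits additively; for instance $I(U_{\mathcal{S}_1};Y_{1,\mathcal{S}_1}|Z_{\mathcal{S}_1})=\sum_{l\in\mathcal{S}_1}I(U_l;Y_{1l}|Z_l)$, $I(X_{\mathcal{S}_1};Y_{1,\mathcal{S}_1}|Z_{\mathcal{S}_1})=\sum_{l\in\mathcal{S}_1}I(X_l;Y_{1l}|Z_l)$, and $I(X_{\mathcal{S}_2};Y_{2,\mathcal{S}_2}|U_{\mathcal{S}_2},Z_{\mathcal{S}_2})=\sum_{l\in\mathcal{S}_2}I(X_l;Y_{2l}|U_l,Z_l)$, and analogously for the remaining terms. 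Substituting these into the six inequalities recovers exactly the six bounds of the theorem, so the claimed region is contained in the secrecy capacity region.

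For the converse direction, I would fix an arbitrary $p(u_{\mathcal{S}_1},x_{\mathcal{S}_1})\,p(u_{\mathcal{S}_2},x_{\mathcal{S}_2})$ and single-letterize each lumped term. Fixing an arbitrary ordering of $\mathcal{S}_2$, set $\tilde U_l$, $l\in\mathcal{S}_2$, to be $U_{\mathcal{S}_2}$ together with $\{Z_{l'}\}_{l'\in\mathcal{S}_2\setminus\{l\}}$ and the preceding weak-user outputs $\{Y_{1l'}\}_{l'\in\mathcal{S}_2,\,l'<l}$, and make the mirror choice on $\mathcal{S}_1$ (using the weak user~$2$ there). Using the factorization of the channel across $l$ one checks the Markov chain $\tilde U_l\rightarrow X_l\rightarrow(Y_{1l},Y_{2l},Z_l)$, and then the chain rule together with ``conditioning reduces entropy'' gives bounds of the type $I(U_{\mathcal{S}_2};Y_{1,\mathcal{S}_2}|Z_{\mathcal{S}_2})\leq\sum_{l\in\mathcal{S}_2}I(\tilde U_l;Y_{1l}|Z_l)$, $I(X_{\mathcal{S}_2};Y_{2,\mathcal{S}_2}|U_{\mathcal{S}_2},Z_{\mathcal{S}_2})\leq\sum_{l\in\mathcal{S}_2}I(X_l;Y_{2l}|\tilde U_l,Z_l)$, $I(X_{\mathcal{S}_1};Y_{1,\mathcal{S}_1}|Z_{\mathcal{S}_1})\leq\sum_{l\in\mathcal{S}_1}I(X_l;Y_{1l}|Z_l)$, and likewise for every term in all six bounds. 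This exhibits the rate tuple as a point of the ``sum'' region evaluated at the (in general correlated) collection of marginals $\{p(\tilde u_l,x_l)\}_l$. Finally, every one of the six resulting bounds is a sum of per-sub-channel quantities, each depending only on the single marginal $p(\tilde u_l,x_l)$; hence replacing the joint law of $(\tilde U_l,X_l)_l$ by the product $\prod_l p(\tilde u_l,x_l)$ of its marginals leaves all six bounds unchanged while putting the distribution into the admissible form $\prod_l p(u_l,x_l)$, so the tuple lies in the region of the theorem.

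The hard part will be the single-letterization in the converse: one must choose per-sub-channel auxiliaries $\tilde U_l$ for which \emph{all six} lumped bounds decompose simultaneously into sums of per-sub-channel, marginal-dependent quantities under one common choice of $\tilde U_l$. In particular, the natural expansion of a ``strong-user'' term such as $I(X_{\mathcal{S}_2};Y_{2,\mathcal{S}_2}|U_{\mathcal{S}_2},Z_{\mathcal{S}_2})$ would bring in past strong-user outputs, whereas the ``weak-user'' term $I(U_{\mathcal{S}_2};Y_{1,\mathcal{S}_2}|Z_{\mathcal{S}_2})$ brings in past weak-user outputs, and reconciling the two requires the within-block degradedness $Y_{2l}\rightarrow Y_{1l}$ for $l\in\mathcal{S}_2$ (and $Y_{1l}\rightarrow Y_{2l}$ for $l\in\mathcal{S}_1$), exactly as in the converse of Theorem~\ref{Theorem_product_wiretap_channel}; everything else---the reduction, the achievability, and the final decorrelation step---is routine. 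The boundary cases $\mathcal{S}_1=\emptyset$ or $\mathcal{S}_2=\emptyset$, where the overall channel is degraded for one of the users, are covered by the same formulas with empty sums and are consistent with Corollary~\ref{degraded_multiuser_wiretap}.
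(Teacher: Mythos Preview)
Your proposal is correct and follows essentially the same route as the paper: lump the sub-channels into the two groups $\mathcal{S}_1,\mathcal{S}_2$, invoke Theorem~\ref{Theorem_product_wiretap_channel} on the resulting two-block channel, get achievability from product-form auxiliaries, and obtain the converse by single-letterizing the vector auxiliaries across sub-channels using the per-sub-channel degradedness, then observing that the final bounds depend only on the marginals $p(\tilde u_l,x_l)$. The only organizational difference is that the paper re-enters the $n$-letter converse of Theorem~\ref{Theorem_product_wiretap_channel} (defining $U_{1,i},U_{2,i}$ in terms of messages and past/future outputs) and then single-letterizes in two stages, whereas you work directly from the \emph{statement} of Theorem~\ref{Theorem_product_wiretap_channel} and single-letterize once over $l$; both lead to the same per-sub-channel auxiliaries (the paper's $V_{l,i}$ carries preceding \emph{strong}-user outputs and future eavesdropper outputs, which is exactly what your ``reconciliation via degradedness'' produces once you convert your preceding weak-user outputs to strong-user ones).
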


We are now left with the question whether these results can be
generalized to an arbitrary number of users. If we consider the
parallel degraded multi-receiver wiretap channel with more than
two sub-channels and an arbitrary number of users, the secrecy
capacity region for the scenario where each user receives a common
message in addition to an independent message does not seem to be
characterizable. Our intuition comes from the fact that, as of
now, the capacity region of the corresponding broadcast channel
without secrecy constraints is unknown \cite{Goldsmith_Effros}.
However, if we consider the scenario where each user receives only
an independent message, i.e., there is no common message, then the
secrecy capacity region may be found, because the capacity region
of the corresponding broadcast channel without secrecy constraints
can be established \cite{Goldsmith_Effros}, although there is no
explicit expression for it in the literature. We expect this
particular generalization to be rather straightforward, and do not
pursue it here.

\section{Sum of Degraded Multi-receiver Wiretap Channels}

We now consider a different multi-receiver wiretap channel which
can be viewed as a sum of two degraded multi-receiver wiretap
channels with two users and one eavesdropper. In this channel
model, the transmitter has two non-intersecting input alphabets,
i.e., $\mathcal{X}_1,\mathcal{X}_2$ with $\mathcal{X}_1\cap
\mathcal{X}_2=\emptyset$, and each receiver has two
non-intersecting alphabets, i.e.,
$\mathcal{Y}_{j1},\mathcal{Y}_{j2}$ with $\mathcal{Y}_{j1}\cap
\mathcal{Y}_{j2}=\emptyset$ for the $j$th user, $j=1,2$, and
$\mathcal{Z}_{1},\mathcal{Z}_{2}$ with $\mathcal{Z}_{1}\cap
\mathcal{Z}_{2}=\emptyset$ for the eavesdropper. The channel is
again memoryless with transition probability
\begin{equation}
p(y_1,y_2,z|x)= \left\{
\begin{array}{ll}
p(y_{11}|x_1)p(y_{21}|y_{11})p(z_1|y_{21})&\textrm{if }
(x,y_1,y_2,z)\in\mathcal{X}_1\times\mathcal{Y}_{11}\times\mathcal{Y}_{21}
\times\mathcal{Z}_1\\
p(y_{22}|x_2)p(y_{12}|y_{22})p(z_2|y_{12})&\textrm{if }
(x,y_1,y_2,z)\in\mathcal{X}_2\times\mathcal{Y}_{21}\times\mathcal{Y}_{22}
\times\mathcal{Z}_2\\
0& \textrm{otherwise}
\end{array}\right.
\end{equation}
where $x\in\mathcal{X}=\mathcal{X}_1\cup\mathcal{X}_2$,
$y_j\in\mathcal{Y}_j=\mathcal{Y}_{j1}\cup\mathcal{Y}_{j2}, j=1,2$
and $z\in\mathcal{Z}=\mathcal{Z}_1\cup\mathcal{Z}_2$. Thus, if the
transmitter chooses to use its first alphabet, i.e.,
$\mathcal{X}_1$, the second user (resp. eavesdropper) receives a
degraded version of user 1's (resp. user 2's) observation.
However, if the transmitter uses its second alphabet, i.e.,
$\mathcal{X}_2$, the first user (resp. eavesdropper) receives a
degraded version of user 2's (resp. user 1's) observation.
Consequently, the overall channel is not degraded from any user's
perspective, however it is degraded from the eavesdropper's
perspective.

A $(2^{nR_0},2^{nR_1},2^{nR_2},n)$ code for this channel consists
of three message sets,
$w_0\in\mathcal{W}_0=\{1,\ldots,2^{nR_0}\}$,
$w_j\in\mathcal{W}_j=\{1,\ldots,2^{nR_j}\}, j=1,2$, one encoder
$f:\mathcal{W}_0\times \mathcal{W}_1\times\mathcal{W}_2\rightarrow
\mathcal{X}^n$ and two decoders, one at each legitimate receiver,
$g_j:\mathcal{Y}_j^n\rightarrow \mathcal{W}_0\times\mathcal{W}_j,
j=1,2$. The probability of error is defined as
$P_e^n=\max_{j=1,2}\Pr\left[g_j(Y_j^n)\neq (W_0,W_j)\right]$. A
rate tuple $(R_0,R_1,R_2)$ is said to be achievable if there
exists a code with $\lim_{n\rightarrow\infty}P_e^n=0$ and
\begin{align}
\lim_{n\rightarrow\infty}\frac{1}{n}H(\mathcal{S}(W)|Z^n)\geq
\sum_{j\in\mathcal{S}(W)}R_j,\quad \forall ~\mathcal{S}(W)
\end{align}
where $\mathcal{S}(W)$ denotes any subset of $\{W_0,W_1,W_2\}$.
The secrecy capacity region is the closure of all achievable
secrecy rate tuples.

The secrecy capacity region of this channel is given in the
following theorem which is proved in
Appendix~\ref{Proof_Sum_Unmatched}.

\begin{Theo}
\label{Theorem_Sum_Unmatched} The secrecy capacity region of the
sum of two degraded multi-receiver wiretap channels is given by
the union of the rate tuples $(R_0,R_1,R_2)$ satisfying
\begin{align}
R_0&\leq \alpha I(U_1;Y_{11}|Z_1)+\bar{\alpha}I(U_2;Y_{12}|Z_2)\\
R_0&\leq \alpha I(U_1;Y_{21}|Z_1)+\bar{\alpha}I(U_2;Y_{22}|Z_2)\\
R_0+R_1&\leq \alpha
I(X_1;Y_{11}|Z_1)+\bar{\alpha}I(U_2;Y_{12}|Z_2)\\
R_0+R_2&\leq \alpha
I(U_1;Y_{21}|Z_1)+\bar{\alpha}I(X_2;Y_{22}|Z_2)\\
R_0+R_1+R_2 & \leq \alpha
I(X_1;Y_{11}|Z_1)+\bar{\alpha}I(U_2;Y_{12}|Z_2)+\bar{\alpha}I(X_2;Y_{22}|U_2,Z_2)\\
R_0+R_1+R_2 &\leq \alpha I(U_1;Y_{21}|Z_1)+\alpha
I(X_1;Y_{11}|U_1,Z_1)+\bar{\alpha}I(X_2;Y_{22}|Z_2)
\end{align}
where the union is over all $\alpha\in[0,1]$ and distributions of
the form $ p(u_1,u_2,x_1,x_2)=p(u_1,x_1)p(u_2,x_2) $.
\end{Theo}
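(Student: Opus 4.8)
The plan is to prove achievability and the converse separately, exploiting the fact that, by the disjointness of the alphabets, at every channel use the transmitter is forced to operate in exactly one of the two degraded sub-channels, and which one it is operating in is revealed to every legitimate receiver \emph{and} to the eavesdropper. The parameter $\alpha$ will play the role of the long-run fraction of channel uses spent in the first sub-channel.

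\textbf{Achievability.} For a fixed rational $\alpha$ and a fixed law $p(u_1,x_1)p(u_2,x_2)$, observe that the $n$-use sum channel, when operated with the deterministic schedule that uses $\mathcal{X}_1$ for the first $\alpha n$ uses and $\mathcal{X}_2$ for the remaining $\bar\alpha n$ uses, is isomorphic to a parallel degraded multi-receiver wiretap channel with $n$ sub-channels, of which $\alpha n$ are copies of $X_1\to Y_{11}\to Y_{21}\to Z_1$ and $\bar\alpha n$ are copies of $X_2\to Y_{22}\to Y_{12}\to Z_2$. Running the coding scheme underlying Theorem~\ref{Theorem_D_PWC_M} on this channel (feeding all first-type sub-channels with $p(u_1,x_1)$ and all second-type sub-channels with $p(u_2,x_2)$) and dividing the resulting rate bounds by $n$ yields precisely the region of the theorem for that value of $\alpha$, and letting $n\to\infty$ fills in all $\alpha\in[0,1]$ by closure. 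Equivalently, and more directly, one applies the two-user specialization of Corollary~\ref{degraded_multiuser_wiretap} separately to the first $\alpha n$ uses and to the remaining $\bar\alpha n$ uses with independent codebooks; additivity of the relevant conditional entropies across the two independent halves of the eavesdropper's observation $(Z_1^{\alpha n},Z_2^{\bar\alpha n})$ guarantees perfect secrecy of every subset $\mathcal{S}(W)$, and a Fourier--Motzkin elimination over the way $W_0,W_1,W_2$ are split between the two halves (the common message being freely assignable to either half and tradeable with either user's private rate) recovers the six bounds in the theorem.

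\textbf{Converse.} Let a sequence of codes with $P_e^n\to0$ satisfying the perfect-secrecy constraints be given. Define $Q_i\in\{1,2\}$ as the index of the sub-alphabet containing $X_i$; since $\mathcal{X}_1\cap\mathcal{X}_2=\emptyset$ and the output alphabets are likewise disjoint, $Q_i$ is a deterministic function of each of $X_i$, $Y_{1,i}$, $Y_{2,i}$, and $Z_i$. Set $\alpha_n=\tfrac1n\sum_{i=1}^n\Pr(Q_i=1)$ and pass to a subsequence along which $\alpha_n\to\alpha$. I would then carry out the converse of Theorem~\ref{Theorem_product_wiretap_channel} from Appendix~\ref{proof_of_product_wiretap_channel} essentially verbatim --- Fano's inequality for each decoder, the perfect-secrecy constraints to turn $Y^n$-mutual informations into conditional ones given $Z^n$, and the degradedness \emph{within} each sub-channel to introduce auxiliaries $U_{1,i}$ and $U_{2,i}$ --- with the single change that every per-letter term is tagged by $Q_i$: indices with $Q_i=1$ contribute terms of the first-sub-channel type (such as $I(U_{1,i};Y_{1,i}|Z_i)$, $I(U_{1,i};Y_{2,i}|Z_i)$, $I(X_i;Y_{1,i}|U_{1,i},Z_i)$), while indices with $Q_i=2$ contribute the mirrored second-sub-channel terms. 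Grouping the two families of terms, normalizing by $n$, and introducing a uniform time-sharing index, the probability mass on $\{Q=1\}$ equals $\alpha_n\to\alpha$; defining $U_1,U_2,X_1,X_2$ as the appropriate single-letter variables conditioned on the phase then produces the coefficients $\alpha$ and $\bar\alpha$ in front of the first- and second-sub-channel mutual informations in each of the six bounds. Finally, since each bound depends on only the $(U_1,X_1)$-marginal or only the $(U_2,X_2)$-marginal, one may replace the joint law by the product of these marginals without changing any bound, giving the product form $p(u_1,x_1)p(u_2,x_2)$; the cardinalities of $U_1,U_2$ are controlled by the usual Fenchel--Eggleston--Carath\'eodory argument.

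\textbf{Main obstacle.} The delicate point is that the schedule $(Q_1,\dots,Q_n)$ is generated by the encoder and may depend on $(W_0,W_1,W_2)$, so the number of channel uses spent in each sub-channel is a random quantity correlated with the messages; one cannot simply condition on a fixed schedule and quote the degraded-MRWC converse of Section~\ref{sec:degraded_mr_wt}. The resolution is to keep $Q_i$ \emph{inside} the mutual-information expressions throughout the single-letterization and to sum and normalize only at the very end, so that the message-dependent phase lengths enter solely through the averages $\Pr(Q_i=1)$; convexity of the rate region in $\alpha$ for a fixed input law then allows replacing $\alpha_n$ by its limit $\alpha$. A secondary point to verify is that the auxiliaries defined on the first-sub-channel indices and on the second-sub-channel indices genuinely satisfy, within their own phase, the Markov relations $U_1\to X_1\to(Y_{11},Y_{21},Z_1)$ and $U_2\to X_2\to(Y_{22},Y_{12},Z_2)$ --- this is precisely where the degradedness order of each sub-channel enters, exactly as in the converse of Theorem~\ref{Theorem_product_wiretap_channel}.
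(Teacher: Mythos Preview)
Your proposal follows essentially the same route as the paper: for achievability, time-share deterministically between the two sub-channels with $\alpha=n_1/n$ and run (variants of) the superposition-with-stochastic-encoding scheme of Section~\ref{sec:degraded_mr_wt} on each block; for the converse, rerun the converse of Theorem~\ref{Theorem_product_wiretap_channel} with the same auxiliary-variable choices, now indexed over the $n_1$ first-channel uses and the $n-n_1$ second-channel uses, and conclude by noting that every bound depends only on one of the two marginals $p(u_1,x_1)$, $p(u_2,x_2)$.

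Two remarks. First, your ``Theorem~\ref{Theorem_D_PWC_M} with $n$ sub-channels'' reduction is miscast: that theorem needs block length going to infinity \emph{per sub-channel}, so identifying the $n$ time slots with $n$ parallel sub-channels used once does not give a coding theorem. Your second achievability route (independent codebooks on the two blocks plus rate-splitting and Fourier--Motzkin) is the right one and is precisely what the paper does, after first decomposing the boundary of the region into the same three surfaces as in Appendix~\ref{proof_of_product_wiretap_channel}. Second, you are in fact more careful than the paper on the converse: the paper simply writes $H(W_0\mid Z_1^{n_1},Z_2^{n-n_1})$ ``where we assume that the first channel is used $n_1$ times'' and proceeds as if the schedule were fixed, whereas you correctly observe that $Q^n$ is a function of $Z^n$ (so conditioning on it is free in every equivocation term) and that the message-dependent phase lengths enter only through the averages $\Pr(Q_i=1)$ after single-letterization. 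That refinement is sound and fills a point the paper leaves implicit.
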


\begin{Remark}
This channel model is similar to the parallel degraded
multi-receiver wiretap channel of the previous section in the
sense that it can be viewed to consist of two parallel
sub-channels, however now the transmitter cannot use both
sub-channels simultaneously. Instead, it should invoke a
time-sharing approach between these two so-called parallel
sub-channels ($\alpha$ reflects this concern). Moreover,
superposition coding scheme again achieves the boundary of the
secrecy capacity region, however it differs from the standard one
\cite{cover_book} in the sense that it needs to be modified to
incorporate secrecy constraints, i.e., it needs to use stochastic
encoding to associate each message with multiple codewords.
\end{Remark}

\begin{Remark}
An interesting point about the secrecy capacity region is that if
we drop the secrecy constraints by setting $Z_1=Z_2=\phi$, we are
unable to recover the capacity region of the corresponding
broadcast channel that was found in~\cite{Product_Broadcast}.
After setting $Z_1=Z_2=\phi$, we note that each expression in
Theorem~\ref{Theorem_Sum_Unmatched} and its counterpart describing
the capacity region \cite{Product_Broadcast} differ by exactly
$h(\alpha)$. The reason for this is as follows. Here, $\alpha$ not
only denotes the time-sharing variable but also carries an
additional information, i.e., the change of the channel that is in
use is part of the information transmission. However, since the
eavesdropper can also decode these messages, the term $h(\alpha)$,
which is the amount of information that can be transmitted via
changes of the channel in use, disappears in the secrecy capacity
region.
\end{Remark}

\section{Conclusions}
In this paper, we studied secure broadcasting to many users in the
presence of an eavesdropper. Characterizing the secrecy capacity
region of this channel in its most general form seems to be
intractable for now, since the version of this problem without any
secrecy constraints, is the broadcast channel with an arbitrary
number of receivers, whose capacity region is open. Consequently,
we took the approach of considering special classes of channels.
In particular, we considered degraded multi-receiver wiretap
channels, parallel multi-receiver wiretap channels with a more
noisy eavesdropper, parallel multi-receiver wiretap channels with
less noisiness orderings in each sub-channel, and parallel
degraded multi-receiver wiretap channels. For each channel model,
we obtained either partial characterization of the secrecy
capacity region or the entire region.

\appendices
\appendixpage

\section{Proofs of Theorem~\ref{less_noisy_nultiuser_wiretap} and Corollary~\ref{degraded_multiuser_wiretap}}

\subsection{Proof of Theorem~\ref{less_noisy_nultiuser_wiretap}}
\label{proof_of_less_noisy_multiuser_wiretap}

First, we show achievability, then provide the converse.

\subsubsection{Achievability} Fix the probability distribution as
\begin{align}
p(u_1)p(u_2|u_1)\ldots p(u_{K-1}|u_{K-2})p(x|u_{K-1})
\label{fixed_dist}
\end{align}

\noindent \textbf{\underline{Codebook generation:}}

\begin{itemize}
\item Generate $2^{n(R_0+R_1+\tilde{R}_1)}$ length-$n$ sequences
$\bu_1$ through $p(\bu_1)=\prod_{i=1}^n p(u_{1,i})$ and index them
as $\bu_1(w_0,w_1,\tilde{w}_1)$ where
$w_0\in\left\{1,\ldots,2^{nR_0}\right\}$,
$w_1\in\left\{1,\ldots,2^{nR_1}\right\}$ and
$\tilde{w}_1\in\big\{1,\ldots,2^{n\tilde{R}_1}\big\}$.

\item For each $\bu_{j-1}$, where $j=2,\ldots,K-1$, generate
$2^{n(R_j+\tilde{R}_j)}$ length-$n$ sequences $\bu_j$ through
$p(\bu_j| \bu_{j-1})=\prod_{i=1}^n p(u_{j,i}|u_{j-1,i})$ and index
them as $\bu_j
(w_0,w_1,\ldots,w_j,\tilde{w}_1,\ldots,\tilde{w}_j)$ where
$w_j\in\left\{1,\ldots,2^{nR_j}\right\}$ and
$\tilde{w}_j\in\big\{1,\ldots,2^{n\tilde{R}_j}\big\}$.

\item Finally, for each $\bu_{K-1}$, generate
$2^{n(R_K+\tilde{R}_K)}$ length-$n$ sequences $\bx$ through
$p(\bx|\bu_{K-1})=\prod_{i=1}^n p(x_i|u_{K,i})$ and index them as
$\bx(w_0,w_1,\ldots,w_K,\tilde{w}_1,\ldots,\tilde{w}_K)$ where
$w_K\in\break\left\{1,\ldots,2^{nR_K}\right\}$ and
$\tilde{w}_K\in\big\{1,\ldots,2^{n\tilde{R}_K}\big\}$.

\item Furthermore, we set
\begin{align}
\tilde{R}_i=I(U_i;Z|U_{i-1}),\quad i=1,\ldots,K
\label{dummy_codeword_rate_1}
\end{align}
where $U_0=\phi$ and $U_K=X$.
\end{itemize}

\noindent \textbf{\underline{Encoding:}}

\vspace{0.25cm} Assume the messages to be transmitted are
$\left(w_0,w_1,\ldots,w_K\right)$. Then, the encoder randomly
picks a set $\left(\tilde{w}_1,\ldots,\tilde{w}_K\right)$ and
sends $\bx(w_0,w_1,\ldots,w_K,\tilde{w}_1,\ldots,\tilde{w}_K)$.

\vspace{0.5cm} \noindent \textbf{\underline{Decoding:}}

\vspace{0.25cm} It is straightforward to see that if the following
conditions are satisfied,
\begin{align}
R_0+R_1+\tilde{R}_1 & \leq I(U_1;Y_1) \\
R_j+\tilde{R}_j & \leq I(U_j;Y_j|U_{j-1}), \quad j=2,\ldots,K-1 \\
R_K+\tilde{R}_K & \leq I(X;Y_K|U_{K-1})
\end{align}
then all users can decode both the common message and the
independent message directed to itself with vanishingly small
error probability. Moreover, since the channel is degraded, each
user, say the $j$th one, can decode all of the independent
messages intended for the users whose channels are degraded with
respect to the $j$th user's channel. Thus, these degraded users'
rates can be exploited to increase the $j$th user's rate which
leads to the following achievable region
\begin{align}
R_0+\sum_{j=1}^\ell R_j +\sum_{j=1}^\ell \tilde{R}_j & \leq
\sum_{j=1}^\ell I(U_j;Y_j|U_{j-1}), \quad \ell=1,\ldots,K
\label{proof_degraded_multiuser_ach_region_1}
\end{align}
where $U_0=\phi$ and $U_K=X$. Moreover, after eliminating
$\big\{\tilde{R}_j\big\}_{j=1}^K$ ,
(\ref{proof_degraded_multiuser_ach_region_1}) can be expressed as
\begin{align}
R_0+\sum_{j=1}^\ell R_j & \leq \sum_{j=1}^\ell
I(U_j;Y_j|U_{j-1})-I(U_{\ell};Z), \quad \ell=1,\ldots,K
\end{align}
where we used the fact that
\begin{align}
\sum_{j=1}^{\ell}\tilde{R}_j=\sum_{j=1}^{\ell}I(U_j;Z|U_{j-1})=I(U_1,\ldots,U_{\ell};Z)=I(U_{\ell};Z)
\label{superposition_coding_implies}
\end{align}
where the second and the third equalities are due to the following
Markov chain
\begin{align}
U_1\rightarrow \ldots\rightarrow U_{K-1}\rightarrow X \rightarrow
Z
\end{align}

\vspace{0.5cm} \noindent \textbf{\underline{Equivocation
calculation:}}

\vspace{0.25cm} We now calculate the equivocation of the code
described above. To that end, we first introduce the following
lemma which states that a code satisfying the sum rate secrecy
constraint fulfills all other secrecy constraints.

\begin{Lem}
\label{lemma_sum_rate_sufficient} If the sum rate secrecy
constraint is satisfied, i.e.,
\begin{align}
\frac{1}{n}H(W_0,W_1,\ldots,W_K|Z^n)\geq \sum_{j=0}^K
R_j-\epsilon_n \label{lemma_sum_rate_satisfied}
\end{align}
then all other secrecy constraints are satisfied as well, i.e.,
\begin{align}
\frac{1}{n}H(\mathcal{S}(W)|Z^n)\geq \sum_{ j\in\mathcal{S}(W)}
R_j-\epsilon_n
\end{align}
where $\mathcal{S}(W)$ denotes any subset of
$\left\{W_0,W_1,\ldots,W_K\right\}$.
\end{Lem}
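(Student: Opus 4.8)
The plan is to show that the sum-rate secrecy constraint (\ref{lemma_sum_rate_satisfied}) upper-bounds $\sum_{j=0}^K R_j$ up to $\epsilon_n$, while each $H(\mathcal{S}(W)|Z^n)$ is trivially lower-bounded by its ``information content minus what leaks,'' so that once the total leakage is small, every partial leakage must be small too. Concretely, fix any subset $\mathcal{S}(W)\subseteq\{W_0,\ldots,W_K\}$ and let $\mathcal{S}^c(W)$ denote its complement. The key identity is the chain rule
\begin{align}
H(W_0,W_1,\ldots,W_K|Z^n)=H(\mathcal{S}(W)|Z^n)+H(\mathcal{S}^c(W)|\mathcal{S}(W),Z^n).
\label{plan_chain_rule}
\end{align}
The first term on the right is what we want to bound below; the second term we bound above by $H(\mathcal{S}^c(W))=\sum_{j\in\mathcal{S}^c(W)}nR_j$, since the messages are independent and uniformly distributed, and conditioning cannot increase entropy. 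Combining this with the hypothesis (\ref{lemma_sum_rate_satisfied}) gives
\begin{align}
H(\mathcal{S}(W)|Z^n)\geq H(W_0,\ldots,W_K|Z^n)-\sum_{j\in\mathcal{S}^c(W)}nR_j\geq \sum_{j=0}^K nR_j-n\epsilon_n-\sum_{j\in\mathcal{S}^c(W)}nR_j=\sum_{j\in\mathcal{S}(W)}nR_j-n\epsilon_n,
\end{align}
which is exactly the desired conclusion after dividing by $n$.

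The steps, in order: first invoke independence and uniformity of the messages to record $H(W_0,\ldots,W_K)=\sum_{j=0}^K nR_j$ and $H(\mathcal{S}^c(W))=\sum_{j\in\mathcal{S}^c(W)}nR_j$; second, apply the chain rule (\ref{plan_chain_rule}); third, bound $H(\mathcal{S}^c(W)|\mathcal{S}(W),Z^n)\leq H(\mathcal{S}^c(W))$; fourth, substitute the hypothesis and rearrange. Each step is elementary information-theoretic manipulation, so there is no real obstacle here — the ``hard part'' is only making sure the bookkeeping of which messages are conditioned on versus marginalized is done consistently, and that the same $\epsilon_n$ (not a set-dependent one) works uniformly over all $\mathcal{S}(W)$, which it does because the only place $\epsilon_n$ enters is through the single inequality (\ref{lemma_sum_rate_satisfied}). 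This lemma then reduces the equivocation analysis of the code to verifying a single sum-rate secrecy bound, which is what the remainder of the achievability proof will do using the dummy-message rates $\tilde R_i=I(U_i;Z|U_{i-1})$ chosen in (\ref{dummy_codeword_rate_1}).
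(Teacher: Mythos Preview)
Your proof is correct and essentially identical to the paper's: both use the chain rule $H(W_0,\ldots,W_K|Z^n)=H(\mathcal{S}(W)|Z^n)+H(\mathcal{S}^c(W)|\mathcal{S}(W),Z^n)$, bound the second term by $H(\mathcal{S}^c(W))=\sum_{j\in\mathcal{S}^c(W)}nR_j$ via independence and uniformity of the messages, and then invoke the sum-rate hypothesis. The only cosmetic difference is that the paper writes the chain rule in the rearranged form $H(\mathcal{S}(W)|Z^n)=H(\mathcal{S}(W),\mathcal{S}^c(W)|Z^n)-H(\mathcal{S}^c(W)|\mathcal{S}(W),Z^n)$ before substituting.
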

\begin{proof}
The proof of this lemma is as follows.
\begin{align}
\frac{1}{n}H(\mathcal{S}(W)|Z^n)&=\frac{1}{n}H(\mathcal{S}(W),\mathcal{S}^c
(W)|Z^n)-\frac{1}{n}H(\mathcal{S}^c (W)|\mathcal{S}(W),Z^n)\\
&\geq \sum_{j=0}^K R_j-\epsilon_n -\frac{1}{n}H(\mathcal{S}^c
(W)|\mathcal{S}(W),Z^n) \label{lemma_sum_rate_satisfied_assumption} \\
& = \sum_{j\in\mathcal{S}(W)}
R_j-\epsilon_n+\sum_{j\in\mathcal{S}^c(W)} R_j
-\frac{1}{n}H(\mathcal{S}^c
(W)|\mathcal{S}(W),Z^n) \\
& = \sum_{j\in\mathcal{S}(W)} R_j-\epsilon_n+\frac{1}{n}
H(\mathcal{S}^c(W)) -\frac{1}{n}H(\mathcal{S}^c
(W)|\mathcal{S}(W),Z^n) \label{iid_uniform_message_sets} \\
& \geq \sum_{j\in\mathcal{S}(W)} R_j-\epsilon_n
\end{align}
where (\ref{lemma_sum_rate_satisfied_assumption}) is due to the
fact that we assumed that sum rate secrecy constraint
(\ref{lemma_sum_rate_satisfied}) is satisfied and
(\ref{iid_uniform_message_sets}) follows from
\begin{align}
\sum_{j\in\mathcal{S}^c(W)} R_j & = \frac{1}{n}
H(\mathcal{S}^c(W))
\end{align}
which is a consequence of the fact that message sets are uniformly
and independently distributed.
\end{proof}

Hence, it is sufficient to check whether coding scheme presented
satisfies the sum rate secrecy constraint.
\begin{align}
\lefteqn{H(W_0,W_1,\ldots,W_K|Z^n)= H(W_0,W_1,\ldots,W_K,Z^n)-H(Z^n)}&\\
&= H(U_1^n,\ldots,U_{K-1}^n,X^n,W_0,W_1,\ldots,W_K,Z^n)-H(Z^n)
\nonumber \\
&\quad -H(U_1^n,\ldots,U_{K-1}^n,X^n|W_0,W_1,\ldots,W_K,Z^n)\\
&=
H(U_1^n,\ldots,U_{K-1}^n,X^n)+H(W_0,W_1,\ldots,W_K,Z^n|U_1^n,\ldots,U_{K-1}^n,X^n)-H(Z^n)
\nonumber \\
&\quad -H(U_1^n,\ldots,U_{K-1}^n,X^n|W_0,W_1,\ldots,W_K,Z^n)\\
&\geq
H(U_1^n,\ldots,U_{K-1}^n,X^n)-I(U_1^n,\ldots,U_{K-1}^n,X^n;Z^n)\nonumber\\
&\quad -H(U_1^n,\ldots,U_{K-1}^n,X^n|W_0,W_1,\ldots,W_K,Z^n)
\label{proof_degraded_equi_calc_step_1}
\end{align}
where each term will be treated separately. Since given
$U_k^n=u_k^n$, $U_{k+1}^n$ can take
$2^{n(R_{k+1}+\tilde{R}_{k+1})}$ values uniformly, the first term
is
\begin{align}
H(U_1^n,\ldots,U_{K-1}^n,X^n)&=H(U_1^n)+\sum_{k=2}^{K-1}H(U_{k}^n|U_{k-1}^n)+H(X^n|U_{K-1}^n)\\
&=nR_0+n\sum_{k=1}^K R_k + n\sum_{k=1}^K \tilde{R}_k
\label{proof_degraded_equi_calc_step_2}
\end{align}
where the first equality follows from the following Markov chain
\begin{align}
U_1^n\rightarrow U_2^n\rightarrow\ldots\rightarrow
U_{K-1}^n\rightarrow X^n
\label{proof_degraded_mu_wt_codebook_chain}
\end{align}
The second term in (\ref{proof_degraded_equi_calc_step_1}) is
\begin{align}
I(U_1^n,\ldots,U_{K-1}^n,X^n;Z^n)&=I(X^n;Z^n)+I(U_1^n,U_2^n,\ldots,U_{K-1}^n;Z^n|X^n)
\\
&= I(X^n;Z^n) \label{proof_degraded_mu_wt_codebook_chain_implies}\\
&\leq nI(X;Z)+\gamma_n\label{memoryless_channel_2}
\end{align}
where (\ref{proof_degraded_mu_wt_codebook_chain_implies}) follows
from the Markov chain in
(\ref{proof_degraded_mu_wt_codebook_chain}) and
(\ref{memoryless_channel_2}) can be shown by following the
approach devised in \cite{Wyner}. We now bound the third term in
(\ref{proof_degraded_equi_calc_step_1}). To that end, assume that
the eavesdropper tries to decode
$\left(U_1^n,\ldots,U_{K-1}^n,X^n\right)$ using the side
information $\left(W_0,W_1,\ldots,W_K\right)$ which is equivalent
to decoding $\left(\tilde{W}_1,\ldots,\tilde{W}_K\right)$. Since
$\tilde{R}_j$s are selected to ensure that the eavesdropper can
decode them successively, see (\ref{dummy_codeword_rate_1}), then
using Fano's lemma, we have
\begin{align}
H(U_1^n,\ldots,U_{K-1}^n,X^n|W_0,W_1,\ldots,W_K,Z^n) \leq
\epsilon_n \label{proof_degraded_equi_calc_step_3}
\end{align}
Thus, using (\ref{proof_degraded_equi_calc_step_2}),
(\ref{memoryless_channel_2}) and
(\ref{proof_degraded_equi_calc_step_3}) in
(\ref{proof_degraded_equi_calc_step_1}), we get
\begin{align}
H(W_0,W_1,\ldots,W_K|Z^n)&\geq n\sum_{j=0}^K R_j +n\sum_{j=1}^K
\tilde{R}_j-nI(X;Z)-\epsilon_n \\
&= n\sum_{j=0}^K R_j -\epsilon_n-\gamma_n
\label{proof_degraded_equi_calc_step_4}
\end{align}
where (\ref{proof_degraded_equi_calc_step_4}) follows from the
following, see (\ref{dummy_codeword_rate_1}) and
(\ref{superposition_coding_implies}),
\begin{align}
\sum_{j=1}^K\tilde{R}_j= I(X;Z)
\end{align}

\subsubsection{Converse}

First let us define the following auxiliary random variables,
\begin{align}
U_{k,i}=W_0W_1\ldots W_k Y_{k+1}^{i-1}Z_{i+1}^n,\quad
k=1,\ldots,K-1
\end{align}
which satisfy the following Markov chain
\begin{align}
U_{1,i}\rightarrow U_{2,i}\rightarrow \ldots\rightarrow
U_{K-1,i}\rightarrow X_i \rightarrow
\left(Z_i,Y_{K,i},\ldots,Y_{1,i}\right)
\end{align}
To provide a converse, we will show
\begin{align}
\frac{1}{n}H(W_0,W_1,\ldots,W_{\ell}|Z^n) \leq
\sum_{k=1}^{\ell}I(U_k;Y_k|U_{k-1})-I(U_{\ell};Z), \quad
\ell=1,\ldots,K \label{converse_dummy}
\end{align}
where $U_0=\phi$, $U_K=X$. We show this in three steps. First, let
us write down
\begin{align}
H(W_0,W_1,\ldots,W_{\ell}|Z^n)=H(W_0,W_1|Z^n)+\sum_{k=2}^{\ell}H(W_k|W_0,W_1,\ldots,W_{k-1},Z^n)
\label{dummy_decompose_less_noisy_1}
\end{align}
The first term on the right hand side of
(\ref{dummy_decompose_less_noisy_1}) is bounded as follows,
\begin{align}
\lefteqn{H(W_0,W_1|Z^n)\leq I(W_0,W_1;Y_1^n)-I(W_0,W_1;Z^n)+\epsilon_n}  \label{Fano_10} \\
&\leq
\sum_{i=1}^{n}I(W_0,W_1;Y_{1,i}|Y_1^{i-1},Z_{i+1}^n)-I(W_0,W_1;Z_i|Y_1^{i-1},Z_{i+1}^n)
+\epsilon_n \label{Fano_and_Csiszar_identity}\\
& \leq
\sum_{i=1}^{n}I(W_0,W_1;Y_{1,i}|Y_1^{i-1},Z_{i+1}^n)-I(W_0,W_1;Z_i|Y_1^{i-1},Z_{i+1}^n)\nonumber\\
&\quad + I(Y_1^{i-1},Z_{i+1}^n;Y_{1,i})-I(Y_1^{i-1},Z_{i+1}^n;Z_i)
+\epsilon_n \label{less_noisy_nultiuser_wiretap_less_noisy_1}\\
&=\sum_{i=1}^{n}I(W_0,W_1,Y_1^{i-1},Z_{i+1}^n;Y_{1,i})-I(W_0,W_1,Y_1^{i-1},Z_{i+1}^n;Z_i)
+\epsilon_n  \\
&\leq \sum_{i=1}^{n}I(W_0,W_1,Y_1^{i-1},Z_{i+1}^n;Y_{1,i})-I(W_0,W_1,Y_1^{i-1},Z_{i+1}^n;Z_i)\nonumber\\
&\quad
+I(Y_2^{i-1};Y_{1,i}|W_0,W_1,Y_1^{i-1},Z_{i+1}^n)-I(Y_2^{i-1};Z_i|W_0,W_1,Y_1^{i-1},Z_{i+1}^n)
+\epsilon_n \label{less_noisy_nultiuser_wiretap_less_noisy_2} \\
& = \sum_{i=1}^{n}I(W_0,W_1,Y_1^{i-1},Z_{i+1}^n,Y_2^{i-1};Y_{1,i})
-I(W_0,W_1,Y_1^{i-1},Z_{i+1}^n,Y_2^{i-1};Z_i)
+\epsilon_n \\
& =
\sum_{i=1}^{n}I(W_0,W_1,Z_{i+1}^n,Y_2^{i-1};Y_{1,i})-I(W_0,W_1,Z_{i+1}^n,Y_2^{i-1};Z_i)\\
&\quad
+I(Y_1^{i-1};Y_{1,i}|W_0,W_1,Z_{i+1}^n,Y_2^{i-1})-I(Y_1^{i-1};Z_i|W_0,W_1,Z_{i+1}^n,Y_2^{i-1})
+ \epsilon_n \\
& =
\sum_{i=1}^{n}I(W_0,W_1,Z_{i+1}^n,Y_2^{i-1};Y_{1,i})-I(W_0,W_1,Z_{i+1}^n,Y_2^{i-1};Z_i)+
\epsilon_n
\label{proof_less_noisy_channel_degraded}\\
& = \sum_{i=1}^{n}I(U_{1,i};Y_{1,i})-I(U_{1,i};Z_i)+ \epsilon_n
\label{dummy_decompose_less_noisy_2}
\end{align}
where (\ref{Fano_10}) follows from Fano's lemma,
(\ref{Fano_and_Csiszar_identity}) is obtained using Csiszar-Korner
identity (see Lemma~7 of \cite{Korner}),
(\ref{less_noisy_nultiuser_wiretap_less_noisy_1}) is due to the
fact that
\begin{align}
I(Y_1^{i-1},Z_{i+1}^n;Y_{1,i})-I(Y_1^{i-1},Z_{i+1}^n;Z_i)>0
\end{align}
which follows from the fact that each user's channel is less noisy
with respect to the eavesdropper. Similarly,
(\ref{less_noisy_nultiuser_wiretap_less_noisy_2}) follows from the
fact that
\begin{align}
I(Y_2^{i-1};Y_{1,i}|W_0,W_1,Y_1^{i-1},Z_{i+1}^n)-I(Y_2^{i-1};Z_i|W_0,W_1,Y_1^{i-1},Z_{i+1}^n)>0
\end{align}
which is a consequence of the fact that each user's channel is
less noisy with respect to the eavesdropper's channel. Finally,
(\ref{proof_less_noisy_channel_degraded}) is due to the following
Markov chain
\begin{align}
Y_1^{i-1}\rightarrow Y_2^{i-1}\rightarrow \left(W_0,W_1,Z_{i+1}^n,
Y_{1,i},Z_i\right)
\end{align}
which is a consequence of the fact that the legitimate receivers
exhibit a degradation order.

We now bound the terms of the summation in
(\ref{dummy_decompose_less_noisy_1}) for $2\leq k \leq K-1$. Let
us use the shorthand notation,
$\tilde{W}_{k-1}=(W_0,W_1,\ldots,W_{k-1})$, then
\begin{align}
\lefteqn{H(W_k|\tilde{W}_{k-1},Z^n)\leq I(W_k;Y_k^n|\tilde{W}_{k-1})-I(W_k;Z^n|\tilde{W}_{k-1})+\epsilon_n} &\label{Fano_11}\\
& \leq\sum_{i=1}^n
I(W_k;Y_{k,i}|\tilde{W}_{k-1},Y_k^{i-1},Z_{i+1}^n)
-I(W_k;Z_{i}|\tilde{W}_{k-1},Y_k^{i-1},Z_{i+1}^n)
+\epsilon_n \label{Fano_and_Csiszar_identity_1}\\
&\leq\sum_{i=1}^n
I(W_k;Y_{k,i}|\tilde{W}_{k-1},Y_k^{i-1},Z_{i+1}^n)
-I(W_k;Z_{i}|\tilde{W}_{k-1},Y_k^{i-1},Z_{i+1}^n)\nonumber\\
&\quad +
I(Y_{k+1}^{i-1};Y_{k,i}|\tilde{W}_{k-1},Y_k^{i-1},Z_{i+1}^n,W_k)
-I(Y_{k+1}^{i-1};Z_{i}|\tilde{W}_{k-1},Y_k^{i-1},Z_{i+1}^n,W_k)+\epsilon_n \label{less_noisy_nultiuser_wiretap_less_noisy_3} \\
&=\sum_{i=1}^n
I(W_k,Y_{k+1}^{i-1};Y_{k,i}|\tilde{W}_{k-1},Y_k^{i-1},Z_{i+1}^n)
-I(W_k,Y_{k+1}^{i-1};Z_{i}|\tilde{W}_{k-1},Y_k^{i-1},Z_{i+1}^n)
+\epsilon_n \\
&=\sum_{i=1}^n
I(U_{k,i};Y_{k,i}|U_{k-1,i})-I(U_{k,i};Z_{i}|U_{k-1,i})+\epsilon_n
\label{dummy_decompose_less_noisy_3}
\end{align}
where (\ref{Fano_11}) follows from Fano's lemma,
(\ref{Fano_and_Csiszar_identity_1}) is obtained through
Csiszar-Korner identity, and
(\ref{less_noisy_nultiuser_wiretap_less_noisy_3}) is a consequence
of the fact that
\begin{align}
I(Y_{k+1}^{i-1};Y_{k,i}|\tilde{W}_{k-1},Y_k^{i-1},Z_{i+1}^n,W_k)
-I(Y_{k+1}^{i-1};Z_{i}|\tilde{W}_{k-1},Y_k^{i-1},Z_{i+1}^n,W_k)>0
\end{align}
which follows from the fact that each user's channel is less noisy
with respect to the eavesdropper's channel. Finally, we bound the
following term where we again use the shorthand notation
$\tilde{W}_{K-1}=(W_0,W_1,\ldots,W_{K-1})$,

\begin{align}
\lefteqn{H(W_K|\tilde{W}_{K-1},Z^n)\leq I(W_K;Y_K^n|\tilde{W}_{K-1})-I(W_K;Z^n|\tilde{W}_{K-1})+\epsilon_n}&\label{Fano_12} \\
&\leq \sum_{i=1}^n
I(W_K;Y_{K,i}|\tilde{W}_{K-1},Y_K^{i-1},Z_{i+1}^n)-
I(W_K;Z_i|\tilde{W}_{K-1},Y_K^{i-1},Z_{i+1}^n)
+\epsilon_n \label{Fano_and_Csiszar_identity_2} \\
&\leq \sum_{i=1}^n
I(W_K;Y_{K,i}|\tilde{W}_{K-1},Y_K^{i-1},Z_{i+1}^n)
- I(W_K;Z_i|\tilde{W}_{K-1},Y_K^{i-1},Z_{i+1}^n)\nonumber \\
&\quad +I(X_i;Y_{K,i}|\tilde{W}_{K-1},Y_K^{i-1},Z_{i+1}^n,W_K) -
I(X_i;Z_i|\tilde{W}_{K-1},Y_K^{i-1},Z_{i+1}^n,W_K)
+\epsilon_n \label{less_noisy_nultiuser_wiretap_less_noisy_4}\\
&= \sum_{i=1}^n
I(W_K,X_i;Y_{K,i}|\tilde{W}_{K-1},Y_K^{i-1},Z_{i+1}^n) -
I(W_K,X_i;Z_i|\tilde{W}_{K-1},Y_K^{i-1},Z_{i+1}^n)+\epsilon_n
\\
&= \sum_{i=1}^n I(X_i;Y_{K,i}|\tilde{W}_{K-1},Y_K^{i-1},Z_{i+1}^n)
+I(W_K;Y_{K,i}|\tilde{W}_{K-1},Y_K^{i-1},Z_{i+1}^n,X_i)\nonumber
\\
&\quad - I(X_i;Z_i|\tilde{W}_{K-1},Y_K^{i-1},Z_{i+1}^n)
 - I(W_K;Z_i|\tilde{W}_{K-1},Y_K^{i-1},Z_{i+1}^n,X_i)
+\epsilon_n \\
&= \sum_{i=1}^n I(X_i;Y_{K,i}|\tilde{W}_{K-1},Y_K^{i-1},Z_{i+1}^n)
- I(X_i;Z_i|\tilde{W}_{K-1},Y_K^{i-1},Z_{i+1}^n)
+\epsilon_n \label{proof_less_noisy_channel_degraded_1}\\
&= \sum_{i=1}^n I(X_i;Y_{K,i}|U_{K-1,i})- I(X_i;Z_i|U_{K-1,i})
+\epsilon_n \label{dummy_decompose_less_noisy_4}
\end{align}
where (\ref{Fano_12}) follows from Fano's lemma,
(\ref{Fano_and_Csiszar_identity_2}) is obtained by using
Csiszar-Korner identity, and
(\ref{less_noisy_nultiuser_wiretap_less_noisy_4}) follows from the
fact that
\begin{align}
I(X_i;Y_{K,i}|\tilde{W}_{K-1},Y_K^{i-1},Z_{i+1}^n,W_K) -
I(X_i;Z_i|\tilde{W}_{K-1},Y_K^{i-1},Z_{i+1}^n,W_K)>0
\end{align}
which is due to the fact that each user's channel is less noisy
with respect to the eavesdropper and
(\ref{proof_less_noisy_channel_degraded_1}) is due to the Markov
chain
\begin{align}
\left(Y_{K,i},Z_i\right)\rightarrow X_i \rightarrow
\left(W_0,W_1,\ldots,W_K,Y_K^{i-1},Z_{i+1}^n\right)
\end{align}
which follows from the fact that the channel is memoryless.
Finally, plugging (\ref{dummy_decompose_less_noisy_2}),
(\ref{dummy_decompose_less_noisy_3}) and
(\ref{dummy_decompose_less_noisy_4}) into
(\ref{dummy_decompose_less_noisy_1}), we get
\begin{align}
H(W_0,W_1,\ldots,W_{\ell}|Z^n)\leq n\sum_{k=1}^{\ell}
I(U_k;Y_k|U_{k-1})-nI(U_{\ell};Z),\quad \ell=1,\ldots,K
\end{align}
where $U_0=\phi$ and $U_K=X$, and this concludes the converse.

\subsection{Proof of Corollary~\ref{degraded_multiuser_wiretap}}
\label{proof_of_corollary_degraded_multiuser_wiretap}

First, we note that
\begin{align}
I(U_{\ell};Z)=I(U_1,\ldots,U_{\ell};Z)=\sum_{k=1}^{\ell}I(U_k;Z|U_{k-1})
\label{cor_degraded_1}
\end{align}
where the first equality is due to the following Markov chain
\begin{align}
U_1\rightarrow \ldots \rightarrow U_{K-1}\rightarrow X \rightarrow
Z
\end{align}
By plugging (\ref{cor_degraded_1}) into
(\ref{capacity_less_noisy}), we get
\begin{align}
R_0+\sum_{k=1}^\ell R_k & \leq  \sum_{k=1}^\ell
I(U_k;Y_k|U_{k-1})-I(U_{\ell};Z)\\
&=\sum_{k=1}^\ell I(U_k;Y_k|U_{k-1})-I(U_k;Z|U_{k-1}) \\
&=\sum_{k=1}^\ell I(U_k;Y_k,Z|U_{k-1})-I(U_k;Z|U_{k-1})\label{channel_is_degraded}\\
&=\sum_{k=1}^\ell I(U_k;Y_k|U_{k-1},Z)
\end{align}
where (\ref{channel_is_degraded}) follows from the fact that the
channel is degraded, i.e., we have the following Markov chain
\begin{align}
U_{k-1}\rightarrow U_k \rightarrow Y_k\rightarrow Z
\end{align}

\section{Proofs of Theorem~\ref{common_less_noisier_parallel} and Corollary~\ref{corollary_degraded_common}}

\subsection{Proof of Theorem~\ref{common_less_noisier_parallel}}
\label{proof_of_common_less_noisier_parallel}

Achievability of these rates follows from Proposition~2 of
\cite{Broadcasting_Wornell}. We provide the converse. First let us
define the following random variables,
\begin{align}
Z^n&=\left( Z_1^n,\ldots,Z_M^n\right) \\
Y_k^n&=\left( Y_{k1}^n,\ldots,Y_{kM}^n\right)\\
Z_{i+1}^n&=\left( Z_{1,i+1}^n,\ldots, Z_{M,i+1}^n \right) \\
Y_k^{i-1}&=\left(Y_{k1}^{i-1}\ldots,Y_{kM}^{i-1}\right)\\
Y_k(i)&=\left(Y_{k1}(i),\ldots,Y_{kM}(i)\right)\\
Z(i)&=\left(Z_{1}(i),\ldots,Z_{M}(i)\right)
\end{align}
where $Y_{kl}^{i-1}=(Y_{kl}(1),\ldots,Y_{kl}(i-1))$,
$Z_{l,i+1}^n=\left(Z_l(i+1),\ldots,Z_l(n)\right)$. Start with the
definition,
\begin{align}
H(W_0|Z^n)&=H(W_0)-I(W_0;Z^n)\\
&\leq I(W_0;Y_k^n)-I(W_0;Z^n)+\epsilon_n \label{Fano_2} \\
&= \sum_{i=1}^n
I(W_0;Y_k(i)|Y_k^{i-1})-I(W_0;Z(i)|Z_{i+1}^n)+\epsilon_n \\
&=\sum_{i=1}^n
I(W_0,Z_{i+1}^n;Y_k(i)|Y_k^{i-1})-I(Z_{i+1}^n;Y_k(i)|Y_k^{i-1},W_0)\nonumber\\
&\quad -I(W_0,Y_k^{i-1};Z(i)|Z_{i+1}^n)+I(Y_k^{i-1};Z(i)|Z_{i+1}^n,W_0)+\epsilon_n \\
&=\sum_{i=1}^n
I(W_0,Z_{i+1}^n;Y_k(i)|Y_k^{i-1}) -I(W_0,Y_k^{i-1};Z(i)|Z_{i+1}^n)+\epsilon_n \label{Csiszar_identity_1}\\
&=\sum_{i=1}^n
I(W_0;Y_k(i)|Y_k^{i-1},Z_{i+1}^n)+I(Z_{i+1}^n;Y_k(i)|Y_k^{i-1})\nonumber
\\
&\quad -I(W_0;Z(i)|Z_{i+1}^n,Y_k^{i-1})-I(Y_k^{i-1};Z(i)|Z_{i+1}^n)+\epsilon_n \\
&=\sum_{i=1}^n I(W_0;Y_k(i)|Y_k^{i-1},Z_{i+1}^n)
-I(W_0;Z(i)|Z_{i+1}^n,Y_k^{i-1})+\epsilon_n
\label{Csiszar_identity_2}
\end{align}
where (\ref{Csiszar_identity_1}) and (\ref{Csiszar_identity_2})
are due the following identities
\begin{align}
\sum_{i=1}^n I(Z_{i+1}^n;Y_k(i)|Y_k^{i-1},W_0)&=
\sum_{i=1}^n I(Y_k^{i-1};Z(i)|Z_{i+1}^n,W_0) \\
\sum_{i=1}^n I(Z_{i+1}^n;Y_k(i)|Y_k^{i-1})&= \sum_{i=1}^n
I(Y_k^{i-1};Z(i)|Z_{i+1}^n)
\end{align}
respectively, which are due to Lemma~7 of \cite{Korner}. Now, we
will bound each summand in (\ref{Csiszar_identity_2}) separately.
First, define the following variables.
\begin{align}
U_{k,i}&=\left(Z_{i+1}^n,Y_k^{i-1}\right)\\
\tilde{Y}_k^{l-1}(i)&=\left(Y_{k1}(i),\ldots,Y_{k(l-1)}(i)\right)\\
\tilde{Z}_{l+1}^M (i) & = \left(Z_{l+1}(i),\ldots,Z_M (i)\right)
\end{align}
Hence, the summand in (\ref{Csiszar_identity_2}) can be written as
follows,
\begin{align}
\lefteqn{I(W_0;Y_k(i)|Y_k^{i-1},Z_{i+1}^n)
-I(W_0;Z(i)|Z_{i+1}^n,Y_k^{i-1})}\\
&= I(W_0;Y_k(i)|U_{k,i}) -I(W_0;Z(i)|U_{k,i})\\
&=I(W_0;Y_{k1}(i),\ldots,Y_{kM}(i)|U_{k,i})
-I(W_0;Z_1(i),\ldots,Z_M(i)|U_{k,i})\\
&=\sum_{l=1}^M I(W_0;Y_{kl}(i)|U_{k,i},\tilde{Y}_k^{l-1}(i))
-I(W_0;Z_l(i)|U_{k,i},\tilde{Z}_{l+1}^M (i))\\
&=\sum_{l=1}^M I(W_0,\tilde{Z}_{l+1}^M
(i);Y_{kl}(i)|U_{k,i},\tilde{Y}_k^{l-1}(i))-I(\tilde{Z}_{l+1}^M
(i);Y_{kl}(i)|U_{k,i},\tilde{Y}_k^{l-1}(i),W_0)\nonumber \\
&\quad -I(W_0,\tilde{Y}_k^{l-1}(i);Z_l(i)|U_{k,i},\tilde{Z}_{l+1}^M (i))+I(\tilde{Y}_k^{l-1}(i);Z_l(i)|U_{k,i},\tilde{Z}_{l+1}^M (i),W_0)\\
&=\sum_{l=1}^M I(W_0,\tilde{Z}_{l+1}^M
(i);Y_{kl}(i)|U_{k,i},\tilde{Y}_k^{l-1}(i))-I(W_0,\tilde{Y}_k^{l-1}(i);Z_l(i)|U_{k,i},\tilde{Z}_{l+1}^M
(i)) \label{Csiszar_identity_3} \\
&=\sum_{l=1}^M I(\tilde{Z}_{l+1}^M
(i);Y_{kl}(i)|U_{k,i},\tilde{Y}_k^{l-1}(i))+I(W_0
;Y_{kl}(i)|U_{k,i},\tilde{Y}_k^{l-1}(i),\tilde{Z}_{l+1}^M(i))\nonumber\\
&\quad -I(\tilde{Y}_k^{l-1}(i);Z_l(i)|U_{k,i},\tilde{Z}_{l+1}^M
(i)) -I(W_0;Z_l(i)|U_{k,i},\tilde{Z}_{l+1}^M
(i),\tilde{Y}_k^{l-1}(i))\\
&=\sum_{l=1}^M I(W_0
;Y_{kl}(i)|U_{k,i},\tilde{Y}_k^{l-1}(i),\tilde{Z}_{l+1}^M(i))
-I(W_0;Z_l(i)|U_{k,i},\tilde{Z}_{l+1}^M (i),\tilde{Y}_k^{l-1}(i))
\label{Csiszar_identity_4}
\end{align}
where (\ref{Csiszar_identity_3}) and (\ref{Csiszar_identity_4})
follow from the following identities
\begin{align}
\sum_{l=1}^M I(\tilde{Z}_{l+1}^M
(i);Y_{kl}(i)|U_{k,i},\tilde{Y}_k^{l-1}(i),W_0)&=
\sum_{l=1}^M I(\tilde{Y}_k^{l-1}(i);Z_l(i)|U_{k,i},\tilde{Z}_{l+1}^M (i),W_0)\\
\sum_{l=1}^M I(\tilde{Z}_{l+1}^M
(i);Y_{kl}(i)|U_{k,i},\tilde{Y}_k^{l-1}(i)) &=\sum_{l=1}^M
I(\tilde{Y}_k^{l-1}(i);Z_l(i)|U_{k,i},\tilde{Z}_{l+1}^M (i))
\end{align}
respectively, which are again due to Lemma~7 of \cite{Korner}.
Now, define the set of sub-channels, say $\mathcal{S}(k)$, in
which the $k$th user is less noisy with respect to the
eavesdropper. Thus, the summands in (\ref{Csiszar_identity_4}) for
$l\notin\mathcal{S}(k)$ are negative and by dropping them, we can
bound (\ref{Csiszar_identity_4}) as follows,
\begin{align}
\lefteqn{I(W_0;Y_k(i)|Y_k^{i-1},Z_{i+1}^n)
-I(W_0;Z(i)|Z_{i+1}^n,Y_k^{i-1})} \nonumber
\\
&\leq
\sum_{l\in\mathcal{S}(k)} I(W_0
;Y_{kl}(i)|U_{k,i},\tilde{Y}_k^{l-1}(i),\tilde{Z}_{l+1}^M(i))
-I(W_0;Z_l(i)|U_{k,i},\tilde{Z}_{l+1}^M (i),\tilde{Y}_k^{l-1}(i))
\label{drop_negatives_1}
\end{align}
Moreover, for $l\in\mathcal{S}(k)$, we have
\begin{align}
I(U_{k,i},\tilde{Y}_k^{l-1}(i),\tilde{Z}_{l+1}^M(i);Y_{kl}(i))
-I(U_{k,i},\tilde{Y}_k^{l-1}(i),\tilde{Z}_{l+1}^M(i);Z_{l}(i))&\geq
0 \label{add_positive_1}\\
I(X_l(i)
;Y_{kl}(i)|U_{k,i},\tilde{Y}_k^{l-1}(i),\tilde{Z}_{l+1}^M(i),W_0)
-I(X_l(i);Z_l(i)|U_{k,i},\tilde{Z}_{l+1}^M
(i),\tilde{Y}_k^{l-1}(i),W_0)&\geq 0 \label{add_positive_2}
\end{align}
where both are due to the fact that for $l\in\mathcal{S}(k)$, in
this sub-channel the $k$th user is less noisy with respect to the
eavesdropper. Therefore, adding (\ref{add_positive_1}) and
(\ref{add_positive_2}) to each summand in
(\ref{drop_negatives_1}), we get the following bound,
\begin{align}
\lefteqn{I(W_0;Y_k(i)|Y_k^{i-1},Z_{i+1}^n)
-I(W_0;Z(i)|Z_{i+1}^n,Y_k^{i-1})}\nonumber\\
&\leq \sum_{l\in\mathcal{S}(k)}
I(X_l(i),W_0,U_{k,i},\tilde{Y}_k^{l-1}(i),\tilde{Z}_{l+1}^M(i)
;Y_{kl}(i)) -
I(X_l(i),W_0,U_{k,i},\tilde{Y}_k^{l-1}(i),\tilde{Z}_{l+1}^M(i)
;Z_{l}(i)) \\
&= \sum_{l\in\mathcal{S}(k)} I(X_l(i) ;Y_{kl}(i)) - I(X_l(i)
;Z_{l}(i)) \label{Almost_single_letter}
\end{align}
where the equality follows from the following Markov chain
\begin{align}
\left(W_0,U_{k,i},\tilde{Y}_k^{l-1}(i),\tilde{Z}_{l+1}^M(i)
\right) \rightarrow X_l(i) \rightarrow
\left(Y_{kl}(i),Z_l(i)\right)
\end{align}
which is a consequence of the facts that channel is memoryless and
sub-channels are independent. Finally, using
(\ref{Almost_single_letter}) in (\ref{Csiszar_identity_2}), we get
\begin{align}
H(W_0|Z^n)&\leq \sum_{i=1}^n \sum_{l\in\mathcal{S}(k)} I(X_l(i)
;Y_{kl}(i)) - I(X_l(i) ;Z_{l}(i))+\epsilon_n \\
&\leq n \sum_{l\in\mathcal{S}(k)}  I(X_l ;Y_{kl}) - I(X_l
;Z_{l})+\epsilon_n \\
&= n \sum_{l=1}^M \left[ I(X_l ;Y_{kl}) - I(X_l ;Z_{l})\right]^+
+\epsilon_n
\end{align}
which completes the proof.

\subsection{Proof of Corollary~\ref{corollary_degraded_common}}

\label{proof_of_corollary_degraded_common}

We need to show that (\ref{C_PLNMWC}) reduces to (\ref{C_PDMWC})
for parallel degraded multi-receiver wiretap channels. Consider
the $k$th user. If in the $l$th sub-channel, eavesdropper receives
a degraded version of the $k$th user's observation, i.e., if the
Markov chain in (\ref{degraded_1}) is satisfied, then we have
\begin{align}
\left[I(X_l;Y_{kl})-I(X_l;Z_l)\right]^+=I(X_l;Y_{kl},Z_l)-I(X_l;Z_l)=I(X_l;Y_{kl}|Z_l)
\end{align}
where the first equality is due to the Markov chain in
(\ref{degraded_1}). On the other hand, if in the $l$th
sub-channel, the $k$th user receives a degraded version of the
eavesdropper's observation, i.e., if the Markov chain in
(\ref{degraded_2}) is satisfied, then we have
\begin{align}
\left[I(X_l;Y_{kl})-I(X_l;Z_l)\right]^+=\left[I(X_l;Y_{kl})-I(X_l;Z_l,Y_{kl})\right]^+=0=I(X_l;Y_{kl}|Z_l)
\end{align}
where the first equality is due to the Markov chain in
(\ref{degraded_2}).

\section{Proof of Theorem~\ref{sum_secrecy_capacity}}
\label{proof_of_sum_secrecy_capacity} Achievability of
Theorem~\ref{sum_secrecy_capacity} is a consequence of the
achievability result for wiretap channels in \cite{Korner}. We
provide the converse proof here. We first define the function
$\rho(l)$ which denotes the index of the strongest user in the
$l$th subchannel in the sense that
\begin{align}
I(U;Y_{kl})\leq I(U;Y_{\rho(l)l})
\end{align}
for all  $U\rightarrow X_{l}\rightarrow (Y_{1l},\ldots,Y_{Kl},Z_l)$ and any $k\in\{1,\ldots,K\}$.
Moreover, we define the following shorthand notations
\begin{align}
\tilde{Y}_l^n&=Y_{\rho(l)l}^n,\hspace{3cm} \quad l=1,\ldots,M \\
\tilde{Y}^n&=(\tilde{Y}_1^n,\ldots,\tilde{Y}_M^n)\\
Y_{k}^n&=(Y_{k1}^n,\ldots,Y_{kM}^n),\qquad \qquad k=1,\ldots,K \\
Z^{n}&= (Z_{1}^{n},\ldots,Z_{M}^{n})\\
Y_{k}^{i-1}&=(Y_{k1}^{i-1},\ldots,Y_{kM}^{i-1}),\qquad \quad k=1,\ldots,K \\
Z^{i-1}&= (Z_{1}^{i-1},\ldots,Z_{M}^{i-1})\\
\tilde{Y}_{i+1}^n&=(\tilde{Y}_{1,i+1}^n,\ldots,\tilde{Y}_{M,i+1}^n)\\
Y_k^{l-1}(i)&=(Y_{k1}(i),\ldots,Y_{k,l-1}(i)),\qquad l=1,\ldots,M  \\
Z^{l-1}(i)&=(Z_{1}(i),\ldots,Z_{l-1}(i)),\qquad l=1,\ldots,M  \\
\tilde{Y}_{l+1}^M(i)&=(\tilde{Y}_{l+1}(i),\ldots,\tilde{Y}_M(i)),\qquad l=1,\ldots,M
\end{align}
We first introduce the following lemma.
\begin{Lem}
For the parallel multi-receiver wiretap channel with less
noisiness order, we have
\begin{align}
I(W_k;Y_{k}^n)\leq I(W_k;\tilde{Y}^n),\quad k=1,\ldots,K
\end{align}
\end{Lem}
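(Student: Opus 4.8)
The plan is to establish the inequality by a hybrid argument that replaces user $k$'s observation by the strongest user's observation one sub-channel at a time, and then to single-letterize each replacement using the Csisz\'ar--K\"orner sum identity together with the less noisiness order in the sub-channel being modified. For $l=0,1,\ldots,M$ set
\[
A_l^n=\big(\tilde{Y}_1^n,\ldots,\tilde{Y}_l^n,\,Y_{k,l+1}^n,\ldots,Y_{kM}^n\big),
\]
so that $A_0^n=Y_k^n$ and $A_M^n=\tilde{Y}^n$, and it is enough to prove $I(W_k;A_{l-1}^n)\le I(W_k;A_l^n)$ for each $l=1,\ldots,M$. The vectors $A_{l-1}^n$ and $A_l^n$ agree on the common block $C^n:=\big(\tilde{Y}_1^n,\ldots,\tilde{Y}_{l-1}^n,Y_{k,l+1}^n,\ldots,Y_{kM}^n\big)$, which consists of outputs of all sub-channels other than the $l$th, and differ only in the $l$th block, which is $Y_{kl}^n$ in $A_{l-1}^n$ and $\tilde{Y}_l^n=Y_{\rho(l)l}^n$ in $A_l^n$. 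Hence, by the chain rule, the inductive step is equivalent to
\[
I(W_k;Y_{kl}^n\mid C^n)\le I(W_k;Y_{\rho(l)l}^n\mid C^n).
\]

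To prove this I would use the product structure of the channel. Because the sub-channels are independent and memoryless, $(Y_{kl}^n,Y_{\rho(l)l}^n)$ depends on the messages and on the outputs of the other sub-channels only through $X_l^n$, so $(W_k,C^n)\to X_l^n\to(Y_{kl}^n,Y_{\rho(l)l}^n)$; conditioning on a value $C^n=c^n$ therefore keeps the transition law $p(y_{kl}^n,y_{\rho(l)l}^n\mid x_l^n)=\prod_i p(y_{kl}(i),y_{\rho(l)l}(i)\mid x_l(i))$ intact and still leaves $W_k\to X_l^n\to(Y_{kl}^n,Y_{\rho(l)l}^n)$. Working at a fixed $c^n$ and averaging over $c^n$ at the end, the Csisz\'ar--K\"orner identity (Lemma~7 of \cite{Korner}), in the same form used in the converse proofs of Theorems~\ref{less_noisy_nultiuser_wiretap} and \ref{common_less_noisier_parallel}, gives
\[
I(W_k;Y_{kl}^n\mid c^n)-I(W_k;Y_{\rho(l)l}^n\mid c^n)=\sum_{i=1}^{n}\Big[I\big(W_k;Y_{kl}(i)\mid T_i\big)-I\big(W_k;Y_{\rho(l)l}(i)\mid T_i\big)\Big],
\]
with $T_i:=\big(c^n,Y_{kl}^{i-1},Y_{\rho(l)l,i+1}^n\big)$. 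Since $(W_k,T_i)\to X_l(i)\to(Y_{kl}(i),Y_{\rho(l)l}(i))$ by memorylessness, conditioning on any value $T_i=t_i$ yields a bona fide single-letter distribution through the $l$th sub-channel; as $\rho(l)$ is the strongest user there, the less noisiness order gives $I(W_k;Y_{kl}(i)\mid T_i=t_i)\le I(W_k;Y_{\rho(l)l}(i)\mid T_i=t_i)$. Averaging over $t_i$ makes every summand nonpositive, whence $I(W_k;Y_{kl}^n\mid c^n)\le I(W_k;Y_{\rho(l)l}^n\mid c^n)$; averaging over $c^n$ completes the step, and composing the $M$ steps proves the lemma.

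The step I expect to require the most care --- the real content of the argument --- is the tensorization of the less noisiness order: the hypothesis $I(U;Y_{\rho(l)l})\ge I(U;Y_{kl})$ is granted only for auxiliaries $U$ with $U\to X_l\to(Y_{1l},\ldots,Y_{Kl},Z_l)$, and one has to be sure it survives after conditioning on the side information $T_i$ coming from the other time indices and the other sub-channels. This works because $T_i$ is, given $X_l(i)$, independent of $(Y_{kl}(i),Y_{\rho(l)l}(i))$, so fixing $T_i=t_i$ merely reweights the input distribution $p(x_l(i))$ without altering the $l$th sub-channel's transition probabilities, and less noisiness is a property of those transition probabilities valid for every input law; the same remark is what justifies treating $C^n$ as inert conditioning. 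Everything else is the routine Csisz\'ar sum bookkeeping already used elsewhere in the paper.
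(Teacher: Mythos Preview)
Your proof is correct and follows essentially the same route as the paper: both reduce the difference $I(W_k;Y_k^n)-I(W_k;\tilde{Y}^n)$ to single-letter terms via the Csisz\'ar--K\"orner identity and then invoke the less noisiness order in each sub-channel, using the product/memoryless Markov structure to ensure the conditioning leaves the sub-channel transition law intact. The only cosmetic difference is the order of decomposition---the paper applies the Csisz\'ar sum first over time and then over sub-channels to obtain a single double sum with conditioning on $(Y_k^{i-1},\tilde{Y}_{i+1}^n,Y_k^{l-1}(i),\tilde{Y}_{l+1}^M(i))$, whereas you run a hybrid over sub-channels first (where the common block $C^n$ cancels by the chain rule, so no Csisz\'ar sum is needed at that stage) and then apply the Csisz\'ar sum over time within the single remaining sub-channel.
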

\begin{proof}
Consecutive uses of Csiszar-Korner identity~\cite{Korner}, as in
Appendix~\ref{proof_of_common_less_noisier_parallel}, yield
\begin{align}
I(W_k;Y_{k}^n)-I(W_k;\tilde{Y}^n)&=\sum_{i=1}^n\sum_{l=1}^M
\left[I(W_k;Y_{kl}(i)|Y_{k}^{i-1},\tilde{Y}_{i+1}^n,Y_k^{l-1}(i),\tilde{Y}_{l+1}^M(i))\right.\nonumber\\
&\quad \left.-I(W_k;\tilde{Y}_{l}(i)|Y_{k}^{i-1},\tilde{Y}_{i+1}^n,Y_k^{l-1}(i),\tilde{Y}_{l+1}^M(i))\right]
\end{align}
where each of the summand is negative, i.e., we have
\begin{align}
I(W_k;Y_{kl}(i)|Y_{k}^{i-1},\tilde{Y}_{i+1}^n,Y_k^{l-1}(i),\tilde{Y}_{l+1}^M(i))-I(W_k;\tilde{Y}_{l}(i)|Y_{k}^{i-1},\tilde{Y}_{i+1}^n,Y_k^{l-1}(i),\tilde{Y}_{l+1}^M(i))
\leq 0
\end{align}
because $\tilde{Y}_{l}(i)$ is the observation of the strongest
user in the $l$th sub-channel, i.e., its channel is less noisy
with respect to all other users in the $l$th sub-channel. This
concludes the proof of the lemma.
\end{proof}
This lemma implies that
\begin{align}
H(W_k|\tilde{Y}^n)\leq H(W_k|Y_k^n)\leq \epsilon_n
\label{dummy_lemma_implies}
\end{align}
where the second inequality is due to Fano's lemma. Using (\ref{dummy_lemma_implies}), we get
\begin{align}
H(W_1,\ldots,W_K|\tilde{Y}^n)&\leq \sum_{k=1}^K H(W_k|\tilde{Y}^n)\leq K\epsilon_n
\label{dummy_lemma_implies_1}
\end{align}
where the first inequality follows from the fact that conditioning cannot increase
entropy.

We now start the converse proof.
\begin{align}
H(W_1,\ldots,W_K|Z^n)&\leq I(W_1,\ldots,W_K;\tilde{Y}^n)- I(W_1,\ldots,W_K;Z^n)+K\epsilon_n
\label{dummy_lemma_implies_2}\\
&=\sum_{i=1}^n\sum_{l=1}^M
\left[
I(W_1,\ldots,W_K;\tilde{Y}_{l}(i)|Z^{i-1},\tilde{Y}_{i+1}^n,Z^{l-1}(i),\tilde{Y}_{l+1}^M(i))
\right.\nonumber\\
&\quad \left.-
I(W_1,\ldots,W_K;Z_{l}(i)|Z^{i-1},\tilde{Y}_{i+1}^n,Z^{l-1}(i),\tilde{Y}_{l+1}^M(i))\right]+K\epsilon_n
\label{consecutive_CK}
\end{align}
where (\ref{dummy_lemma_implies_2}) is a consequence of
(\ref{dummy_lemma_implies_1}) and (\ref{consecutive_CK}) is
obtained via consecutive uses of the Csiszar-Korner
identity~\cite{Korner} as we did in
Appendix~\ref{proof_of_common_less_noisier_parallel}. We define
the set of indices $\mathcal{S}$ such that for all
$l\in\mathcal{S}$, the strongest user in the $l$th sub-channel has
a less noisy channel with respect to the eavesdropper, i.e., we
have
\begin{align}
I(U;\tilde{Y}_{l}(i))\geq I(U;Z_l(i))
\end{align}
for all $U\rightarrow X_{l}(i)\rightarrow (\tilde{Y}_l(i),Z_l(i))$
and any $l\in\mathcal{S}$. Thus, we can further bound
(\ref{consecutive_CK}) as follows,
\begin{align}
H(W_1,\ldots,W_K|Z^n)&\leq \sum_{i=1}^n\sum_{l\in\mathcal{S}}
\left[
I(W_1,\ldots,W_K;\tilde{Y}_{l}(i)|Z^{i-1},\tilde{Y}_{i+1}^n,Z^{l-1}(i),\tilde{Y}_{l+1}^M(i))
\right.\nonumber\\
&\quad \left.-
I(W_1,\ldots,W_K;Z_{l}(i)|Z^{i-1},\tilde{Y}_{i+1}^n,Z^{l-1}(i),\tilde{Y}_{l+1}^M(i))\right]+K\epsilon_n
\label{drop_negatives_2}\\
&\leq \sum_{i=1}^n\sum_{l\in\mathcal{S}} \left[
I(W_1,\ldots,W_K,Z^{i-1},\tilde{Y}_{i+1}^n,Z^{l-1}(i),\tilde{Y}_{l+1}^M(i);\tilde{Y}_{l}(i))
\right.\nonumber\\
&\quad \left.-
I(W_1,\ldots,W_K,Z^{i-1},\tilde{Y}_{i+1}^n,Z^{l-1}(i),\tilde{Y}_{l+1}^M(i);Z_{l}(i))\right]+K\epsilon_n
\label{less_noisy_imply_1}\\
&\leq \sum_{i=1}^n\sum_{l\in\mathcal{S}} \left[
I(X_l(i),W_1,\ldots,W_K,Z^{i-1},\tilde{Y}_{i+1}^n,Z^{l-1}(i),\tilde{Y}_{l+1}^M(i);\tilde{Y}_{l}(i))
\right.\nonumber\\
&\quad \left.-
I(X_l(i),W_1,\ldots,W_K,Z^{i-1},\tilde{Y}_{i+1}^n,Z^{l-1}(i),\tilde{Y}_{l+1}^M(i);Z_{l}(i))\right]+K\epsilon_n
\label{less_noisy_imply_2}\\
&= \sum_{i=1}^n\sum_{l\in\mathcal{S}} \left[
I(X_l(i);\tilde{Y}_{l}(i)) -I(X_l(i);Z_{l}(i))\right]+K\epsilon_n
\label{memoryless_channel_11}
\end{align}
where (\ref{drop_negatives_2}) is obtained by dropping the
negative terms,
(\ref{less_noisy_imply_1})-(\ref{less_noisy_imply_2}) are due to
the following inequalities
\begin{align}
I(Z^{i-1},\tilde{Y}_{i+1}^n,Z^{l-1}(i),\tilde{Y}_{l+1}^M(i);\tilde{Y}_{l}(i))&\geq
I(Z^{i-1},\tilde{Y}_{i+1}^n,Z^{l-1}(i),\tilde{Y}_{l+1}^M(i);Z_{l}(i))
\end{align}
\begin{align}
\lefteqn{I(X_l(i);\tilde{Y}_{l}(i)|W_1,\ldots,W_K,Z^{i-1},\tilde{Y}_{i+1}^n,Z^{l-1}(i),\tilde{Y}_{l+1}^M(i))
\geq}\nonumber\\
&\hspace{4cm}I(X_l(i);Z_{l}(i)|W_1,\ldots,W_K,Z^{i-1},\tilde{Y}_{i+1}^n,Z^{l-1}(i),\tilde{Y}_{l+1}^M(i))
\end{align}
which come from the fact that for any $l\in\mathcal{S}$, the
strongest user in the $l$th sub-channel has a less noisy channel
with respect to the eavesdropper. Finally, we get
(\ref{memoryless_channel_11}) using the following Markov chain
\begin{align}
(W_1,\ldots,W_K,Z^{i-1},\tilde{Y}_{i+1}^n,Z^{l-1}(i),\tilde{Y}_{l+1}^M(i))\rightarrow
X_{l}(i)\rightarrow (\tilde{Y}_{l},Z_l(i))
\end{align}
which is a consequence of the facts that channel is memoryless,
and the sub-channels are independent.

\section{Proofs of Theorems~\ref{Theorem_product_wiretap_channel}~and~\ref{Theorem_D_PWC_M}}

\subsection{Proof of Theorem~\ref{Theorem_product_wiretap_channel}}
\label{proof_of_product_wiretap_channel}

We prove Theorem~\ref{Theorem_product_wiretap_channel} in two
parts, first achievability and then converse. Throughout the
proof, we use the shorthand notations $Y_1^n=(Y_{11}^n,Y_{12}^n)$,
$Y_2^n=(Y_{21}^n,Y_{22}^n)$, $Z_1^n=(Z_{1}^n,Z_{2}^n)$.

\subsubsection{Achievability} To show the achievability of the
region given by
(\ref{product_common_rate_1})-(\ref{product_sum_rate_2}), first we
need to note that the boundary of this region can be decomposed
into three surfaces as follows \cite{Product_Broadcast}.

\begin{itemize}
\item First surface:
\begin{align}
R_0&\leq I(U_2;Y_{12}|Z_2)\label{first_surface_1}\\
R_2&\leq I(X_2;Y_{22}|U_2,Z_2)\label{first_surface_2}\\
R_0+R_1&\leq
I(X_1;Y_{11}|Z_1)+I(U_2;Y_{12}|Z_2)\label{first_surface_3},\quad
U_1=\phi
\end{align}
\item Second surface:
\begin{align}
R_0&\leq I(U_1;Y_{21}|Z_1)\label{second_surface_1}\\
R_1&\leq I(X_1;Y_{11}|U_1,Z_1)\label{second_surface_2}\\
R_0+R_2 &\leq I(X_2;Y_{22}|Z_2)+I(U_1;Y_{21}|Z_1)
\label{second_surface_3},\quad U_2=\phi
\end{align}
\item Third surface:
\begin{align}
R_0&\leq I(U_1;Y_{11}|Z_1)+I(U_2;Y_{12}|Z_2)\label{third_surface_1}\\
R_0&\leq I(U_1;Y_{21}|Z_1)+I(U_2;Y_{22}|Z_2)\label{third_surface_2}\\
R_1&\leq I(X_1;Y_{11}|U_1,Z_1)\label{third_surface_3}\\
R_2&\leq I(X_2;Y_{22}|U_2,Z_2)\label{third_surface_4}
\end{align}
\end{itemize}
We now show the achievability of these regions separately. Start
with the first region.

\begin{Prop}
The region defined by
(\ref{first_surface_1})-(\ref{first_surface_3}) is achievable.
\end{Prop}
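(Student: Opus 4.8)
The plan is a random-coding argument with Wyner-style stochastic encoding at each layer, in exactly the spirit of the achievability proof of Theorem~\ref{less_noisy_nultiuser_wiretap} in Appendix~\ref{proof_of_less_noisy_multiuser_wiretap}. Fix a distribution $p(x_1)$ for the first sub-channel and $p(u_2,x_2)$ for the second (recall that $U_1=\phi$ on this surface). In the second sub-channel we use superposition coding: generate cloud-center codewords $\bu_2$ at total rate $I(U_2;Y_{12})$, of which a dummy portion of rate $I(U_2;Z_2)$ is reserved for randomization, and superimpose satellite codewords $\bx_2$ at total rate $I(X_2;Y_{22}|U_2)$ with dummy portion $I(X_2;Z_2|U_2)$. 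In the first sub-channel we generate codewords $\bx_1$ at total rate $I(X_1;Y_{11})$ with dummy portion $I(X_1;Z_1)$. We route the messages as follows: the common message $W_0$ together with a first piece $W_1^{b}$ of user~$1$'s private message is carried by the cloud center $\bu_2$, so that $R_0+R_1^{b}\le I(U_2;Y_{12})-I(U_2;Z_2)=I(U_2;Y_{12}|Z_2)$, where the equality uses $U_2\rightarrow Y_{12}\rightarrow Z_2$; the remaining piece $W_1^{a}$ of user~$1$'s private message is carried by $\bx_1$, so that $R_1^{a}\le I(X_1;Y_{11}|Z_1)$; and the private message $W_2$ is carried by $\bx_2$, so that $R_2\le I(X_2;Y_{22}|U_2,Z_2)$.

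For decoding, user~$2$, being the stronger user in the second sub-channel, decodes $\bu_2$ and then $\bx_2$ and thereby recovers $W_0$ and $W_2$; it makes no use of the first sub-channel. User~$1$ decodes $\bu_2$ from its (degraded) observation $Y_{12}^n$ while treating the satellite selection as noise, which is legitimate since $\bx_2$ is drawn i.i.d.\ given $\bu_2$, and thereby recovers $W_0$ and $W_1^{b}$; independently it decodes $\bx_1$ from $Y_{11}^n$ and recovers $W_1^{a}$. Standard packing-lemma arguments give vanishing maximal error probability under the displayed rate constraints. Writing $R_1=R_1^{a}+R_1^{b}$ and eliminating the split, we obtain exactly $R_0\le I(U_2;Y_{12}|Z_2)$, $R_2\le I(X_2;Y_{22}|U_2,Z_2)$, and $R_0+R_1\le I(X_1;Y_{11}|Z_1)+I(U_2;Y_{12}|Z_2)$, i.e.\ (\ref{first_surface_1})--(\ref{first_surface_3}).

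For the secrecy constraint I would bound $H(W_0,W_1,W_2|Z_1^n,Z_2^n)$ along the lines of (\ref{proof_degraded_equi_calc_step_1})--(\ref{proof_degraded_equi_calc_step_4}): expand this entropy in terms of the codeword triple $(\bx_1^n,\bu_2^n,\bx_2^n)$, whose entropy is $n(R_0+R_1+R_2)+n\big(I(X_1;Z_1)+I(U_2;Z_2)+I(X_2;Z_2|U_2)\big)=n(R_0+R_1+R_2)+n\big(I(X_1;Z_1)+I(X_2;Z_2)\big)$; next use that $I(\bx_1^n,\bu_2^n,\bx_2^n;Z_1^n,Z_2^n)\le nI(X_1;Z_1)+nI(X_2;Z_2)+\gamma_n$, which holds because the two sub-channels are independent (so the eavesdropper's channel factors as $p(z_1|x_1)p(z_2|x_2)$) and $U_2\rightarrow X_2\rightarrow Z_2$; and finally use Fano's lemma to argue that, given all the messages, the eavesdropper can decode every dummy index, since in each sub-channel the dummy rates were chosen to equal the eavesdropper's per-layer mutual informations and $I(U_2;Z_2)+I(X_2;Z_2|U_2)=I(X_2;Z_2)$. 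The two $\big(I(X_1;Z_1)+I(X_2;Z_2)\big)$ terms cancel, leaving $\frac{1}{n} H(W_0,W_1,W_2|Z_1^n,Z_2^n)\ge R_0+R_1+R_2-\epsilon_n$, and Lemma~\ref{lemma_sum_rate_sufficient}, which applies verbatim with $Z^n$ replaced by $(Z_1^n,Z_2^n)$, then upgrades this sum-rate bound to all the required subset secrecy constraints.

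I expect the only delicate part to be the secrecy bookkeeping across the two sub-channels: one must check that the randomization injected separately into each sub-channel combines correctly against the eavesdropper's joint observation $(Z_1^n,Z_2^n)$. Because the sub-channels are independent, the eavesdropper's channel factors and the per-layer dummy rates were chosen precisely so that each layer's leakage $I(\cdot\,;Z_l^n)$ is absorbed, so this reduces to the same computation as in Appendix~\ref{proof_of_less_noisy_multiuser_wiretap}; the reliability side is routine superposition coding, and the rate-splitting of $W_0$ and $W_1$ between the sub-channels is the only conceptual ingredient.
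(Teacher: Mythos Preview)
Your proposal is correct and follows essentially the same approach as the paper: the same rate splitting of $W_1$ between $\bx_1$ and the cloud center $\bu_2$ (the paper writes $R_1=R_{11}+R_{12}$ where your $W_1^{a},W_1^{b}$ appear), the same choice of confusion rates $I(X_1;Z_1),\,I(U_2;Z_2),\,I(X_2;Z_2|U_2)$, and the same equivocation computation in which $H(U_2^n,X_2^n,X_1^n)$ is expanded, the leakage is bounded by $nI(X_1;Z_1)+nI(X_2;Z_2)$ via the independence of the two sub-channels and the Markov chain $U_2\rightarrow X_2\rightarrow Z_2$, and Fano's lemma handles the eavesdropper's decoding of the dummy indices. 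Your invocation of Lemma~\ref{lemma_sum_rate_sufficient} to pass from the sum-rate secrecy bound to all subset constraints is also exactly what the paper does.
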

\begin{proof}
Fix the probability distribution
\begin{align}
p(x_1)p(u_2)p(x_2|u_2)p(y_1,y_2,z|x)
\end{align}
\noindent\underline{\textbf{Codebook generation:}}
\begin{itemize}
\item Split the private message rate of user 1 as
$R_1=R_{11}+R_{12}$.

\item Generate $2^{n(R_{11}+\tilde{R}_{11})}$ length-$n$ sequences
$\bx_1$ through $p(\bx_1)=\prod_{i=1}^n p(x_{1,i})$ and index them
as $\bx_1(w_{11},\tilde{w}_{11})$ where
$w_{11}\in\left\{1,\ldots,2^{nR_{11}}\right\}$ and
$\tilde{w}_{11}\in\left\{1,\ldots,2^{n\tilde{R}_{11}}\right\}$.

\item Generate $2^{n(R_0+R_{12}+\tilde{R}_{12})}$ length-$n$
sequences $\bu_2$ through $p(\bu_2)=\prod_{i=1}^n p(u_{2,i})$ and
index them as $\bu_2(w_0,w_{12},\tilde{w}_{12})$ where
$w_0\in\left\{1,\ldots,2^{nR_{0}}\right\}$,
$w_{12}\in\left\{1,\ldots,2^{nR_{12}}\right\}$ and
$\tilde{w}_{12}\in\left\{1,\ldots,2^{n\tilde{R}_{12}}\right\}$.

\item For each $\bu_2$, generate $2^{n(R_2+\tilde{R}_2)}$
length-$n$ sequences $\bx_2$ through $p(\bx_2|\bu_2)=\prod_{i=1}^n
p(x_{2,i}|u_{2,i})$ and index them as $\bx_2
(w_2,\tilde{w}_2,w_0,w_{12},\tilde{w}_{12})$ where
$w_2\in\left\{1,\ldots,2^{nR_2}\right\}$,
$\tilde{w}_2\in\left\{1,\ldots,2^{n\tilde{R}_2}\right\}$.

\item Furthermore, set the confusion message rates as follows.
\begin{align}
\tilde{R}_{11} &= I(X_1;Z_1)\label{first_surface_dummy_rate_1}\\
\tilde{R}_{12} & =I(U_2;Z_2) \label{first_surface_dummy_rate_2} \\
\tilde{R}_{2}  &=I(X_2;Z_2|U_2) \label{first_surface_dummy_rate_3}
\end{align}
\end{itemize}

\vspace{0.5cm} \noindent\underline{\textbf{Encoding:}}

\vspace{0.25cm}  If $(w_0,w_{11},w_{12},w_2)$ is the message to be
transmitted, then the receiver randomly picks
$\left(\tilde{w}_{11},\tilde{w}_{12},\tilde{w}_2\right)$ and sends
the corresponding codewords through each channel.

\vspace{0.5cm} \noindent\underline{\textbf{Decoding:}}

\vspace{0.25cm} It is straightforward to see that if the following
conditions are satisfied, then both users can decode the messages
directed to themselves with vanishingly small error probability.
\begin{align}
R_0+\tilde{R}_{12}+R_{12}&\leq I(U_2;Y_{12})\\
R_{11}+\tilde{R}_{11}&\leq I(X_1;Y_{11})\\
R_2+\tilde{R}_2 & \leq I(X_2;Y_{22}|U_2)
\end{align}
After eliminating $R_{11}$ and $R_{12}$ and plugging the values of
$\tilde{R}_{11},\tilde{R}_{12},\tilde{R}_2$, we can reach the
following conditions,
\begin{align}
R_0&\leq I(U_2;Y_{12}|Z_2)\\
R_2&\leq I(X_2;Y_{22}|U_2,Z_2)\\
R_0+R_1&\leq I(X_1;Y_{11}|Z_1)+I(U_2;Y_{12}|Z_2)
\end{align}
where we used the degradedness of the channel. Thus, we only need
to show that this coding scheme satisfies the secrecy constraints.

\vspace{0.5cm} \noindent\underline{\textbf{Equivocation
computation}}:

\vspace{0.1cm} As shown previously in
Lemma~\ref{lemma_sum_rate_sufficient} of
Appendix~\ref{proof_of_less_noisy_multiuser_wiretap}, checking the
sum rate secrecy condition is sufficient.
\begin{align}
\lefteqn{H(W_0,W_1,W_2|Z^n)=H(W_0,W_1,W_2,Z^n)-H(Z^n)}\nonumber \\
&=H(W_0,W_1,W_2,U_2^n,X_2^n,X_1^n,Z^n)-H(U_2^n,X_2^n,X_1^n|W_0,W_1,W_2,Z^n)-H(Z^n)\\
&=H(U_2^n,X_2^n,X_1^n)+H(W_0,W_1,W_2,Z^n|U_2^n,X_2^n,X_1^n)-H(Z^n)\nonumber
\\
&\quad -H(U_2^n,X_2^n,X_1^n|W_0,W_1,W_2,Z^n)\\
&\geq
H(U_2^n,X_2^n,X_1^n)+H(Z^n|U_2^n,X_2^n,X_1^n)-H(Z^n)-H(U_2^n,X_2^n,X_1^n|W_0,W_1,W_2,Z^n)
\label{product_ach_equi_1_step_1}
\end{align}
We treat each term in (\ref{product_ach_equi_1_step_1})
separately. The first term in (\ref{product_ach_equi_1_step_1}) is
\begin{align}
H(U_2^n,X_2^n,X_1^n)&=H(U_2^n,X_2^n)+H(X_1^n)\\
&=
n(R_0+R_{11}+R_2+R_{12}+\tilde{R}_{11}+\tilde{R}_{12}+\tilde{R}_2)
\label{product_ach_equi_1_step_11}
\end{align}
where the first equality is due to the independence of
$(U_2^n,X_2^n)$ and $X_1^n$, and the second equality is due the
fact that both messages and confusion codewords are uniformly
distributed. The second and the third terms in
(\ref{product_ach_equi_1_step_1}) are
\begin{align}
H(Z^n)-H(Z^n|U_2^n,X_2^n,X_1^n)&=
H(Z_1^n,Z_2^n)-H(Z^n|U_2^n,X_2^n,X_1^n)\\
&\leq H(Z_1^n)+H(Z_2^n)-H(Z_1^n,Z_2^n|U_2^n,X_2^n,X_1^n)\\
&=H(Z_1^n)+H(Z_2^n)-H(Z_1^n,Z_2^n|X_2^n,X_1^n)\label{product_ach_markov_1}\\
&= H(Z_1^n)+H(Z_2^n)-H(Z_1^n|X_1^n)-H(Z_2^n|X_2^n)\label{product_ach_markov_2}\\
&=I(X_1^n;Z_1^n)+I(X_2^n;Z_2^n)\\
&\leq nI(X_1;Z_1)+nI(X_2;Z_2)+\gamma_{1,n}+\gamma_{2,n}
\label{memoryless_channel_8}
\end{align}
where the equalities in (\ref{product_ach_markov_1}) and
(\ref{product_ach_markov_2}) are due to the following Markov
chains
\begin{align}
U_2^n & \rightarrow X_2^n \rightarrow
\left(X_1^n,Z_1^n,Z_2^n\right)\\
Z_2^n&\rightarrow X_2^n\rightarrow X_1^n \rightarrow Z_1^n
\end{align}
respectively, and the last inequality in
(\ref{memoryless_channel_8}) can be shown using the technique
devised in~\cite{Wyner}. To bound the last term in
(\ref{product_ach_equi_1_step_1}), assume that the eavesdropper
tries to decode $\left(U_2^n,X_2^n,X_1^n\right)$ using the side
information $W_0,W_1,W_2$ and its observation. Since the rates of
the confusion codewords are selected such that the eavesdropper
can decode them given $W_0=w_0,W_1=w_1,W_2=w_2$ (see
(\ref{first_surface_dummy_rate_1})-(\ref{first_surface_dummy_rate_3})),
using Fano's lemma, we get
\begin{align}
H(U_2^n,X_2^n,X_1^n|W_0,W_1,W_2,Z^n)\leq \epsilon_n
\label{product_ach_equi_1_step_12}
\end{align}
for the third term in (\ref{product_ach_equi_1_step_1}). Plugging
(\ref{product_ach_equi_1_step_11}), (\ref{memoryless_channel_8})
and (\ref{product_ach_equi_1_step_12}) into
(\ref{product_ach_equi_1_step_1}), we get
\begin{align}
H(W_0,W_1,W_2|Z^n) \geq
n(R_0+R_1+R_2)-\epsilon_n-\gamma_{1,n}-\gamma_{2,n}
\end{align}
which completes the proof.
\end{proof}

Achievability of the region defined by
(\ref{second_surface_1})-(\ref{second_surface_3}) follows due to
symmetry. We now show the achievability of the region defined by
(\ref{third_surface_1})-(\ref{third_surface_4}).

\begin{Prop} The region described by
(\ref{third_surface_1})-(\ref{third_surface_4}) is achievable.
\end{Prop}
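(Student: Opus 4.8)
The plan is to build on the scheme behind the previous proposition, now using superposition coding \emph{in both} sub-channels, each layer carrying its own confusion (stochastic) index, and importing the common-message rate-splitting that underlies the capacity region of the product broadcast channel in \cite{Product_Broadcast}. Fix a distribution $p(u_1,x_1)p(u_2,x_2)$. I would generate, in the first sub-channel, $2^{n(R_{a_1}+\tilde{R}_{u_1})}$ cloud codewords $\bu_1$ from $\prod_i p(u_{1,i})$ and, on top of each, $2^{n(R_c+R_1+\tilde{R}_{x_1})}$ satellite codewords $\bx_1$ from $\prod_i p(x_{1,i}|u_{1,i})$; symmetrically, in the second sub-channel, $2^{n(R_{a_2}+\tilde{R}_{u_2})}$ cloud codewords $\bu_2$ and $2^{n(R_c+R_2+\tilde{R}_{x_2})}$ satellites $\bx_2$. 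The common message $w_0$ is split as $(w_{a_1},w_{a_2},w_c)$ with $R_{a_1}+R_{a_2}+R_c=R_0$: $w_{a_1}$ is placed on $\bu_1$, $w_{a_2}$ on $\bu_2$, and $w_c$ is carried \emph{redundantly} on both $\bx_1$ and $\bx_2$, so that receiver $1$ --- the stronger receiver in the first sub-channel --- recovers $w_c$ there, while receiver $2$ --- the stronger receiver in the second sub-channel --- recovers it there; the private messages $w_1,w_2$ ride on $\bx_1,\bx_2$. The confusion rates are pinned to the eavesdropper's view of each layer, $\tilde{R}_{u_1}=I(U_1;Z_1)$, $\tilde{R}_{x_1}=I(X_1;Z_1|U_1)$, $\tilde{R}_{u_2}=I(U_2;Z_2)$, $\tilde{R}_{x_2}=I(X_2;Z_2|U_2)$, and the encoder picks all confusion indices uniformly at random.

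For decoding, receiver $1$ jointly decodes $(\bu_1,\bx_1)$ from $Y_{11}^n$ and decodes $\bu_2$ from $Y_{12}^n$, while receiver $2$ decodes $\bu_1$ from $Y_{21}^n$ and jointly decodes $(\bu_2,\bx_2)$ from $Y_{22}^n$; the usual joint-typicality analysis makes all of this reliable provided $R_{a_1}+\tilde{R}_{u_1}\le I(U_1;Y_{21})$, $R_c+R_1+\tilde{R}_{x_1}\le I(X_1;Y_{11}|U_1)$, $R_{a_1}+R_c+R_1+\tilde{R}_{u_1}+\tilde{R}_{x_1}\le I(X_1;Y_{11})$, and the three mirror-image inequalities in the second sub-channel. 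Substituting the chosen confusion rates, using the degradedness Markov chains $X_1\to Y_{11}\to Y_{21}\to Z_1$ and $X_2\to Y_{22}\to Y_{12}\to Z_2$ to turn each difference $I(\cdot;Y)-I(\cdot;Z)$ into a conditional mutual information $I(\cdot;Y|Z)$, and finally eliminating $R_{a_1},R_{a_2},R_c\ge 0$ by Fourier--Motzkin, yields an achievable region which I would then check contains the region (\ref{third_surface_1})--(\ref{third_surface_4}); together with the first and second surfaces this exhausts the region of Theorem~\ref{Theorem_product_wiretap_channel}. I expect this bookkeeping to be the main obstacle: obtaining the \emph{two} pooled common-rate bounds (\ref{third_surface_1})--(\ref{third_surface_2}), rather than the strictly smaller region a naive fixed pre-split would give, is exactly the delicate point, and if the one-auxiliary split above falls short pointwise one augments it with a layered cloud (a second auxiliary per sub-channel) as in \cite{Product_Broadcast}, which does not interact with the confusion layer.

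Finally, the equivocation calculation is essentially the one already carried out for the first surface. By Lemma~\ref{lemma_sum_rate_sufficient} it suffices to verify the sum-rate secrecy constraint $H(W_0,W_1,W_2|Z_1^n,Z_2^n)\ge n(R_0+R_1+R_2)-\epsilon_n$. Expanding this entropy by introducing the transmitted codeword tuple $(\bu_1,\bx_1,\bu_2,\bx_2)$, I would (i) lower-bound its entropy by $n$ times the total of all message and confusion rates, since messages and confusion indices are uniform and independent; (ii) upper-bound $I(\bu_1,\bx_1,\bu_2,\bx_2;Z_1^n,Z_2^n)$ by $I(X_1^n;Z_1^n)+I(X_2^n;Z_2^n)\le n[I(X_1;Z_1)+I(X_2;Z_2)]+\gamma_n$, using the independence of the two sub-channels together with the single-letterization of \cite{Wyner}; and (iii) bound the residual term $H(\bu_1,\bx_1,\bu_2,\bx_2|W_0,W_1,W_2,Z_1^n,Z_2^n)$ by $\epsilon_n$ via Fano's lemma, because, given the messages, the rates of the confusion indices match the eavesdropper's channels in each sub-channel, so it can decode them all. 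Since the confusion rates sum to $I(X_1;Z_1)+I(X_2;Z_2)$, the excess terms cancel and the sum-rate secrecy bound follows, completing the achievability of the third surface.
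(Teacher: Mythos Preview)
Your equivocation analysis is essentially the paper's and is fine. The gap is in the rate region: the rate-splitting scheme you describe does \emph{not} achieve the third surface, and the remedy you gesture at (extra auxiliaries per sub-channel) is neither needed nor the idea the paper uses.

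Concretely, set $A_1=I(U_1;Y_{21}|Z_1)$, $A_1'=I(U_1;Y_{11}|Z_1)$, $B_1=I(X_1;Y_{11}|U_1,Z_1)$ and symmetrically $A_2,A_2',B_2$ (so $A_j\le A_j'$ by degradedness). After substituting your confusion rates, your decoding constraints reduce to $R_{a_1}\le A_1$, $R_{a_2}\le A_2$, $R_c+R_1\le B_1$, $R_c+R_2\le B_2$; the two ``total'' constraints $R_{a_j}+R_c+R_j\le A_j'+B_j$ that you list are redundant, being implied by the previous ones since $A_j\le A_j'$. Fourier--Motzkin then gives only $R_0\le A_1+A_2+\min(B_1-R_1,\,B_2-R_2)$. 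In particular, at the corner $R_1=B_1$, $R_2=B_2$ you obtain $R_0\le A_1+A_2$, whereas the third surface allows $R_0\le\min(A_1'+A_2,\,A_1+A_2')$, which is strictly larger whenever either $U_j$ is nontrivial. Placing $w_c$ on the satellite layer forces it to compete with the private messages and never lets you exploit the stronger cloud links $A_j'$.

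The paper's construction avoids rate-splitting altogether for this surface: the \emph{entire} common message $w_0$ is carried on \emph{both} clouds, $\bu_1=\bu_1(w_0,\tilde w_{01})$ and $\bu_2=\bu_2(w_0,\tilde w_{02})$, with $\tilde R_{0j}=I(U_j;Z_j)$; the satellites carry only $(w_j,\tilde w_j)$. Each receiver $j$ decodes $w_0$ by \emph{intersecting} typicality tests across the two independent sub-channels, declaring $\hat w_0$ only if confusion indices exist making $(\bu_1(\hat w_0,\cdot),\by_{j1})$ \emph{and} $(\bu_2(\hat w_0,\cdot),\by_{j2})$ jointly typical. Because the sub-channels and cloud codebooks are independent, the false-alarm probabilities multiply, and reliability requires only
\[
R_0+\tilde R_{01}+\tilde R_{02}\le I(U_1;Y_{j1})+I(U_2;Y_{j2}),\qquad j=1,2,
\]
which, after inserting the confusion rates and using degradedness, is exactly (\ref{third_surface_1})--(\ref{third_surface_2}). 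The private bounds (\ref{third_surface_3})--(\ref{third_surface_4}) then come from decoding the satellites given the clouds, and the equivocation step is the one you wrote. No Fourier--Motzkin elimination and no additional auxiliaries are needed; the pooled common-rate bounds fall out directly from joint decoding of the duplicated cloud message across sub-channels.
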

\begin{proof}
Fix the probability distribution as follows,
\begin{align}
p(u_1)p(x_1|u_1)p(u_2)p(x_2|u_2)p(y_1,y_2,z|x)
\end{align}

\noindent \underline{\textbf{Codebook generation:}}

\begin{itemize}

\item Generate $2^{n(R_0+\tilde{R}_{01})}$ length-$n$ sequences
$\bu_1$ through $p(\bu_1)=\prod_{i=1}^{n}p(u_{1,i})$ and index
them as $\bu_1 (w_0,\tilde{w}_{01})$ where
$w_0\in\left\{1,\ldots,2^{nR_0}\right\}$,
$\tilde{w}_{01}\in\left\{1,\ldots,2^{n\tilde{R}_{01}}\right\}$.

\item For each $\bu_1$, generate $2^{n(R_1+\tilde{R}_1)}$ $\bx_1
(w_0,\tilde{w}_{01},w_1,\tilde{w}_1)$ length-$n$ sequences $\bx_1$
through $p(\bx_1)=\prod_{i=1}^n p(x_{1,i}|u_{1,i})$ where
$w_1\in\left\{1,\ldots,2^{nR_1}\right\}$,
$\tilde{w}_1\in\left\{1,\ldots,2^{n\tilde{R}_1}\right\}$.

\item Generate $2^{n(R_0+\tilde{R}_{02})}$ length-$n$ sequences
$\bu_2$ through $p(\bu_2)=\prod_{i=1}^{n}p(u_{2,i})$ and index
them as $\bu_2 (w_0,\tilde{w}_{02})$ where
$w_0\in\left\{1,\ldots,2^{nR_0}\right\}$,
$\tilde{w}_{02}\in\left\{1,\ldots,2^{n\tilde{R}_{02}}\right\}$.

\item For each $\bu_2$, generate $2^{n(R_2+\tilde{R}_2)}$ $\bx_2
(w_0,\tilde{w}_{02},w_2,\tilde{w}_2)$ length-$n$ sequences $\bx_2$
through $p(\bx_2)=\prod_{i=1}^n p(x_{2,i}|u_{2,i})$ where
$w_2\in\left\{1,\ldots,2^{nR_2}\right\}$,
$\tilde{w}_2\in\left\{1,\ldots,2^{n\tilde{R}_2}\right\}$.

\item Moreover, set the rates of confusion messages as follows,
\begin{align}
\tilde{R}_{01}&=I(U_1;Z_1)\label{third_surface_dummy_rate_1}\\
\tilde{R}_{02}&=I(U_2;Z_2)\label{third_surface_dummy_rate_2}\\
\tilde{R}_{1}&=I(X_1;Z_1|U_1)\label{third_surface_dummy_rate_3}\\
\tilde{R}_{2}&=I(X_2;Z_2|U_2)\label{third_surface_dummy_rate_4}
\end{align}

\end{itemize}

\vspace{0.5cm} \noindent \underline{\textbf{Encoding:}}

\vspace{0.25cm} Assume that the messages to be transmitted are
$\left(w_0,w_1,w_2\right)$. Then, after randomly picking the tuple
$(\tilde{w}_{01},\tilde{w}_{02},\tilde{w}_1,\tilde{w}_2)$,
corresponding codewords are sent.

\vspace{0.5cm} \noindent \underline{\textbf{Decoding:}}

\vspace{0.25cm} Users decode $w_0$ using their both observations.
If $w_0$ is the only message that satisfies
\begin{align}
E_{i1}^{w_0}=\left\{ \exists
\tilde{w}_{01}:\left(\bu_{1}(w_0,\tilde{w}_{01}),\by_{i1}\right)\in
A_{\epsilon}^n\right\}\\
E_{i2}^{w_0}=\left\{\exists
\tilde{w}_{02}:\left(\bu_{2}(w_0,\tilde{w}_{02}),\by_{i2}\right)\in
A_{\epsilon}^n\right\}
\end{align}
simultaneously for user $i$, $w_0$ is declared to be transmitted.
Assume $w_0=1$ is transmitted. The error probability for user $i$
can be bounded as
\begin{align}
\label{sum_unmatch_third_surface_start} \Pr\left(E_i\right) \leq
\Pr \left(\left(E_{i1}^1,E_{i2}^1\right)
^c\right)+\sum_{j=2}^{2^{nR_0}} \Pr \left(E_{i1}^j,E_{i2}^j\right)
\end{align}
using the union bound. Let us consider the following
\begin{align}
\Pr \left(E_{i1}^j\right)&=\Pr\left(\exists
\tilde{w}_{01}:\left(\bu_{1}(j,\tilde{w}_{01}),\by_{i1}\right)\in
A_{\epsilon}^n  \right)\\
&\leq \sum_{\forall \tilde{w}_{01}}
\Pr\left(\left(\bu_{1}(j,\tilde{w}_{01}),\by_{i1}\right)\in
A_{\epsilon}^n  \right)\\
&\leq 2^{n\tilde{R}_{01}}2^{-n(I(U_1;Y_{i1})-\epsilon_n)} \\
&=2^{n(\tilde{R}_{01}-I(U_1;Y_{i1})+\epsilon_n)}
\end{align}
Similarly, we have
\begin{align}
\Pr \left(E_{i2}^j\right) &\leq
2^{n(\tilde{R}_{02}-I(U_2;Y_{i2})+\epsilon_n)}
\end{align}
Thus, the probability of declaring that the $j$th message was
transmitted can be bounded as
\begin{align}
 \Pr \left(E_{i1}^j,E_{i2}^j\right)&=\Pr \left(E_{i1}^j\right)\times
 \Pr\left(E_{i2}^j\right)\\
 &\leq 2^{n(\tilde{R}_{01}-I(U_1;Y_{i1})+\epsilon_n)}\times
2^{n(\tilde{R}_{02}-I(U_2;Y_{i2})+\epsilon_n)}\\
&= 2^{n(\tilde{R}_{01}-I(U_1;Y_{i1})+
\tilde{R}_{02}-I(U_2;Y_{i2})+2\epsilon_n)}
\end{align}
where the first equality is due to the independence of
sub-channels and codebooks used for each channel. Therefore, error
probability can be bounded as
\begin{align}
\Pr\left(E_i\right)& \leq \epsilon_n
+\sum_{j=2}^{2^{nR_0}}2^{n(\tilde{R}_{01}-I(U_1;Y_{i1})+
\tilde{R}_{02}-I(U_2;Y_{i2})+2\epsilon_n)} \\
&=\epsilon_n + 2^{n(R_0+\tilde{R}_{01}-I(U_1;Y_{i1})+
\tilde{R}_{02}-I(U_2;Y_{i2})+2\epsilon_n)}
\end{align}
which vanishes if the following are satisfied,
\begin{align}
R_0+\tilde{R}_{01}+\tilde{R}_{02}\leq
I(U_1;Y_{i1})+I(U_2;Y_{i2}),\quad i=1,2
\label{third_surface_12_derive}
\end{align}
After decoding the common message, both users decode their private
messages if the rates satisfy
\begin{align}
R_1+\tilde{R}_1&\leq I(X_1;Y_{11}|U_1)\label{third_surface_3_derive}\\
R_2+\tilde{R}_2&\leq I(X_2;Y_{22}|U_2)
\label{third_surface_4_derive}
\end{align}
After plugging the values of
$\tilde{R}_{01},\tilde{R}_{02},\tilde{R}_1,\tilde{R}_2$ given by
(\ref{third_surface_dummy_rate_1})-(\ref{third_surface_dummy_rate_4})
into
(\ref{third_surface_12_derive})-(\ref{third_surface_4_derive}),
one can recover the region described by
(\ref{third_surface_1})-(\ref{third_surface_4}) using the
degradedness of the channel.

\vspace{0.5cm} \noindent \underline{\textbf{Equivocation
calculation:}}

\vspace{0.25cm} It is sufficient to check the sum rate constraint,
\begin{align}
\lefteqn{H(W_0,W_1,W_2|Z^n)=H(W_0,W_1,W_2,Z^n)-H(Z^n)}\\
&=H(U_1^n,U_2^n,X_1^n,X_2^n,W_0,W_1,W_2,Z^n)-H(U_1^n,U_2^n,X_1^n,X_2^n|W_0,W_1,W_2,Z^n)\nonumber
\\
&\quad -H(Z^n)\\
&=H(U_1^n,U_2^n,X_1^n,X_2^n)+H(W_0,W_1,W_2,Z^n|U_1^n,U_2^n,X_1^n,X_2^n)-H(Z^n)\nonumber
\\
&\quad -H(U_1^n,U_2^n,X_1^n,X_2^n|W_0,W_1,W_2,Z^n)\\
&\geq
H(U_1^n,U_2^n,X_1^n,X_2^n)+H(Z^n|U_1^n,U_2^n,X_1^n,X_2^n)-H(Z^n)\nonumber
\\
&\quad
-H(U_1^n,U_2^n,X_1^n,X_2^n|W_0,W_1,W_2,Z^n)\label{product_ach_equi_2_step_1}
\end{align}
where each term will be treated separately. The first term is
\begin{align}
H(U_1^n,U_2^n,X_1^n,X_2^n)&=H(U_1^n,U_2^n)+H(X_1^n|U_1^n,U_2^n)+H(X_1^n|U_1^n,U_2^n)\\
&=n(R_0+R_1+R_2+\tilde{R}_{01}+\tilde{R}_{02}+\tilde{R}_1+\tilde{R}_2)
\label{product_ach_equi_2_step_11}
\end{align}
where we first use the fact that $X_1^n$ and $X_2^n$ are
independent given $(U_1^n,U_2^n)$ and secondly, we use the fact
that messages are uniformly distributed. The second and third term
of (\ref{product_ach_equi_2_step_1}) are
\begin{align}
H(Z^n)-H(Z^n|U_1^n,U_2^n,X_1^n,X_2^n)&=
H(Z_1^n,Z_2^n)-H(Z_1^n|X_1^n)-H(Z_1^n|X_2^n)\\
&\leq H(Z_1^n)+H(Z_2^n)-H(Z_1^n|X_1^n)-H(Z_1^n|X_2^n)\\
&=I(X_1^n;Z_1^n)+I(X_2^n;Z_2^n)\\
&\leq nI(X_1;Z_1)+nI(X_2;Z_2)+\gamma_{1,n}+\gamma_{2,n}
\label{product_ach_equi_2_step_12}
\end{align}
where the first equality is due to the independence of the
sub-channels. We now consider the last term of
(\ref{product_ach_equi_2_step_1}) for which assume that
eavesdropper tries to decode
$\left(U_1^n,U_2^n,X_1^n,X_2^n\right)$ using the side information
$(W_0,W_1,W_2)$ and its observation. Since the rates of the
confusion messages are selected to ensure that the eavesdropper
can decode $\left(U_1^n,U_2^n,X_1^n,X_2^n\right)$ given
$(W_0=w_0,W_1=w_1,W_2=w_2)$ (see
(\ref{third_surface_dummy_rate_1})-(\ref{third_surface_dummy_rate_4})),
using Fano's lemma we have
\begin{align}
H(U_1^n,U_2^n,X_1^n,X_2^n|W_0,W_1,W_2,Z^n)\leq \epsilon_n
\label{product_ach_equi_2_step_13}
\end{align}
Plugging (\ref{product_ach_equi_2_step_11}),
(\ref{product_ach_equi_2_step_12}) and
(\ref{product_ach_equi_2_step_13}) into
(\ref{product_ach_equi_2_step_1}), we have
\begin{align}
H(W_0,W_1,W_2|Z^n)\geq
n(R_0+R_1+R_2)-\epsilon_n-\gamma_{1,n}-\gamma_{2,n}
\end{align}
which concludes the proof.
\end{proof}

\subsubsection{Converse}

First let us define the following auxiliary random variables,
\begin{align}
U_{1,i}& = W_0W_2Y_{12}^n Y_{11}^{i-1} Z_{1,i+1}^n \\
U_{2,i}&= W_0 W_1 Y_{21}^n Y_{22}^{i-1} Z_{2,i+1}^n
\end{align}
which satisfy the following Markov chains
\begin{align}
U_{1,i} \rightarrow X_{1,i} \rightarrow
\left(Y_{11,i},Y_{21,i},Z_{1,i}\right)\\
U_{2,i} \rightarrow X_{2,i} \rightarrow
\left(Y_{12,i},Y_{22,i},Z_{2,i}\right)
\end{align}
We remark that although $U_{1,i}$ and $U_{2,i}$ are correlated, at
the end of the proof, it will turn out that selection of them as
independent will yield the same region. We start with the common
message rate,
\begin{align}
H(W_0|Z^n)&= H(W_0)-I(W_0;Z^n) \\
&\leq I(W_0;Y_1^n)-I(W_0;Z^n) +\epsilon_n \label{Fano_4} \\
& = I(W_0;Y_1^n|Z^n) +\epsilon_n \label{product_degraded_imply_1}
\\
& = I(W_0;Y_{12}^n|Z^n)+I(W_0;Y_{11}^n|Y_{12}^n,Z^n) +\epsilon_n \\
& \leq I(W_0,W_1;Y_{12}^n|Z^n)
+I(W_0,W_2;Y_{11}^n|Y_{12}^n,Z^n)+\epsilon_n
\label{converse_product_common_step_1}
\end{align}
where (\ref{Fano_4}) is due to Fano's lemma, equality in
(\ref{product_degraded_imply_1}) is due to the fact that the
eavesdropper's channel is degraded with respect to the first
user's channel. We bound each term in
(\ref{converse_product_common_step_1}) separately. First term is
\begin{align}
I(W_0,W_1;Y_{12}^n|Z^n)&= \sum_{i=1}^n
I(W_0,W_1;Y_{12,i}|Y_{12}^{i-1},Z_1^n,Z_2^n)\label{converse_common_use_1_start} \\
&=\sum_{i=1}^n H(Y_{12,i}|Y_{12}^{i-1},Z_1^n,Z_2^n)
-H(Y_{12,i}|Y_{12}^{i-1},Z_1^n,Z_2^n,W_0,W_1)\\
&\leq \sum_{i=1}^n H(Y_{12,i}|Z_{2,i})
-H(Y_{12,i}|Y_{12}^{i-1},Z_1^n,Z_2^n,W_0,W_1,Y_{21}^n,Y_{22}^{i-1})
\label{conditioning_cannot_increase_1}\\
& = \sum_{i=1}^n H(Y_{12,i}|Z_{2,i})
-H(Y_{12,i}|W_0,W_1,Y_{21}^n,Y_{22}^{i-1},Z_{2,i+1}^n,Z_{2,i})
\label{conv_prod_degraded_imply}\\
&=\sum_{i=1}^n I(U_{2,i};Y_{12,i}|Z_{2,i})
\label{converse_product_common_step_11}
\end{align}
where (\ref{conditioning_cannot_increase_1}) follows from the fact
that conditioning cannot increase entropy and the equality in
(\ref{conv_prod_degraded_imply}) is due to the following Markov
chains
\begin{align}
Z_1^n&\rightarrow Y_{21}^n \rightarrow
\left(W_0,W_1,Y_{22}^n,Z_2^n,Y_{12}^n\right)\\
Y_{12}^{i-1}Z_2^{i-1} & \rightarrow Y_{22}^{i-1}\rightarrow
\left(W_0,W_1,Y_{21}^n,Y_{12,i},Z_{2,i}^n,Z_1^n\right)
\end{align}
both of which are due to the fact that sub-channels are
independent, memoryless and degraded. We now consider the second
term in (\ref{converse_product_common_step_1}),
\begin{align}
I(W_0,W_2;Y_{11}^n|Y_{12}^n,Z^n)&=\sum_{i=1}^n
I(W_0,W_2;Y_{11,i}|Y_{12}^n,Z_1^n,Z_2^n,Y_{11}^{i-1}) \label{converse_common_use_2_start} \\
&=\sum_{i=1}^n
I(W_0,W_2;Y_{11,i}|Y_{12}^n,Y_{11}^{i-1},Z_{1,i+1}^n,Z_{1,i}) \label{conv_prod_degraded_imply_1}\\
&\leq \sum_{i=1}^n
I(W_0,W_2,Y_{12}^n,Y_{11}^{i-1},Z_{1,i+1}^n;Y_{11,i}|Z_{1,i})\\
&= \sum_{i=1}^n I(U_{1,i};Y_{11,i}|Z_{1,i})
\label{converse_product_common_step_12}
\end{align}
where (\ref{conv_prod_degraded_imply_1}) follows from the
following Markov chains
\begin{align}
Z_2^n &\rightarrow Y_{12}^n\rightarrow
\left(W_0,W_2,Y_{11}^{i-1},Z_1^n,Y_{11,i}\right) \\
Z_1^{i-1}&\rightarrow Y_{11}^{i-1}\rightarrow
(W_0,W_2,Y_{12}^n,Z_{1,i+1}^n,Z_{1,i},Y_{11,i})
\end{align}
both of which are due to the fact that sub-channels are
independent, memoryless and degraded. Plugging
(\ref{converse_product_common_step_11}) and
(\ref{converse_product_common_step_12}) into
(\ref{converse_product_common_step_1}), we get the following outer
bound on the common rate.
\begin{align}
H(W_0|Z^n)&\leq \sum_{i=1}^n I(U_{2,i};Y_{12,i}|Z_{2,i}) +
\sum_{i=1}^n I(U_{1,i};Y_{11,i}|Z_{1,i}) +\epsilon_n
\label{product_outer_1}
\end{align}
Using the same analysis on the second user, we can obtain the
following outer bound on the common rate as well.
\begin{align}
H(W_0|Z^n)&\leq \sum_{i=1}^n I(U_{2,i};Y_{22,i}|Z_{2,i}) +
\sum_{i=1}^n I(U_{1,i};Y_{21,i}|Z_{1,i}) +\epsilon_n
\label{product_outer_2}
\end{align}
We now bound the sum of independent and common message rates for
each user,
\begin{align}
H(W_0,W_1|Z^n)& \leq I(W_0,W_1;Y_1^n)-I(W_0,W_1;Z^n)+\epsilon_n \label{Fano_5}\\
& = I(W_0,W_1;Y_1^n|Z^n)+\epsilon_n \label{product_degraded_imply_2}\\
&= I(W_0,W_1;Y_{11}^n,Y_{12}^n|Z^n) +\epsilon_n \\
& =
I(W_0,W_1;Y_{12}^n|Z^n)+I(W_0,W_1;Y_{11}^n|Y_{12}^n,Z^n)+\epsilon_n
\label{converse_product_common_private_step_1}
\end{align}
where (\ref{Fano_5}) is due to Fano's lemma,
(\ref{product_degraded_imply_2}) is due to the fact that the
eavesdropper's channel is degraded with respect to the first
user's channel. Using (\ref{converse_product_common_step_11}), the
first term in (\ref{converse_product_common_private_step_1}) can
be bounded as
\begin{align}
I(W_0,W_1;Y_{12}^n|Z^n)\leq \sum_{i=1}^n
I(U_{2,i};Y_{12,i}|Z_{2,i})
\label{converse_product_common_private_step_11}
\end{align}
Thus, we only need to bound the second term of
(\ref{converse_product_common_private_step_1}),
\begin{align}
I(W_0,W_1;Y_{11}^n|Y_{12}^n,Z^n)&=H(Y_{11}^n|Y_{12}^n,Z_1^n,Z_2^n)-H(Y_{11}^n|Y_{12}^n,Z_1^n,Z_2^n,W_0,W_1)
\label{converse_common_use_3_start}\\
&\leq
H(Y_{11}^n|Z_1^n)-H(Y_{11}^n|Y_{12}^n,Z_1^n,Z_2^n,W_0,W_1,X_1^n)
\label{conditioning_cannot_increase_2} \\
&= H(Y_{11}^n|Z_1^n)-H(Y_{11}^n|Z_1^n,X_1^n)
\label{memoryless_channel_3} \\
& = I(X_1^n;Y_{11}^n|Z_1^n) \\
&\leq \sum_{i=1}^n H(Y_{11,i}|Z_{1,i})-H(Y_{11,i}|Z_1^n,X_1^n,Y_{11}^{i-1})\label{conditioning_cannot_increase_6}\\
&= \sum_{i=1}^n H(Y_{11,i}|Z_{1,i})-H(Y_{11,i}|Z_{1,i},X_{1,i})\label{memoryless_channel_9x}\\
&= \sum_{i=1}^{n}I(X_{1,i};Y_{11,i}|Z_{1,i})
\label{converse_product_common_private_step_12}
\end{align}
where (\ref{conditioning_cannot_increase_2}) is due to the fact
that conditioning cannot increase entropy,
(\ref{memoryless_channel_3}) is due to the following Markov chain
\begin{align}
\left(Y_{11}^n,Z_1^n\right) \rightarrow X_1^n \rightarrow
\left(Y_{12}^n,Z_2^n,W_0,W_1\right)
\end{align}
and (\ref{conditioning_cannot_increase_6}) follows from the fact
that conditioning cannot increase entropy. Finally,
(\ref{memoryless_channel_9x}) is due to the fact that each
sub-channel is memoryless. Hence, plugging
(\ref{converse_product_common_private_step_11}) and
(\ref{converse_product_common_private_step_12}) into
(\ref{converse_product_common_private_step_1}), we get the
following outer bound.
\begin{align}
H(W_0,W_1|Z^n)& \leq
\sum_{i=1}^{n}I(X_{1,i};Y_{11,i}|Z_{1,i})+\sum_{i=1}^n
I(U_{2,i};Y_{12,i}|Z_{2,i})+\epsilon_n \label{product_outer_3}
\end{align}
Similarly, for the second user, we can get the following outer
bound,
\begin{align}
H(W_0,W_2|Z^n)& \leq
\sum_{i=1}^{n}I(X_{2,i};Y_{22,i}|Z_{2,i})+\sum_{i=1}^n
I(U_{1,i};Y_{21,i}|Z_{1,i})+\epsilon_n \label{product_outer_4}
\end{align}
We now bound the sum rates to conclude the converse,
\begin{align}
\lefteqn{H(W_0,W_1,W_2|Z^n)= H(W_0,W_1,W_2)-I(W_0,W_1,W_2;Z^n)}&\\
&\leq
I(W_0,W_1;Y_1^n)+I(W_2;Y_2^n|W_0,W_1)-I(W_0,W_1,W_2;Z^n)+\epsilon_n
\label{Fano_6}
\\
& = I(W_0,W_1;Y_1^n|Z^n)+I(W_2;Y_2^n|W_0,W_1,Z^n)+\epsilon_n
\label{product_degraded_imply_3}\\
&= I(W_0,W_1;Y_{12}^n|Z^n)+I(W_0,W_1;Y_{11}^n|Z^n,Y_{12}^n)
+I(W_2;Y_{21}^n|W_0,W_1,Z^n)\nonumber\\
&\quad +I(W_2;Y_{22}^n|W_0,W_1,Z^n,Y_{21}^n)+\epsilon_n \label{converse_product_add_subtract_from}\\
&=I(W_0,W_1,Y_{21}^n;Y_{12}^n|Z^n)-I(Y_{21}^n;Y_{12}^n|W_0,W_1,Z^n)+I(W_0,W_1;Y_{11}^n|Z^n,Y_{12}^n)\nonumber\\
&\quad
+I(W_2;Y_{21}^n|W_0,W_1,Z^n)+I(W_2;Y_{22}^n|W_0,W_1,Z^n,Y_{21}^n)+\epsilon_n
\label{converse_product_add_subtract_results}
\\
&= \mbox{S}_1-\mbox{S}_2+\mbox{S}_3+\mbox{S}_4+\mbox{S}_5
\label{converse_product_sum_step_1}
\end{align}
where (\ref{Fano_6}) follows from Fano's lemma,
(\ref{product_degraded_imply_3}) is due to the fact that the
eavesdropper's channel is degraded with respect to both users'
channels, (\ref{converse_product_add_subtract_results}) is
obtained by adding and subtracting $\mbox{S}_2$ from the first
term of (\ref{converse_product_add_subtract_from}). Now, we
proceed as follows.
\begin{align}
\mbox{S}_4-\mbox{S}_2&=
I(W_2;Y_{21}^n|W_0,W_1,Z^n)-I(Y_{21}^n;Y_{12}^n|W_0,W_1,Z^n)\label{converse_common_use_4_start}\\
&\leq
I(W_2,Y_{12}^n;Y_{21}^n|W_0,W_1,Z^n)-I(Y_{21}^n;Y_{12}^n|W_0,W_1,Z^n)\\
&= I(W_2;Y_{21}^n|W_0,W_1,Z^n,Y_{12}^n)
\label{converse_product_sum_step_12}
\end{align}
Adding $\mbox{S}_3$ to (\ref{converse_product_sum_step_12}), we
get
\begin{align}
\mbox{S}_3+\mbox{S}_4-\mbox{S}_2&\leq
I(W_0,W_1;Y_{11}^n|Z^n,Y_{12}^n)+
I(W_2;Y_{21}^n|W_0,W_1,Z^n,Y_{12}^n)\\
&\leq I(W_0,W_1;Y_{11}^n|Z^n,Y_{12}^n)+
I(W_2;Y_{11}^n,Y_{21}^n|W_0,W_1,Z^n,Y_{12}^n)\\
&= I(W_0,W_1;Y_{11}^n|Z^n,Y_{12}^n)+
I(W_2;Y_{11}^n|W_0,W_1,Z^n,Y_{12}^n)\nonumber\\
&\quad +I(W_2;Y_{21}^n|W_0,W_1,Z^n,Y_{12}^n,Y_{11}^n)\\
&=
I(W_0,W_1,W_2;Y_{11}^n|Z^n,Y_{12}^n)+I(W_2;Y_{21}^n|W_0,W_1,Z^n,Y_{12}^n,Y_{11}^n)
\label{converse_product_sum_step_13}
\end{align}
where the second term is zero as we show next,
\begin{align}
\lefteqn{I(W_2;Y_{21}^n|W_0,W_1,Z^n,Y_{12}^n,Y_{11}^n)}\nonumber\\
&=H(W_2|W_0,W_1,Z_1^n,Z_2^n,Y_{12}^n,Y_{11}^n) -H(W_2|W_0,W_1,Z_1^n,Z_2^n,Y_{12}^n,Y_{11}^n,Y_{21}^n)\\
&=H(W_2|W_0,W_1,Y_{12}^n,Y_{11}^n)-H(W_2|W_0,W_1,Y_{12}^n,Y_{11}^n)=0
\end{align}
where we used the following Markov chain
\begin{align}
\left(W_0,W_1,W_2\right)\rightarrow \left(Y_{11}^n,Y_{12}^n\right)
\rightarrow \left(Y_{21}^n,Z_1^n,Z_2^n\right)
\end{align}
which is a consequence of the degradation orders that sub-channels
exhibit. Thus, (\ref{converse_product_sum_step_13}) can be
expressed as
\begin{align}
\mbox{S}_3+\mbox{S}_4-\mbox{S}_2&\leq
I(W_0,W_1,W_2;Y_{11}^n|Z^n,Y_{12}^n) \\
&=I(W_0,W_1,W_2;Y_{11}^n|Z_1^n,Y_{12}^n)
\label{product_degraded_imply_4}\\
&\leq I(X_1^n,W_0,W_1,W_2;Y_{11}^n|Z_1^n,Y_{12}^n)\\
&=
I(X_1^n;Y_{11}^n|Z_1^n,Y_{12}^n)+I(W_0,W_1,W_2;Y_{11}^n|Z_1^n,Y_{12}^n,X_1^n)
\label{converse_product_sum_step_14}
\end{align}
where (\ref{product_degraded_imply_4}) follows from the following
Markov chain
\begin{align}
Z_2^n \rightarrow  Y_{12}^n \rightarrow
\left(W_0,W_1,W_2,Y_{11}^n,Z_1^n\right)
\end{align}
which is due to the degradedness of the channel. Moreover, the
second term in (\ref{converse_product_sum_step_14}) is zero as we
show next,
\begin{align}
\lefteqn{I(W_0,W_1,W_2;Y_{11}^n|Z_1^n,Y_{12}^n,X_1^n)}&
\nonumber\\
&=H(W_0,W_1,W_2|Z_1^n,Y_{12}^n,X_1^n)-H(W_0,W_1,W_2|Z_1^n,Y_{12}^n,X_1^n,Y_{11}^n)\\
&=H(W_0,W_1,W_2|Y_{12}^n,X_1^n)-H(W_0,W_1,W_2|Y_{12}^n,X_1^n)=0\label{memoryless_channel_5}
\end{align}
where (\ref{memoryless_channel_5}) follows from the following
Markov chain
\begin{align}
\left(Y_{11}^n,Z_1^n\right)\rightarrow X_1^n \rightarrow
\left(W_0,W_1,W_2,Y_{12}^n\right)
\end{align}
Thus, (\ref{converse_product_sum_step_14}) turns out to be
\begin{align}
\mbox{S}_3+\mbox{S}_4-\mbox{S}_2&\leq
I(X_1^n;Y_{11}^n|Z_1^n,Y_{12}^n)
\end{align}
which can be further bounded as follows,
\begin{align}
\mbox{S}_3+\mbox{S}_4-\mbox{S}_2&\leq
H(Y_{11}^n|Z_1^n,Y_{12}^n)-H(Y_{11}^n|Z_1^n,Y_{12}^n,X_1^n)\\
&\leq H(Y_{11}^n|Z_1^n)-H(Y_{11}^n|Z_1^n,Y_{12}^n,X_1^n)
\label{conditioning_cannot_increase_3} \\
&= H(Y_{11}^n|Z_1^n)-H(Y_{11}^n|Z_1^n,X_1^n)
\label{memoryless_channel_6} \\
&\leq \sum_{i=1}^n I(X_{1,i};Y_{11,i}|Z_{1,i})
\label{converse_product_sum_step_15}
\end{align}
where (\ref{conditioning_cannot_increase_3}) is due to the fact
that conditioning cannot increase entropy,
(\ref{memoryless_channel_6}) is due to the following Markov chain
\begin{align}
\left(Y_{11}^n,Z_1^n\right) \rightarrow X_1^n \rightarrow Y_{12}^n
\end{align}
Finally, (\ref{converse_product_sum_step_15}) is due to our
previous result in
(\ref{converse_product_common_private_step_12}). We keep bounding
terms in (\ref{converse_product_sum_step_1}),
\begin{align}
\mbox{S}_5&= I(W_2;Y_{22}^n|W_0,W_1,Y_{21}^n,Z_1^n,Z_2^n)\label{converse_common_use_5_start}\\
&=
I(W_2;Y_{22}^n|W_0,W_1,Y_{21}^n,Z_2^n)\label{product_degraded_imply_5}\\
&=\sum_{i=1}^n
I(W_2;Y_{22,i}|W_0,W_1,Y_{21}^n,Z_2^n,Y_{22}^{i-1})\\
&=\sum_{i=1}^n
I(W_2;Y_{22,i}|W_0,W_1,Y_{21}^n,Z_{2,i+1}^n,Y_{22}^{i-1},Z_{2,i})
\label{product_degraded_imply_6}\\
&=\sum_{i=1}^n I(W_2;Y_{22,i}|U_{2,i},Z_{2,i})\\
&\leq \sum_{i=1}^n
H(Y_{22,i}|U_{2,i},Z_{2,i})-H(Y_{22,i}|U_{2,i},Z_{2,i},W_2,X_{2,i})
\label{conditioning_cannot_increase_4}\\
 &\leq \sum_{i=1}^n
I(X_{2,i};Y_{22,i}|U_{2,i},Z_{2,i}) \label{memoryless_channel_7}
\end{align}
where (\ref{product_degraded_imply_5}) and
(\ref{product_degraded_imply_6}) are due to the following Markov
chains
\begin{align}
Z_1^n&\rightarrow Y_{21}^n \rightarrow
\left(W_0,W_1,W_2,Y_{22}^n,Z_2^n\right)\\
Z_2^{i-1} & \rightarrow Y_{22}^{i-1} \rightarrow
\left(W_0,W_1,W_2,Y_{21}^n,Z_{2,i}^n,Y_{22,i}\right)
\end{align}
respectively, (\ref{conditioning_cannot_increase_4}) follows from
that conditioning cannot increase entropy and
(\ref{memoryless_channel_7}) is due to the following Markov chain
\begin{align}
\left(Y_{22,i},Z_{2,i}\right) \rightarrow X_{2,i} \rightarrow
\left(W_2,U_{2,i}\right)
\end{align}
which is a consequence of the fact that each sub-channel is
memoryless. Thus, we only need to bound $\mbox{S}_1$ in
(\ref{converse_product_sum_step_1}) to reach the outer bound for
the sum secrecy rate,
\begin{align}
\mbox{S}_1&= I(W_0,W_1,Y_{21}^n;Y_{12}^n|Z^n) \label{converse_common_use_6_start} \\
&=\sum_{i=1}^n I(W_0,W_1,Y_{21}^n;Y_{12,i}|Z_1^n,Z_2^n,Y_{12}^{i-1})\\
&\leq \sum_{i=1}^n
H(Y_{12,i}|Z_{2,i})-H(Y_{12,i}|Z_1^n,Z_2^n,Y_{12}^{i-1},W_0,W_1,Y_{21}^n,Y_{22}^{i-1})
\label{conditioning_cannot_increase_5}\\
&= \sum_{i=1}^n
H(Y_{12,i}|Z_{2,i})-H(Y_{12,i}|Z_2^n,Y_{12}^{i-1},W_0,W_1,Y_{21}^n,Y_{22}^{i-1})
\label{product_degraded_imply_7} \\
&= \sum_{i=1}^n
H(Y_{12,i}|Z_{2,i})-H(Y_{12,i}|W_0,W_1,Y_{21}^n,Y_{22}^{i-1},Z_{2,i+1}^n,Z_{2,i})
\label{product_degraded_imply_8} \\
&=\sum_{i=1}^n I(U_{2,i};Y_{12,i}|Z_{2,i})
\label{converse_product_sum_step_16}
\end{align}
where (\ref{conditioning_cannot_increase_5}) is due to the fact
that conditioning cannot increase entropy,
(\ref{product_degraded_imply_7}) and
(\ref{product_degraded_imply_8}) follow from the following Markov
chains
\begin{align}
Z_1^n&\rightarrow Y_{21}^n \rightarrow
\left(W_0,W_1,Y_{22}^{i-1},Y_{12}^n,Z_2^n\right)\\
\left(Y_{12}^{i-1},Z_2^{i-1} \right) & \rightarrow Y_{22}^{i-1}
\rightarrow \left(W_0,W_1,W_2,Y_{21}^n,Z_{2,i}^n,Y_{12,i}\right)
\end{align}
respectively. Thus, plugging (\ref{converse_product_sum_step_15}),
(\ref{memoryless_channel_7}) and
(\ref{converse_product_sum_step_16}) into
(\ref{converse_product_sum_step_1}), we get the following outer
bound on the sum secrecy rate.
\begin{align}
H(W_0,W_1,W_2|Z^n) & \leq \sum_{i=1}^n
I(X_{1,i};Y_{11,i}|Z_{1,i})+I(X_{2,i};Y_{22,i}|U_{2,i},Z_{2,i})+I(U_{2,i};Y_{12,i}|Z_{2,i})+\epsilon_n
\label{product_outer_5}
\end{align}
Following similar steps, we can also get the following one
\begin{align}
H(W_0,W_1,W_2|Z^n) & \leq \sum_{i=1}^n
I(X_{2,i};Y_{22,i}|Z_{2,i})+I(X_{1,i};Y_{11,i}|U_{1,i},Z_{1,i})+I(U_{1,i};Y_{21,i}|Z_{1,i})+\epsilon_n
\label{product_outer_6}
\end{align}
So far, we derived outer bounds, (\ref{product_outer_1}),
(\ref{product_outer_2}), (\ref{product_outer_3}),
(\ref{product_outer_4}), (\ref{product_outer_5}),
(\ref{product_outer_6}), on the capacity region which match the
achievable region provided. The only difference can be on the
joint distribution that they need to satisfy. However, the outer
bounds depend on either $p(u_1,x_1)$ or $p(u_2,x_2)$ but not on
the joint distribution $p(u_1,u_2,x_1,x_2)$. Hence, for the outer
bound, it is sufficient to consider the joint distributions having
the form $p(u_1,u_2,x_1,x_2)=p(u_1,x_1)p(u_2,x_2)$. Thus, the
outer bounds derived and the achievable region coincide yielding
the capacity region.

\subsection{Proof of Theorem~\ref{Theorem_D_PWC_M}}
\label{proof_of_D_PWC_M}

\subsubsection{Achievability}

To show the achievability of the region given in
Theorem~\ref{Theorem_D_PWC_M}, we use
Theorem~\ref{Theorem_product_wiretap_channel}. First, we group
sub-channels into two sets $\mathcal{S}_j, j=1,2,$ where
$\mathcal{S}_j, j=1,2,$ contains the sub-channels in which user
$j$ has the best observation. In other words, we have the Markov
chain
\begin{align}
X_l\rightarrow Y_{1l}\rightarrow Y_{2l}\rightarrow Z_l
\end{align}
for $l\in\mathcal{S}_1,$ and we have this Markov chain
\begin{align}
X_l\rightarrow Y_{2l}\rightarrow Y_{1l}\rightarrow Z_l
\end{align}
for $l\in\mathcal{S}_2$.

We replace $U_j$ with $\{U_l\}_{l\in\mathcal{S}_j}$, $X_j$ with
$\{X_l\}_{l\in\mathcal{S}_j}$, $Y_{j1}$ with
$\{Y_{jl}\}_{l\in\mathcal{S}_1}$, $Y_{j2}$ with
$\{Y_{jl}\}_{l\in\mathcal{S}_2}$, and $Z_{j}$ with
$\{Z_{l}\}_{l\in\mathcal{S}_j}, j=1,2,$ in
Theorem~\ref{Theorem_product_wiretap_channel}. Moreover, if we
select the pairs $\{(U_l,X_l)\}_{l=1}^M$ to be mutually
independent, we get the following joint distribution
\begin{align}
p\left(\{u_l,x_l,y_{1l},y_{2l},z_l\}_{l=1}^M\right)=\prod_{l=1}^M
p(u_l,x_l)p(y_{1l},y_{2l},z_l|x_l)
\end{align}
which implies that random variable tuples
$\{(u_l,x_l,y_{1l},y_{2l},z_l)\}_{l=1}^M$ are mutually
independent. Using this fact, one can reach the expressions given
in Theorem~\ref{Theorem_D_PWC_M}.

\subsubsection{Converse} For the converse part, we again use the
proof of Theorem~\ref{Theorem_product_wiretap_channel}. First,
without loss of generality, we assume
$\mathcal{S}_1=\{1,\ldots,L_1\}$, and
$\mathcal{S}_2=\{L_1+1,\ldots,M\}$. We define the following
auxiliary random variables
\begin{align}
U_{1,i}&= W_0 W_2 Y_{1[L_1+1:M]}^n Y_{1[1:L_1]}^{i-1}
Z_{[1:L_1],i+1}^n \\
U_{2,i}&= W_0 W_1 Y_{2[1:L_1]}^n Y_{2[L_1+1:M]}^{i-1}
Z_{[L_1+1:M],i+1}^n
\end{align}
which satisfy the Markov chains
\begin{align}
U_{1,i}&\rightarrow X_{l,i}\rightarrow
(Y_{1l,i},Y_{2l,i},Z_{l,i}),\quad l=1,\ldots,L_1\\
U_{2,i}&\rightarrow X_{l,i}\rightarrow
(Y_{1l,i},Y_{2l,i},Z_{l,i}),\quad l=L_1+1,\ldots,M
\end{align}
Using the analysis carried out for the proof of
Theorem~\ref{Theorem_product_wiretap_channel}, we get
\begin{align}
nR_0 &\leq \sum_{i=1}^n I(U_{1,i};Y_{1[1:L_1],i}|Z_{[1:L_1],i})+
\sum_{i=1}^n
I(U_{2,i};Y_{1[L_1+1:M],i}|Z_{[L_1+1:M],i})+\epsilon_n
\label{from_prev_theo}
\end{align}
where each term will be treated separately. The first term can be
bounded as follows
\begin{align}
I(U_{1,i};Y_{1[1:L_1],i}|Z_{[1:L_1],i})&=\sum_{l=1}^{L_1}
I(U_{1,i};Y_{1l,i}|Y_{1[1:l-1],i},Z_{[1:L_1],i})\\
&=\sum_{l=1}^{L_1}
I(U_{1,i};Y_{1l,i}|Y_{1[1:l-1],i},Z_{[l:L_1],i})\label{degraded_6}\\
&\leq \sum_{l=1}^{L_1}
I(U_{1,i},Y_{1[1:l-1],i},Z_{[l+1:L_1],i};Y_{1l,i}|Z_{l,i})
\label{dummy_step}
\end{align}
where (\ref{degraded_6}) follows from the Markov chain
\begin{align}
Z_{[1:l-1],i}\rightarrow Y_{1[1:l-1],i}\rightarrow
(U_{1,i},Y_{1l,i},Z_{[l:L_1],i})
\end{align}
which is due to the degradedness of the sub-channels. To this end,
we define the following auxiliary random variables
\begin{align}
V_{l,i}=Y_{1[1:l-1],i}Z_{[l+1:L_1],i}U_{1,i},\quad l=1,\ldots,L_1
\end{align}
which satisfy the Markov chains
\begin{align}
V_{l,i}\rightarrow X_{l,i}\rightarrow
(Y_{1l,i},Y_{2l,i},Z_{l,i}),\quad l=1,\ldots,L_1
\end{align}
Thus, using these new auxiliary random variables in
(\ref{dummy_step}), we get
\begin{align}
I(U_{1,i};Y_{1[1:L_1],i}|Z_{[1:L_1],i})\leq \sum_{l=1}^{L_1}
I(V_{l,i};Y_{1l,i}|Z_{l,i}) \label{intermediate_step_1}
\end{align}
We now bound the second term in (\ref{from_prev_theo}) as follows,
\begin{align}
\lefteqn{I(U_{2,i};Y_{1[L_1+1:M],i}|Z_{[L_1+1:M],i})}\nonumber \\
&=\sum_{l=L_1+1}^M
I(U_{2,i};Y_{1l,i}|Z_{[L_1+1:M],i},Y_{1[L_1+1:l-1],i})\\
&=\sum_{l=L_1+1}^M
I(U_{2,i};Y_{1l,i}|Z_{[l:M],i},Y_{1[L_1+1:l-1],i})
\label{degraded_7} \\
&\leq \sum_{l=L_1+1}^M H(Y_{1l,i}|Z_{l,i})
-H(Y_{1l,i}|Z_{[l:M],i},Y_{1[L_1+1:l-1],i},U_{2,i}) \label{conditioning_cannot_increase_8}\\
&\leq \sum_{l=L_1+1}^M H(Y_{1l,i}|Z_{l,i})
-H(Y_{1l,i}|Z_{[l:M],i},Y_{1[L_1+1:l-1],i},U_{2,i},Y_{2[L_1+1:l-1],i})\label{conditioning_cannot_increase_9}\\
&= \sum_{l=L_1+1}^M H(Y_{1l,i}|Z_{l,i})
-H(Y_{1l,i}|Z_{[l:M],i},U_{2,i},Y_{2[L_1+1:l-1],i}) \label{degraded_8} \\
&= \sum_{l=L_1+1}^M
I(Z_{[l+1:M],i},U_{2,i},Y_{2[L_1+1:l-1],i};Y_{1l,i}|Z_{l,i})
\label{dummy_step_1}
\end{align}
where (\ref{degraded_7}) follows from the Markov chain
\begin{align}
Z_{[L_1+1:l-1],i}\rightarrow Y_{1[L_1+1:l-1],i} \rightarrow
(U_{2,i},Z_{[l:M],i},Y_{1l,i})
\end{align}
which is a consequence of the degradedness of the sub-channels,
(\ref{conditioning_cannot_increase_8}) and
(\ref{conditioning_cannot_increase_9}) follow from the fact that
conditioning cannot increase entropy, and (\ref{degraded_8}) is
due to the Markov chain
\begin{align}
Y_{1[L_1+1:l-1],i}\rightarrow Y_{2[L_1+1:l-1],i} \rightarrow
(U_{2,i},Z_{[l:M],i},Y_{1l,i})
\end{align}
which is again a consequence of the degradedness of the
sub-channels. To this end, we define the following auxiliary
random variables
\begin{align}
V_{l,i}=Y_{2[L_1+1:l-1],i} Z_{[l+1:M],i}U_{2,i},\quad
l=L_1+1,\ldots,M \label{def_v_2}
\end{align}
which satisfy the Markov chains
\begin{align}
V_{l,i}\rightarrow X_{l,i}\rightarrow
(Y_{1l,i},Y_{2l,i},Z_{l,i}),\quad l=L_1+1,\ldots,M
\end{align}
Thus, using these new auxiliary random variables in
(\ref{dummy_step_1}), we get
\begin{align}
I(U_{2,i};Y_{1[L_1+1:M],i}|Z_{[L_1+1:M],i})\leq \sum_{l=L_1+1}^M
I(V_{l,i};Y_{1l,i}|Z_{l,i}) \label{intermediate_step_2}
\end{align}
Finally, using (\ref{intermediate_step_1}) and
(\ref{intermediate_step_2}) in (\ref{from_prev_theo}), we obtain
\begin{align}
nR_0 &\leq \sum_{i=1}^n \sum_{l=1}^M
I(V_{l,i};Y_{1l,i}|Z_{l,i})+\epsilon_n
\end{align}
Due to symmetry, we also have
\begin{align}
nR_0 &\leq \sum_{i=1}^n \sum_{l=1}^M
I(V_{l,i};Y_{2l,i}|Z_{l,i})+\epsilon_n
\end{align}

We now bound the sum of common and independent message rates.
Using the converse proof of
Theorem~\ref{Theorem_product_wiretap_channel}, we get
\begin{align}
n(R_0+R_1)&\leq \sum_{i=1}^n
I(X_{[1:L_1],i};Y_{1[1:L_1],i}|Z_{[1:L_1],i})+ \sum_{i=1}^n
I(U_{2,i};Y_{1[L_1+1:M],i}|Z_{[L_1+M],i})+\epsilon_n
\label{from_prev_theo_1}
\end{align}
where, for the second term we already obtained an outer bound
given in (\ref{intermediate_step_2}). We now bound the first term,
\begin{align}
I(X_{[1:L_1],i};Y_{1[1:L_1],i}|Z_{[1:L_1],i})&=\sum_{l=1}^{L_1}
I(X_{[1:L_1],i};Y_{1l,i}|Z_{[1:L_1],i},Y_{1[1:l-1],i}) \\
&\leq \sum_{l=1}^{L_1} H(Y_{1l,i}|Z_{l,i})
-H(Y_{1l,i}|Z_{[1:L_1],i},Y_{1[1:l-1],i},X_{[1:L_1],i})
\label{conditioning_cannot_increase_10}\\
&= \sum_{l=1}^{L_1} H(Y_{1l,i}|Z_{l,i})
-H(Y_{1l,i}|Z_{l,i},X_{l,i}) \label{memoryless_channel_12}\\
&= \sum_{l=1}^{L_1} I(X_{l,i};Y_{1l,i}|Z_{l,i})
\label{intermediate_step_3}
\end{align}
where (\ref{conditioning_cannot_increase_10}) follows from the
fact that conditioning cannot increase entropy, and
(\ref{memoryless_channel_12}) is due to the following Markov chain
\begin{align}
(Y_{1l,i},Z_{l,i})\rightarrow X_{l,i}\rightarrow
(X_{[1:l-1],i},X_{[l+1:L_1],i},Y_{1[1:l-1],i}Z_{[1:l-1],i},Z_{[l+1:L_1],i})
\end{align}
which follows from the facts that channel is memoryless and
sub-channels are independent. Thus, plugging
(\ref{intermediate_step_2}) and (\ref{intermediate_step_3}) into
(\ref{from_prev_theo_1}), we obtain
\begin{align}
n(R_0+R_1)&\leq
\sum_{i=1}^n\sum_{l\in\mathcal{S}_1}I(X_{l,i};Y_{1l,i}|Z_{l,i})+
\sum_{i=1}^n\sum_{l\in\mathcal{S}_2}I(V_{l,i};Y_{1l,i}|Z_{l,i})+\epsilon_n
\end{align}
Due to symmetry, we also have
\begin{align}
n(R_0+R_1)&\leq
\sum_{i=1}^n\sum_{l\in\mathcal{S}_2}I(X_{l,i};Y_{2l,i}|Z_{l,i})+
\sum_{i=1}^n\sum_{l\in\mathcal{S}_1}I(V_{l,i};Y_{2l,i}|Z_{l,i})+\epsilon_n
\end{align}

We now bound the sum secrecy rate. We first borrow the following
outer bound from the converse proof of
Theorem~\ref{Theorem_product_wiretap_channel},
\begin{align}
\lefteqn{n(R_0+R_1+R_2)\leq \sum_{i=1}^n
I(X_{[1:L_1],i};Y_{1[1:L_1],i}|Z_{[1:L_1],i})}\\
&+ \sum_{i=1}^n
I(X_{[L_1+1:M],i};Y_{2[L_1+1:M],i}|U_{2,i},Z_{[L_1+1:M],i})
+\sum_{i=1}^n I(U_{2,i};Y_{1[L_1+1:M],i}|Z_{[L_1+1:M],i})
\label{from_prev_theo_2}
\end{align}
where, for the first and third terms we already obtained outer
bounds given in (\ref{intermediate_step_3}) and
(\ref{intermediate_step_2}), respectively. We now bound the second
term as follows,
\begin{align}
\lefteqn{I(X_{[L_1+1:M],i};Y_{2[L_1+1:M],i}|U_{2,i},Z_{[L_1+1:M],i})}\hspace{3cm}\nonumber\\
&=\sum_{l=L_1+1}^M
I(X_{[L_1+1:M],i};Y_{2l,i}|U_{2,i},Z_{[L_1+1:M],i},Y_{2[L_1+1:l-1],i})\\
&=\sum_{l=L_1+1}^M
I(X_{[L_1+1:M],i};Y_{2l,i}|U_{2,i},Z_{[l:M],i},Y_{2[L_1+1:l-1],i})\label{degraded_9}\\
&=\sum_{l=L_1+1}^M
I(X_{[L_1+1:M],i};Y_{2l,i}|V_{l,i},Z_{l,i})\label{use_def_v_2}\\
&=\sum_{l=L_1+1}^M H(Y_{2l,i}|V_{l,i},Z_{l,i})
-H(Y_{2l,i}|V_{l,i},Z_{l,i},X_{[L_1+1:M],i})\\
&=\sum_{l=L_1+1}^M H(Y_{2l,i}|V_{l,i},Z_{l,i})
-H(Y_{2l,i}|V_{l,i},Z_{l,i},X_{l,i})\label{memoryless_channel_13}\\
&=\sum_{l=L_1+1}^M I(X_{l,i};Y_{2l,i}|V_{l,i},Z_{l,i})
\label{intermediate_step_4}
\end{align}
where (\ref{degraded_9}) follows from the Markov chain
\begin{align}
Z_{[L_1+1:l-1],i}\rightarrow Y_{2[L_1+1:l-1],i}\rightarrow
U_{2,i}, Z_{[l:M],i},X_{[L_1+1:M],i},Y_{2l,i}
\end{align}
which is a consequence of the degradedness of the sub-channels,
(\ref{use_def_v_2}) is obtained via using the definition of
$V_{2,i}$ given in (\ref{def_v_2}), and
(\ref{memoryless_channel_13}) follows from the Markov chain
\begin{align}
(Z_{l,i},Y_{2l,i})\rightarrow X_{l,i}\rightarrow (V_{l,i},
X_{[L_1+1:l-1],i},X_{[l+1:M]})
\end{align}
which is due to the facts that channel is memoryless and
sub-channels are independent. Thus, plugging
(\ref{intermediate_step_2}), (\ref{intermediate_step_3}) and
(\ref{intermediate_step_4}) into (\ref{from_prev_theo_2}), we get
\begin{align}
n(R_0+R_1+R_2)&\leq \sum_{i=1}^n
\sum_{l\in\mathcal{S}_1}I(X_{l,i};Y_{1l,i}|Z_{l,i})+ \sum_{i=1}^n
\sum_{l\in\mathcal{S}_2}I(X_{l,i};Y_{2l,i}|V_{l,i},Z_{l,i}) \nonumber \\
&\quad + \sum_{i=1}^n \sum_{l\in\mathcal{S}_2}
I(V_{l,i};Y_{1l,i}|Z_{l,i}) +\epsilon_n
\end{align}
Due to symmetry, we also have
\begin{align}
n(R_0+R_1+R_2) &\leq \sum_{i=1}^n
\sum_{l\in\mathcal{S}_2}I(X_{l,i};Y_{2l,i}|Z_{l,i})+ \sum_{i=1}^n
\sum_{l\in\mathcal{S}_1}I(X_{l,i};Y_{1l,i}|V_{l,i},Z_{l,i})
\nonumber\\
&\quad+ \sum_{i=1}^n \sum_{l\in\mathcal{S}_1}
I(V_{l,i};Y_{2l,i}|Z_{l,i}) +\epsilon_n
\end{align}
Finally, we note that all outer bounds depend on the distributions
$p(v_{l,i},x_{l,i},y_{1l,i},y_{2l,i},z_{l,i})=p(v_{l,i},x_{l,i})p(y_{1l,i},y_{2l,i},z_{l,i}|x_{l,i})$
but not on any joint distributions of the tuples
$(v_{l,i},x_{l,i},y_{1l,i},y_{2l,i},\break z_{l,i})$ implying that
selection of the pairs $(v_{l,i},x_{l,i})$ to be mutually
independent is optimum.

\section{Proof of Theorem~\ref{Theorem_Sum_Unmatched}}
\label{Proof_Sum_Unmatched}

We prove Theorem~\ref{Theorem_Sum_Unmatched} in two parts; first,
we show achievability, and then we prove the converse.

\subsection{Achievability}
Similar to what we have done to show the achievability of
Theorem~\ref{Theorem_product_wiretap_channel}, we first note that
boundary of the capacity region can be decomposed into three
surfaces \cite{Product_Broadcast}.

\begin{itemize}
\item First surface:
\begin{align}
R_0&\leq \bar{\alpha}I(U_2;Y_{12}|Z_2)\\
R_2&\leq \bar{\alpha}I(X_2;Y_{22}|U_2,Z_2)\\
R_0+R_1&\leq \alpha
I(X_1;Y_{11}|Z_1)+\bar{\alpha}I(U_2;Y_{12}|Z_2),\quad U_1=\phi
\end{align}
\item Second surface:
\begin{align}
R_0&\leq \alpha I(U_1;Y_{21}|Z_1)\\
R_1&\leq \alpha I(X_1;Y_{11}|U_1,Z_1)\\
R_0+R_2 &\leq \alpha
I(U_1;Y_{21}|Z_1)+\bar{\alpha}I(X_2;Y_{22}|Z_2),\quad U_2=\phi
\end{align}
\item Third surface:
\begin{align}
R_1&\leq\alpha I(X_1;Y_{11}|U_1,Z_1)\\
R_2&\leq \bar{\alpha}I(X_2;Y_{22}|U_2,Z_2)\\
R_0&\leq \alpha I(U_1;Y_{11}|Z_1)+\bar{\alpha}I(U_2;Y_{12}|Z_2)\\
R_0&\leq \alpha I(U_1;Y_{21}|Z_1)+\bar{\alpha}I(U_2;Y_{22}|Z_2)
\end{align}
\end{itemize}
To show the achievability of each surface, we first introduce a
codebook structure.

\vspace{0.25cm} \noindent \underline{\textbf{Codebook structure:}}

\vspace{0.15cm} Fix the probability distribution as,
\begin{align}
p(u_1,x_1)p(u_2,x_2)p(y_1,y_2,z|x)
\end{align}
\begin{itemize}
\item Generate $2^{n(R_{01}+R_{11}+\tilde{R}_{11})}$ length-$n_1$
sequences $\bold{u}_1$ through
$p(\bold{u}_1)=\prod_{i=1}^{n_1}p(u_{1,i})$ and index them as
$\bold{u}_1 (w_{01},w_{11},\tilde{w}_{11})$ where
$w_{01}\in\{1,\ldots,2^{nR_{01}}\}$,
$w_{11}\in\{1,\ldots,2^{nR_{11}}\}$ and
$\tilde{w}_{11}\in\{1,\ldots,2^{n\tilde{R}_{11}}\}$.

\item For each $\bold{u}_1$, generate
$2^{n(R_{12}+\tilde{R}_{12})}$ length-$n_1$ sequences $\bold{x}_1$
through $p(\bold{x}_1)=\break\prod_{i=1}^{n_1}p(x_{1,i}|u_{1,i})$
and index them as $\bold{x}_1
(w_{01},w_{11},\tilde{w}_{11},w_{12},\tilde{w}_{12})$ where
$w_{12}\in\{1,\ldots,2^{nR_{12}}\}$,
$\tilde{w}_{12}\in\{1,\ldots,2^{n\tilde{R}_{12}}\}$.

\item Generate $2^{n(R_{02}+R_{21}+\tilde{R}_{21})}$
length-$(n-n_1)$ sequences $\bold{u}_2$ through
$p(\bold{u}_2)=\prod_{i=1}^{n-n_1}p(u_{2,i})$ and index them as
$\bold{u}_2 (w_{02},w_{21},\tilde{w}_{21})$ where
$w_{02}\in\{1,\ldots,2^{nR_{02}}\}$,
$w_{21}\in\{1,\ldots,2^{nR_{21}}\}$ and
$\tilde{w}_{21}\in\{1,\ldots,2^{n\tilde{R}_{21}}\}$.

\item For each $\bold{u}_2$, generate
$2^{n(R_{22}+\tilde{R}_{22})}$ length-$(n-n_1)$ sequences
$\bold{x}_2$ through
$p(\bold{x}_2)=\break\prod_{i=1}^{n-n_1}p(x_{2,i}|u_{2,i})$ and
index them as $\bold{x}_2
(w_{02},w_{21},\tilde{w}_{21},w_{22},\tilde{w}_{22})$ where
$w_{22}\in\{1,\ldots,\break 2^{nR_{22}}\}$,
$\tilde{w}_{22}\in\{1,\ldots,2^{n\tilde{R}_{22}}\}$.

\item We remark that this codebook uses first channel $n_1$ times
and the other one $(n-n_1)$ times. We define
\begin{align}
\alpha=\frac{n_1}{n}
\end{align}
and $\bar{\alpha}=1-\alpha$.

\item Furthermore, we set
\begin{align}
\tilde{R}_{11}&=\alpha I(U_1;Z_1)\label{sum_dummy_defns_1}\\
\tilde{R}_{12}&=\alpha I(X_1;Z_1|U_1)\label{sum_dummy_defns_2}\\
\tilde{R}_{21}&=\bar{\alpha} I(U_2;Z_2)\label{sum_dummy_defns_3}\\
\tilde{R}_{22}&=\bar{\alpha} I(X_2;Z_2|U_2)\label{sum_dummy_defns_4}\\
R_1&=R_{11}+R_{12}\label{sum_dummy_defns_5}\\
R_2&=R_{21}+R_{22}\label{sum_dummy_defns_6}
\end{align}

\end{itemize}

\vspace{0.25cm} \noindent \underline{\textbf{Encoding:}}

\vspace{0.15cm}When the transmitted messages are
$(w_{01},w_{02},w_{11},w_{12},w_{21},w_{22})$, we randomly pick
$(\tilde{w}_{11},\break\tilde{w}_{12},\tilde{w}_{21},\tilde{w}_{22})$
and send corresponding codewords.

\vspace{0.25cm} \noindent \underline{\textbf{Decoding:}}

\vspace{0.15cm}

Using this codebook structure, we can show that all three surfaces
which determine the boundary of the capacity region are
achievable. For example, if we set $U_1=\phi$ (that implies
$R_{01}=R_{11}=\tilde{R}_{11}=0$) and $R_{21}=0$, then we achieve
the following rates with vanishingly small error probability.
\begin{align}
R_1&\leq \alpha I(X_1;Y_{11}|Z_1)\\
R_0&\leq \bar{\alpha} I(U_2;Y_{12}|Z_2)\\
R_2&\leq \bar{\alpha} I(X_2;Y_{22}|U_2,Z_2)
\end{align}
Exchanging common message rate with user 1's independent message
rate, one can obtain the first surface. Second surface follows
from symmetry. For the third surface, we first set
$R_{11}=R_{21}=0$. Moreover, we send common message in its
entirety, i.e., we do not use a rate splitting for the common
message, hence we set $R_{01}=R_{02}=R_0$, $w_{01}=w_{02}=w_0$. In
this case, each user, say the $j$th one, decodes the common
message by looking for a unique $w_0$ which satisfies
\begin{align}
E_{j1}^{w_0}=\left\{\exists \tilde{w}_{01}:
(\bold{u}_1(w_{0},\tilde{w}_{01}),\bold{y}_{j1})\in
A_{\epsilon}^n\right\}\\
E_{j2}^{w_0}=\left\{\exists \tilde{w}_{02}:
(\bold{u}_2(w_{0},\tilde{w}_{02}),\bold{y}_{j2})\in
A_{\epsilon}^n\right\}\\
\end{align}
Following the analysis carried out in
(\ref{sum_unmatch_third_surface_start})-(\ref{third_surface_12_derive}),
the sufficient conditions for the common message to be decodable
by both users can be found as
\begin{align}
R_0 &\leq \alpha I(U_1;Y_{j1}|Z_1)+\bar{\alpha} I(U_2;Y_{j2}|Z_2),
\quad j=1,2
\end{align}
After decoding the common message, each user can decode its
independent message if
\begin{align}
R_1&\leq \alpha I(X_1;Y_{11}|U_1,Z_1)\\
R_2&\leq \bar{\alpha} I(X_2;Y_{22}|U_2,Z_2)
\end{align}
Thus, the third surface can be achieved with vanishingly small
error probability. As of now, we showed that all rates in the
so-called capacity region are achievable with vanishingly small
error probability, however we did not claim anything about the
secrecy conditions which will be considered next.

\vspace{0.25cm} \noindent \underline{\textbf{Equivocation
calculation:}}

\vspace{0.15cm} To complete the achievability part of the proof,
we need to show that this codebook structure also satisfies the
secrecy conditions. For that purpose, it is sufficient to consider
the sum rate secrecy condition.
\begin{align}
\lefteqn{H(W_0,W_1,W_2|Z_1^{n_1},Z_2^{n-n_1})=H(W_0,W_1,W_2,Z_1^{n_1},Z_2^{n-n_1})-H(Z_1^{n_1},Z_2^{n-n_1})}&\\
&=H(W_0,W_1,W_2,U_1^{n_1},U_2^{n-n_1},X_1^{n_1},X_2^{n-n_1},Z_1^{n_1},Z_2^{n-n_1})
-H(Z_1^{n_1},Z_2^{n-n_1})\nonumber\\
&\quad
-H(U_1^{n_1},U_2^{n-n_1},X_1^{n_1},X_2^{n-n_1}|W_0,W_1,W_2,Z_1^{n_1},Z_2^{n-n_1})\\
&=H(U_1^{n_1},U_2^{n-n_1},X_1^{n_1},X_2^{n-n_1})
+H(W_0,W_1,W_2,Z_1^{n_1},Z_2^{n-n_1}|U_1^{n_1},U_2^{n-n_1},X_1^{n_1},X_2^{n-n_1})\nonumber\\
&\quad -H(Z_1^{n_1},Z_2^{n-n_1})
-H(U_1^{n_1},U_2^{n-n_1},X_1^{n_1},X_2^{n-n_1}|W_0,W_1,W_2,Z_1^{n_1},Z_2^{n-n_1})\\
&\geq H(U_1^{n_1},U_2^{n-n_1},X_1^{n_1},X_2^{n-n_1})
+H(Z_1^{n_1},Z_2^{n-n_1}|U_1^{n_1},U_2^{n-n_1},X_1^{n_1},X_2^{n-n_1})\nonumber\\
&\quad -H(Z_1^{n_1},Z_2^{n-n_1})
-H(U_1^{n_1},U_2^{n-n_1},X_1^{n_1},X_2^{n-n_1}|W_0,W_1,W_2,Z_1^{n_1},Z_2^{n-n_1})
\label{equi_comp_sum_step1}
\end{align}
where each term will be treated separately. The first term is
\begin{align}
\lefteqn{H(U_1^{n_1},U_2^{n-n_1},X_1^{n_1},X_2^{n-n_1})}&\nonumber \\
&=H(U_1^{n_1},U_2^{n-n_1})+H(X_1^{n_1}|U_1^{n_1})+H(X_2^{n-n_1}|U_2^{n-n_1})\\
&=n(R_0+R_{11}+\tilde{R}_{11}+R_{21}+\tilde{R}_{21})+n(R_{12}+\tilde{R}_{12})+n(R_{22}+\tilde{R}_{22})\label{equi_comp_sum_step2}\\
&=n(R_0+R_1+R_2)+n_1 I(X_1;Z_1)+(n-n_1)I(X_2;Z_2)
\label{equi_comp_sum_step3}
\end{align}
where the first equality is due to the Markov chain
\begin{align}
X_1^{n_1}\rightarrow U_1^{n_1}\rightarrow U_2^{n-n_1}\rightarrow
X_2^{n-n_1}
\end{align}
The equality in (\ref{equi_comp_sum_step2}) is due to the fact
that $(U_1^{n_1},U_2^{n-n_1})$ can take
$2^{n(R_0+R_{11}+\tilde{R}_{11}+R_{21}+\tilde{R}_{21})}$ values
uniformly, and given $U_1^{n_1}$ (resp. $U_2^{n-n_1}$),
$X_1^{n_1}$ (resp. $X_2^{n-n_1}$) can take
$2^{n(R_{12}+\tilde{R}_{12})}$ (resp.
$2^{n(R_{22}+\tilde{R}_{22})}$) values with equal probability. To
reach (\ref{equi_comp_sum_step3}), we use the definitions in
(\ref{sum_dummy_defns_1})-(\ref{sum_dummy_defns_6}). We consider
the second and third terms in (\ref{equi_comp_sum_step1}).
\begin{align}
\lefteqn{H(Z_1^{n_1},Z_2^{n-n_1})-H(Z_1^{n_1},Z_2^{n-n_1}|U_1^{n_1},U_2^{n-n_1},X_1^{n_1},X_2^{n-n_1})}&
\nonumber\\
&\leq
H(Z_1^{n_1})+H(Z_2^{n-n_1})-H(Z_1^{n_1},Z_2^{n-n_1}|U_1^{n_1},U_2^{n-n_1},X_1^{n_1},X_2^{n-n_1})\label{conditioning_cannot_increase_7}\\
&=
H(Z_1^{n_1})+H(Z_2^{n-n_1})-H(Z_1^{n_1}|X_1^{n_1})+H(Z_2^{n-n_1}|X_2^{n-n_1})\label{memoryless_channel_9}\\
&= I(X_1^{n_1};Z_1^{n_1})+I(X_2^{n-n_1};Z_2^{n-n_1})\\
&\leq n_1 I(X_1;Z_1)+(n-n_1)I(X_2;Z_2)+\gamma_{1,n}+\gamma_{2,n}
\label{memoryless_channel_10}
\end{align}
where (\ref{conditioning_cannot_increase_7}) is due to the fact
that conditioning cannot increase entropy,
(\ref{memoryless_channel_9}) follows from the Markov chain
\begin{align}
Z_1^{n_1}\rightarrow X_1^{n_1}\rightarrow U_1^{n_1}\rightarrow
U_2^{n-n_1}\rightarrow X_2^{n-n_1}\rightarrow Z_2^{n-n_1}
\end{align}
and (\ref{memoryless_channel_10}) can be shown using the technique
devised in~\cite{Wyner}. We bound the fourth term of
(\ref{equi_comp_sum_step1}). To this end, assume that the
eavesdropper tries to decode
$(U_1^{n_1},X_1^{n_1},U_2^{n-n_1},X_2^{n-n_1})$ given side
information $(W_0=w_0,W_1=w_1,W_2=w_2)$. Since the confusion
message rates are selected to ensure that (see
(\ref{sum_dummy_defns_1})-(\ref{sum_dummy_defns_4})) the
eavesdropper can decode them as long as side information is
available. Consequently, use of Fano's lemma yields
\begin{align}
H(U_1^{n_1},U_2^{n-n_1},X_1^{n_1},X_2^{n-n_1}|W_0,W_1,W_2,Z_1^{n_1},Z_2^{n-n_1})<\epsilon_n
\label{eavesdropper_can_decode}
\end{align}
Finally, plugging
(\ref{equi_comp_sum_step3}),(\ref{memoryless_channel_10}) and
(\ref{eavesdropper_can_decode}) into (\ref{equi_comp_sum_step1}),
we get
\begin{align}
H(W_0,W_1,W_2|Z_1^{n_1},Z_2^{n-n_1})\geq
n(R_0+R_1+R_2)-\epsilon_n-\gamma_{1,n}-\gamma_{2,n}
\end{align}
which completes the achievability part of the proof.

\subsection{Converse}

First, let us define the following auxiliary random variables,
\begin{align}
U_{1,i}&=W_0 W_2 Y_{12}^{n-n_1}Y_{11}^{i-1}Z_{1,i+1}^{n_1},\quad
i=1,\ldots,n_1 \label{aux_sum_1}\\
U_{2,i}&= W_0W_1 Y_{21}^{n_1}Y_{22}^{i-1}Z_{2,i+1}^{n-n_1},\qquad
i=1,\ldots,n-n_1  \label{aux_sum_2}
\end{align}
where we assume that first channel is used $n_1$ times. We again
define
\begin{align}
\alpha=\frac{n_1}{n}
\end{align}
We note that auxiliary random variables, $U_{1,i},U_{2,i}$ satisfy
the Markov chains
\begin{align}
U_{1,i}\rightarrow X_{1,i}\rightarrow
(Y_{11,i},Y_{21,i},Z_{1,i})\\
U_{2,i}\rightarrow X_{2,i}\rightarrow (Y_{21,i},Y_{22,i},Z_{2,i})
\end{align}
Similar to the converse of
Theorem~\ref{Theorem_product_wiretap_channel}, here again,
$U_{1,i}$ and $U_{2,i}$ can be arbitrarily correlated. However, at
the end of converse, it will be clear that selection of them as
independent would yield the same region. Start with the common
message rate,
\begin{align}
\lefteqn{H(W_0|Z_1^{n_1},Z_2^{n-n_1})}&\\
&\leq
I(W_0;Y_{11}^{n_1},Y_{12}^{n-n_1})-I(W_0;Z_{1}^{n_1},Z_{2}^{n-n_1})+\epsilon_n \label{Fano_7}\\
&=I(W_0;Y_{11}^{n_1},Y_{12}^{n-n_1}|Z_{1}^{n_1},Z_{2}^{n-n_1})+\epsilon_n
\label{degraded_3} \\
&=I(W_0;Y_{12}^{n-n_1}|Z_{1}^{n_1},Z_{2}^{n-n_1})+I(W_0;Y_{11}^{n_1}|Z_{1}^{n_1},Z_{2}^{n-n_1},Y_{12}^{n-n_1})+\epsilon_n
\\
&\leq
I(W_0,W_1;Y_{12}^{n-n_1}|Z_{1}^{n_1},Z_{2}^{n-n_1})+I(W_0,W_2;Y_{11}^{n_1}|Z_{1}^{n_1},Z_{2}^{n-n_1},Y_{12}^{n-n_1})+\epsilon_n
\label{converse_common_sum_step1}
\end{align}
where (\ref{Fano_7}) is due to Fano's lemma, (\ref{degraded_3}) is
due to the fact that the eavesdropper's channel is degraded with
respect to the first user's channel. Once we obtain
(\ref{converse_common_sum_step1}), using the analysis carried out
in the proof of Theorem~\ref{Theorem_product_wiretap_channel}, we
can obtain the following bounds.
\begin{align}
I(W_0,W_1;Y_{12}^{n-n_1}|Z_{1}^{n_1},Z_{2}^{n-n_1})&\leq
\sum_{i=1}^{n-n_1}I(U_{2,i};Y_{12,i}|Z_{2,i})
\label{converse_common_sum_step2}\\
I(W_0,W_2;Y_{11}^{n_1}|Z_{1}^{n_1},Z_{2}^{n-n_1},Y_{12}^{n-n_1})
&\leq \sum_{i=1}^{n_1}I(U_{1,i};Y_{11,i}|Z_{1,i})
\label{converse_common_sum_step3}
\end{align}
where (\ref{converse_common_sum_step2}) (resp.
(\ref{converse_common_sum_step3})) can be derived following the
lines from (\ref{converse_common_use_1_start}) (resp.
(\ref{converse_common_use_2_start})) to
(\ref{converse_product_common_step_11}) (resp.
(\ref{converse_product_common_step_12})). Thus, we have
\begin{align}
H(W_0|Z_1^{n_1},Z_2^{n-n_1})&\leq
\sum_{i=1}^{n-n_1}I(U_{2,i};Y_{12,i}|Z_{2,i}) +
 \sum_{i=1}^{n_1}I(U_{1,i};Y_{11,i}|Z_{1,i})+\epsilon_n
\end{align}
and similarly, we can get
\begin{align}
H(W_0|Z_1^{n_1},Z_2^{n-n_1})&\leq
\sum_{i=1}^{n-n_1}I(U_{2,i};Y_{22,i}|Z_{2,i}) +
 \sum_{i=1}^{n_1}I(U_{1,i};Y_{21,i}|Z_{1,i})+\epsilon_n
\end{align}
We now consider the sum of common and independent message rates,
\begin{align}
\lefteqn{H(W_0,W_1|Z_1^{n_1},Z_2^{n-n_1})}&\nonumber \\
&\leq
I(W_0,W_1;Y_{11}^{n_1},Y_{12}^{n-n_1})-I(W_0,W_1;Z_1^{n_1},Z_2^{n-n_1})+\epsilon_n
\label{Fano_8}\\
&=I(W_0,W_1;Y_{11}^{n_1},Y_{12}^{n-n_1}|Z_1^{n_1},Z_2^{n-n_1})+\epsilon_n
\label{degraded_4}\\
&=I(W_0,W_1;Y_{12}^{n-n_1}|Z_1^{n_1},Z_2^{n-n_1})+I(W_0,W_1;Y_{11}^{n_1}|Z_1^{n_1},Z_2^{n-n_1},Y_{12}^{n-n_1})+\epsilon_n
\label{converse_common_private_sum_step1}
\end{align}
where (\ref{Fano_8}) is due to Fano's lemma, (\ref{degraded_4})
follows from the fact that the eavesdropper's channel is degraded
with respect to the first user's channel. The first term of
(\ref{converse_common_private_sum_step1}) is already bounded in
(\ref{converse_common_sum_step2}). The second term can be bounded
as
\begin{align}
I(W_0,W_1;Y_{11}^{n_1}|Z_1^{n_1},Z_2^{n-n_1},Y_{12}^{n-n_1})&\leq
\sum_{i=1}^{n_1}I(X_{1,i};Y_{11,i}|Z_{1,i})
\label{converse_common_private_sum_step2}
\end{align}
which can be obtained following the lines from
(\ref{converse_common_use_3_start}) to
(\ref{converse_product_common_private_step_12}). Hence, plugging
(\ref{converse_common_sum_step2}) and
(\ref{converse_common_private_sum_step2}) into
(\ref{converse_common_private_sum_step1}), we get
\begin{align}
H(W_0,W_1|Z_1^{n_1},Z_2^{n-n_1})\leq
\sum_{i=1}^{n-n_1}I(U_{2,i};Y_{12,i}|Z_{2,i})+
\sum_{i=1}^{n_1}I(X_{1,i};Y_{11,i}|Z_{1,i})+\epsilon_n
\end{align}
Similarly, we can obtain
\begin{align}
H(W_0,W_2|Z_1^{n_1},Z_2^{n-n_1})\leq
\sum_{i=1}^{n-n_1}I(X_{2,i};Y_{22,i}|Z_{2,i})+
\sum_{i=1}^{n_1}I(U_{1,i};Y_{21,i}|Z_{1,i})+\epsilon_n
\end{align}
Finally, we derive the outer bounds for the sum secrecy rate,
\begin{align}
\lefteqn{H(W_0,W_1,W_2|Z_{1}^{n_1},Z_2^{n-n_1})\leq
I(W_0,W_1;Y_{11}^{n_1},Y_{12}^{n-n_1})+I(W_2;Y_{21}^{n_1},Y_{22}^{n-n_1}|W_0,W_1)}&\nonumber\\
&\quad -I(W_0,W_1,W_2;Z_1^{n_1},Z_2^{n-n_1})+\epsilon_n
\label{Fano_9}
\\
&=I(W_0,W_1;Y_{11}^{n_1},Y_{12}^{n-n_1}|Z_1^{n_1},Z_2^{n-n_1})+I(W_2;Y_{21}^{n_1},Y_{22}^{n-n_1}|W_0,W_1,Z_1^{n_1},Z_2^{n-n_1})+\epsilon_n
\label{degraded_5}\\
&= I(W_0,W_1;Y_{12}^{n-n_1}|Z_1^{n_1},Z_2^{n-n_1})
+I(W_0,W_1;Y_{11}^{n_1}|Z_1^{n_1},Z_2^{n-n_1},Y_{12}^{n-n_1})\nonumber
\\
&\quad +I(W_2;Y_{21}^{n_1}|W_0,W_1,Z_1^{n_1},Z_2^{n-n_1})
+I(W_2;Y_{22}^{n-n_1}|W_0,W_1,Z_1^{n_1},Z_2^{n-n_1},Y_{21}^{n_1})
+\epsilon_n\\
&=
I(W_0,W_1,Y_{21}^{n_1};Y_{12}^{n-n_1}|Z_1^{n_1},Z_2^{n-n_1})-I(Y_{21}^{n_1};Y_{12}^{n-n_1}|Z_1^{n_1},Z_2^{n-n_1},W_0,W_1)\nonumber\\
&\quad
+I(W_0,W_1;Y_{11}^{n_1}|Z_1^{n_1},Z_2^{n-n_1},Y_{12}^{n-n_1})
+I(W_2;Y_{21}^{n_1}|W_0,W_1,Z_1^{n_1},Z_2^{n-n_1}) \nonumber \\
&\quad
+I(W_2;Y_{22}^{n-n_1}|W_0,W_1,Z_1^{n_1},Z_2^{n-n_1},Y_{21}^{n_1})
+\epsilon_n\\
&=S_1-S_2+S_3+S_4+S_5+\epsilon_n
\label{converse_common_sum_rate_sum_step1}
\end{align}
where in (\ref{Fano_9}), we used Fano's lemma and
(\ref{degraded_5}) follows from the fact that the eavesdropper's
channel is degraded with respect to both users' channels. We can
again use the analysis carried out in the converse proof of
Theorem~\ref{Theorem_product_wiretap_channel} to bound
(\ref{converse_common_sum_rate_sum_step1}). For example, following
lines from (\ref{converse_common_use_4_start}) to
(\ref{converse_product_sum_step_15}), we can obtain
\begin{align}
S_4+S_3-S_2 \leq \sum_{i=1}^{n_1}I(X_{1,i};Y_{11,i}|Z_{1,i})
\label{converse_common_sum_rate_sum_step2}
\end{align}
Similarly, if we follow the analysis from
(\ref{converse_common_use_5_start}) to
(\ref{memoryless_channel_7}), we can get
\begin{align}
S_5\leq \sum_{i=1}^{n-n_1}I(X_{2,i};Y_{22,i}|U_{2,i},Z_{2,i})
\label{converse_common_sum_rate_sum_step3}
\end{align}
and if we follow the lines from
(\ref{converse_common_use_6_start}) to
(\ref{converse_product_sum_step_16}), we can get
\begin{align}
S_1&\leq \sum_{i=1}^{n-n_1}I(U_{2,i};Y_{12,i}|Z_{2,i})
\label{converse_common_sum_rate_sum_step4}
\end{align}
Thus, plugging (\ref{converse_common_sum_rate_sum_step2}),
(\ref{converse_common_sum_rate_sum_step3}) and
(\ref{converse_common_sum_rate_sum_step4}) into
(\ref{converse_common_sum_rate_sum_step1}), we get
\begin{align}
H(W_0,W_1,W_2|Z_1^{n_1},Z_2^{n-n_1})&\leq
\sum_{i=1}^{n_1}I(X_{1,i};Y_{11,i}|Z_{1,i})
+\sum_{i=1}^{n-n_1}I(U_{2,i};Y_{12,i}|Z_{2,i})\nonumber\\
&\quad
+\sum_{i=1}^{n-n_1}I(X_{2,i};Y_{22,i}|U_{2,i},Z_{2,i})+\epsilon_n
\end{align}
Similarly, it can be shown that
\begin{align}
H(W_0,W_1,W_2|Z_1^{n_1},Z_2^{n-n_1})& \leq
\sum_{i=1}^{n_1}I(U_{1,i},Y_{21,i}|Z_{2,i})+\sum_{i=1}^{n_1}I(X_{1,i};Y_{11,i}|U_{1,i},Z_{1,i})\nonumber\\
&\quad +\sum_{i=1}^{n-n_1}I(X_{2,i};Y_{22,i}|Z_{2,i})
\end{align}
So far, we derived outer bounds on the secrecy capacity region
which match the achievable region. Hence, to claim that this is
indeed the capacity region, we need to show that computing the
outer bounds over all distributions of the form
$p(u_1,x_1)p(u_2,x_2)$ yields the same region which we would
obtain by computing over all $p(u_1,u_2,x_1,x_2)$. Since all the
expressions involved in the outer bounds depend on either
$p(u_1,x_1)$ or $p(u_2,x_2)$ but not on the joint distribution
$p(u_1,u_2,x_1,x_2)$, this argument follows, establishing the
secrecy capacity region.

\bibliographystyle{unsrt}
\bibliography{IEEEabrv,references2}
\end{document}